\pgfplotsset{compat=newest} 
\DeclareMathOperator{\Tr}{Tr}
\DeclareMathOperator{\supp}{supp}
\DeclareMathOperator{\Cov}{Cov}
\DeclareMathOperator{\E}{\mathbf{E}}
\DeclareMathOperator{\Prob}{\mathbf{P}}
\newcommand{\ov}{\overline}
\newcommand{\ii}{\mathrm{i}}
\renewcommand{\C}{\mathbf{C}}
\newcommand{\C}{\mathbf{C}}
\newcommand{\un}{\underline}
\newcommand{\vx}{\bm{x}}
\newcommand{\R}{\mathbf{R}}
\newcommand{\N}{\mathbf{N}}
\newcommand{\Z}{\mathbf{Z}}
\newcommand{\cO}{\mathcal{O}}
\newcommand{\co}{{\scriptstyle\mathcal{O}}}
\newcommand{\dif}{\operatorname{d}\!{}}
\DeclarePairedDelimiter{\braket}{\langle}{\rangle}%
\DeclarePairedDelimiter{\abs}{\lvert}{\rvert}%
\DeclarePairedDelimiter{\norm}{\lVert}{\rVert}%
\providecommand\given{}
\newcommand\SetSymbol[1][]{\nonscript\:#1\vert\allowbreak\nonscript\:\mathopen{}}
\DeclarePairedDelimiterX{\tuple}[1](){\renewcommand\given{\SetSymbol[\delimsize]}#1}
\DeclarePairedDelimiterX{\set}[1]\{\}{\renewcommand\given{\SetSymbol[\delimsize]}#1}
\DeclarePairedDelimiterXPP{\landauO}[1]{\cO}(){}{#1}
\DeclarePairedDelimiterXPP{\landauo}[1]{\co}(){}{#1}
\DeclarePairedDelimiterXPP{\landauOprec}[1]{\cO_\prec}(){}{#1}
\numberwithin{equation}{section} 
\newtheorem{theorem}{Theorem}[section]
\newtheorem{assumption}{Assumption}
\newtheorem{lemma}[theorem]{Lemma}
\newtheorem{proposition}[theorem]{Proposition}
\newtheorem{remark}[theorem]{Remark}
\newtheorem{corollary}[theorem]{Corollary}
\date{\today}
\author{Giorgio Cipolloni \and L\'aszl\'o Erd\H{o}s}
\address{IST Austria, Am Campus 1, 3400 Klosterneuburg, Austria}
\author{Dominik Schr\"oder\(^{\ast}\)}
\address{Institute for Theoretical Studies, ETH Zurich, Clausiusstr.\ 47, 8092 Zurich, Switzerland}
\email{giorgio.cipolloni@ist.ac.at} 
\email{lerdos@ist.ac.at}
\email{dschroeder@ethz.ch}
\thanks{\(^\ast\)Supported by Dr.\ Max R\"ossler, the Walter Haefner Foundation and the ETH Z\"urich Foundation}
\subjclass[2010]{60B20, 15B52} 
\keywords{Global Law, Local Law, Random Matrices, Dyson Brownian motion}
\title{Quenched universality for deformed Wigner matrices}
\date{\today}
\begin{document}
\thispagestyle{empty}

\begin{abstract}
   Following E.\ Wigner's original vision, we prove that sampling  the eigenvalue gaps
   within the bulk spectrum of  a \emph{fixed} (deformed) Wigner matrix $H$ yields the celebrated Wigner-Dyson-Mehta
   universal statistics with high probability.
   Similarly, we prove universality for
   a monoparametric family of deformed Wigner matrices $H+xA$ with
   a deterministic Hermitian matrix $A$ and a fixed Wigner matrix \(H\), just using the randomness of a single scalar real random variable $x$.
   Both results constitute \emph{quenched} versions
   of bulk universality that has so far only been proven in \emph{annealed} sense
   with respect to the probability space of the matrix ensemble.
\end{abstract}
\maketitle

\section{Introduction}
Random matrix theory in physics was originally envisioned  by  E.\ Wigner
to predict statistics of gaps between the energy levels of heavy atomic nuclei. The
underlying physical systems have no inherent disorder and the statistical ensemble
in Wigner's description was generated by randomly (uniformly) sampling from the experimentally
measured gaps of a fixed nucleus
within a large energy range. The model ensemble, the space of Hermitian random matrices
with independent, identically distributed entries (Wigner ensemble), is however inherently random.
Accepting the replacement of the original physical Hamiltonian with a Hermitian random matrix, one may ask
whether uniform sampling within the  spectrum  of a \emph{fixed, typical  realisation} of a Wigner matrix also gives rise to the celebrated Wigner-Dyson-Mehta  (WDM) universality. In this paper  we affirmatively answer this question, in the sense that for any fixed typical Wigner matrix the empirical gap statistic is close to the Wigner surmise, see Figure~\ref{figure quenched}. We thus prove a stronger version of WDM universality
and confirm the applicability of Wigner's theory  even in the  \emph{quenched sense}.
All previous universality proofs, see e.g.~\cite{MR1727234,1712.03881,MR3405746, MR3687212, MR3502606, MR3541852, MR4134946, MR4026551, MR1810949,MR2810797,MR2784665,MR3699468,MR3372074} (see also~\cite{MR3351052,MR3253704,MR3192527,MR3390602,MR2306224,MR2012268} for invariant ensembles),
were valid in the \emph{annealed sense}, i.e.\ where the eigenvalue statistics were directly generated
by the randomness of the matrix ensemble.

\begin{figure}
   \centering
   \[\set[\Big]{100\rho(\lambda_{50}(H))[\lambda_{51}(H)-\lambda_{50}(H)] \given H\sim \mathrm{Wig}_{100}}\]
   \begin{tikzpicture}
      \begin{axis}[width=12cm,height=2.5cm,
            ymin=0,xmax=3,ymax=1.2, ytick={0,0.5,1}, xtick={0,1,2,3},
            axis lines=center,
         ]
         \addplot[ybar,bar width=.075,fill=black!10,bar shift=0.0] table[col sep=comma,x index=0,y index=1] {quenched.csv};
         \addplot [thick, domain=0:3, samples=201] {32*x^2/pi^2*exp(-4*x^2/pi)};
      \end{axis}
   \end{tikzpicture}
   \[\set[\Big]{N\rho(\lambda_{k}(H))[\lambda_{k+1}(H)-\lambda_{k}(H)] \given k=\frac{N}{10}, \frac{N}{10}+1,
         \ldots, \frac{9N}{10} }\]
   \begin{tikzpicture}
      \begin{axis}[width=12cm,height=2.5cm,
            ymin=0,xmax=3,ymax=1.2, ytick={0,0.5,1}, xtick={0,1,2,3},
            axis lines=center,
         ]
         \addplot[ybar,bar width=.075,fill=black!40,bar shift=0.0] table[col sep=comma,x index=0,y index=2] {quenched.csv};
         \addplot [thick, domain=0:3, samples=201] {32*x^2/pi^2*exp(-4*x^2/pi)};
      \end{axis}
   \end{tikzpicture}
   \caption{The two types of universality: The first histogram shows the normalised gaps of the two middle eigenvalues in the spectrum of \(5000\) complex Wigner matrices of size \(100\times 100\). The second histogram shows the empirical normalised bulk eigenvalue gaps of a single complex Wigner matrix of size \(5000\times 5000\). Both distributions asymptotically approach the \emph{Gaudin-Mehta distribution} \(p_2\) drawn as solid lines, see Section~\ref{sec painleve}.}\label{figure quenched}
\end{figure}
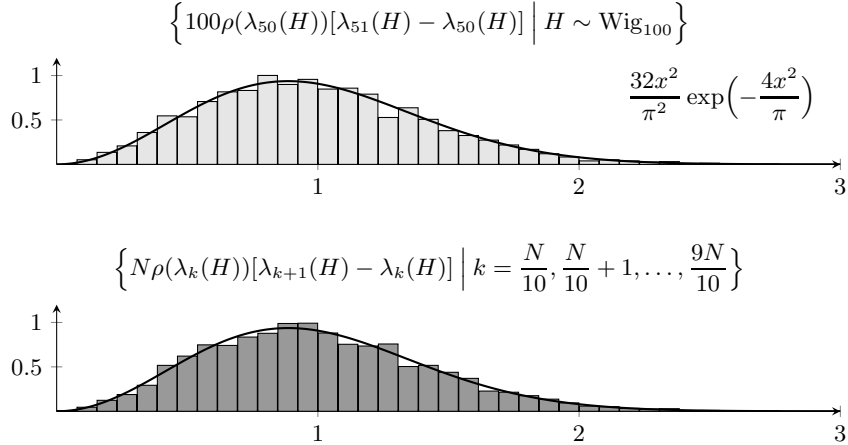

More generally, we consider random matrices of the form \(H^x:=H+xA\), where
\(H\) is a large  \(N\times N\)  Wigner matrix, \(A\) is a fixed nontrivial  Hermitian deterministic matrix,
and \(x\) is a real random variable
(in fact we can even consider more general \emph{deformed Wigner matrices \(H\)}).
We show that  for a typical  but fixed (quenched) \(H\) the randomness of \(x\) alone is
sufficient to generate WDM universality in the bulk of the limiting spectrum of \(H^x\), i.e.\ we prove that the local statistics of \(H+xA\) are universal for \emph{all fixed}  \(H\) in a high probability set.  The special case \(A=I\) and
\(x\) being uniformly distributed on some small interval yields Wigner's spectral sampling model.
Another special case covered by our general result is when \(A\) itself is chosen from a Wigner ensemble. The corresponding \(H+xA\) model for the Gaussian case was introduced by
H. Gharibyan, C. Pattison, S. Shenker\footnote{Private communication  via Stephen Shenker and Sourav Chatterjee in June 2020.} and  K. Wells who coined it as
the  \emph{monoparametric ensemble}~\cite{GPSW}.

The basic guiding principle for establishing quenched universality of \(H^x\) is to show that
near a fixed energy \(E\) the eigenvalues of \(H^x\) and \(H^{x'}\)
are essentially uncorrelated whenever \(x\) and \(x'\) are not too close. This
provides the sufficient (asymptotic) independence along the sampling in the space of \(x\).
Following a similar idea in~\cite{1912.04100} for a different setup,  the
independence of eigenvalues is proven by running the Dyson Brownian motion (DBM)
for the matrix \(H\). The corresponding stochastic differential equations for the
eigenvalues of \(H^x\) and \(H^{x'}\) have almost independent stochastic differentials
if the corresponding eigenvectors are asymptotically orthogonal. Therefore,
independence of eigenvalues can be achieved by running the DBM already after
a short time, provided we can understand eigenvector overlaps. The small
Gaussian component added along the DBM flow can later be removed by fairly
standard perturbation argument \emph{(Green function comparison theorem)}.

Thus the main task is to show that eigenvectors of \(H^x\) become asymptotically orthogonal for different, sufficiently distant values of \(x\).
This orthogonality can be triggered by two quite different mechanisms
that we now explain.

The first mechanism is present
when \(A\) is not too close to a diagonal matrix, in other words if \(\mathring{A}:= A- \braket{A}\)
is nontrivial in the sense that
\(\braket{\mathring{A}^2}\ge c\) with some \(N\)-independent constant \(c>0\).
Here \(\braket{A}: =\frac{1}{N}\Tr A\) denotes the normalized trace.
In this case the entire eigenbasis of \(H^x\) is \emph{rotated}, i.e.\ it becomes essentially  orthogonal  to
that of \(H^{x'}\) whenever \(x\) and \(x'\) are not too close. As a consequence, the entire spectra of \(H^x\) and \(H^{x'}\) are essentially uncorrelated.
To establish this effect of \emph{eigenbasis rotation},
we use a \emph{multi-resolvent local law} for the resolvents of \(H^x\) and \(H^{x'}\);
this method currently requires \(|x-x'|\ge N^{-\epsilon}\) for typical choices of \(x, x'\).
To  ensure this, we assume that  \(x=N^{-a}\chi \) where \(\chi\) is an \(N\)-independent real random variable
with some regularity and \(a\in[0,\epsilon]\). %

The  second mechanism is the most transparent when \(A=I\) and
\(x=N^{-a}\chi \) where \(\chi\)
is uniformly distributed on some small fixed interval;
we call this mechanism the \emph{sampling in the spectrum}.  In this case the eigenbasis of \(H^x\) actually does not depend on \(x\). However, the eigenvectors corresponding to eigenvalues close to a fixed energy are algebraically orthogonal for sufficiently distant \(x,x'\). We also prove that distant eigenvalue gaps of \(H\),
and hence the \emph{local spectral data}
of \(H^x,H^{x'}\) are essentially uncorrelated.

By the \emph{rigidity} property of the eigenvalues, already a small change in \(x\)  triggers this effect, so it works in
the entire range of scales \(a\in [0, 1-\epsilon]\). Moreover,  the proof can easily be extended to
more complicated random matrix ensembles well beyond the Wigner case.
No multi-resolvent local law is needed in the proof.

A combination of these two mechanisms can  be used in the situation when
\(A\ne I\), but \(A\) is still close to \(\braket{A}\) times the identity in
the sense that \(|\braket{A}|\ge C\braket{\mathring{A}^2}^{1/2}\) for some large \(C\). This extension   complements
the main condition \(\braket{\mathring{A}^2}\ge c\) needed in the first mechanism
thus proving the result unconditionally for any \(A\).

\subsection*{Notations and conventions}
We introduce some notations we use throughout the paper. For integers \(k\in\N \) we use the notation \([k]:= \set{1,\ldots, k}\). For positive quantities \(f,g\) we write \(f\lesssim g\) and \(f\sim g\) if \(f \le C g\) or \(c g\le f\le Cg\), respectively, for some constants \(c,C>0\) which depend only on the \emph{model parameters} appearing in our base Assumptions~\ref{ass:x}--\ref{ass:entr}. For any two positive real numbers $\omega_*, \omega^*\in\mathbf{R}_+$ by $\omega_*\ll \omega^*$ we denote that $\omega_*\le c\omega^*$ for some small constant $0<c<1/100$. We denote vectors by bold-faced lower case Roman letters \({\bm x}, {\bm y}\in\C ^k\), for some \(k\in\N\). Vector and matrix norms, \(\norm{\vx}\) and \(\norm{A}\), indicate the usual Euclidean norm and the corresponding induced matrix norm.  For vectors \({\bm x}, {\bm y}\in\C^k\) we define
\[
   \braket{ {\bm x},{\bm y}}:= \sum_i \overline{x}_i y_i
\]
and for any \(N\times N\) matrix \(A\) we use the notation \(\braket{ A}:= N^{-1}\Tr  A\) to denote the normalized trace
of \(A\). We will use the concept of ``with very high probability'' meaning that for any fixed \(D>0\) the probability of an \(N\)-dependent event is bigger than \(1-N^{-D}\) if \(N\ge N_0(D)\). Moreover, we use the convention that \(\xi>0\) denotes an arbitrary small constant which is independent of \(N\).

\subsection*{Acknowledgement.}  The  authors are indebted to Sourav Chatterjee for forwarding  the very inspiring
question that Stephen Shenker originally addressed to him which  initiated the current paper.
They are also grateful  that
the authors of~\cite{GPSW} kindly shared  their  preliminary numerical results in June 2021.

\subsection*{Data availability.}
All data generated or analysed during this study are included in this manuscript.

\section{Main results}
In this paper we consider real and complex \emph{Wigner matrices}, i.e.\ Hermitian \(N\times N\) random matrices \(H=H^\ast\) with independent identically distributed (i.i.d.)\ entries (up to Hermitian symmetry)
\begin{equation}\label{eq wigner mat}
   h_{ab}\stackrel{d}{=} N^{-1/2}\begin{cases}
      \chi_\mathrm{od}, & a<b, \\
      \chi_\mathrm{d},  & a=b,
   \end{cases} \qquad h_{ba}:=\ov{h_{ab}}
\end{equation}
having finite moments of all orders, i.e. \(\E\abs{\chi_\mathrm{od}}^p+\E\abs{\chi_\mathrm{d}}^p\le C_p\). The entries are normalised such that \(\E\abs{\chi_\mathrm{od}}^2=1\), and additionally \(\E\chi_\mathrm{od}^2=0\) in the complex case. The normalisation guarantees that the eigenvalues \(\lambda_1\le \lambda_2\le\ldots \le\lambda_N\)
of \(H\) asymptotically follow Wigner's semicircular distribution \(\rho_\mathrm{sc}(x):=\sqrt{(4-x^2)_+}/(2\pi)\).
In the \emph{bulk regime}, i.e.\ where  \(\rho_\mathrm{sc}\ge c\) for some \(c>0\), the
eigenvalue gaps are of order; \(\lambda_{i+1}-\lambda_i\sim 1/N\).

The \emph{Wigner-Dyson-Mehta conjecture} for the bulk of Wigner matrices \(H\) asserts that for any \(i\in[\epsilon N,(1-\epsilon)N]\) the distribution of the rescaled eigenvalue gap %
converges%
\begin{equation}\label{wdm gap}
   \lim_{N\to\infty}\Prob\Bigl(N \rho_\mathrm{sc}(\lambda_i)[\lambda_{i+1}-\lambda_i]\le y\Bigr) = \int_0^y p_\beta(t)\dif t
\end{equation}
to a universal distribution with density \(p_1\) (for real symmetric Wigner matrices) or \(p_2\) (for complex Hermitian Wigner matrices) which can be computed explicitly from the integrable Gaussian GOE/GUE ensembles, see~Section~\ref{sec painleve} later.
This WDM conjecture was resolved in~\cite{MR3372074} while
similar results with a small averaging in the index \(i\) were proven earlier~\cite{MR2810797, MR2784665}.

As a corollary to our main result Theorem~\ref{thm2b} below on the \emph{monoparametric ensemble} we prove a considerable strengthening of~\eqref{wdm gap}, namely that with high probability the \emph{sampling}
of eigenvalues within a single fixed Wigner matrix generates WDM universality.
\begin{corollary}[to Theorem~\ref{thm2b}]\label{thm2}
   Let \(H\) be a Wigner matrix and \(I\subset (-2+\epsilon,2-\epsilon)\) be an interval in the bulk of \(H\) of length \(\abs{I}\ge N^{-1+\xi}\) for some \(\epsilon,\xi>0\). Then there exist small \(\kappa, \alpha>0\) and an event \(\Omega_I\) in the probability space \(\Prob_H\) of \(H\) with \(\Prob_H(\Omega_I^c)\le N^{-\kappa}\), such that for all \(H\in \Omega_I\) it holds that
   \begin{equation}\label{univ2}
      \sup_{y\ge 0}\abs*{
         \frac{1}{N\int_I\rho_\mathrm{sc} }\#\set*{i\given N\rho_\mathrm{sc}(\lambda_i)[\lambda_{i+1}-\lambda_i]\le y,\, \lambda_i\in  I}-\int_0^y p_\beta(t)\dif t}= \landauO{N^{-\alpha}},
   \end{equation}
   where the implicit constant in~\eqref{univ2} and \(\kappa,\alpha\) depend on \(\epsilon, \xi\).
\end{corollary}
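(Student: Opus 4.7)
I would derive Corollary~\ref{thm2} from Theorem~\ref{thm2b} by specialising it to $A = I$, where the ``sampling in the spectrum'' mechanism applies most directly. With this choice the spectrum of $H^x := H + xI$ is the rigid translate $\Spec(H) + x$, so all consecutive gaps of $H^x$ coincide with those of $H$ and running $x$ over an interval sweeps the local empirical gap distribution across a window of $\Spec(H)$. The randomness of $x$ therefore does not perturb any gap, it only selects which portion of $\Spec(H)$ is centred at a prescribed bulk energy, and Theorem~\ref{thm2b} in this regime is essentially the assertion that this family of local gap distributions concentrates around the Wigner surmise.

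Concretely, fix $E$ at the midpoint of $I$, pick $a\in(0,1-\epsilon)$ with $N^{-a}\asymp |I|$ (admissible since $|I|\ge N^{-1+\xi}$), and let $\chi$ be uniform on a centred interval so that $x:=N^{-a}\chi$ is uniform on $J:=[-|I|/2,|I|/2]$. Theorem~\ref{thm2b} applied at energy $E$ produces an event $\Omega_I\subset \Prob_H$ with $\Prob_H(\Omega_I^c)\le N^{-\kappa_0}$ on which the local empirical gap CDF of $H^x$ at energy $E$, averaged over $\chi$, is within $N^{-\alpha_0}$ of $\int_0^y p_\beta(t)\dif t$ uniformly in $y$. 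Under the shift identity $\Spec(H^x) = \Spec(H) + x$, this $\chi$-average equals, up to an $\landauO{|I|}$ boundary correction at the endpoints of $E+J$, the index-uniform empirical gap CDF of $H$ over $E + J\supset I$; the semicircle approximation $|I|\rho_\mathrm{sc}(E) = \int_I\rho_\mathrm{sc}\cdot(1+\landauO{|I|})$ then adjusts the normalisation to the $N\int_I\rho_\mathrm{sc}$ appearing in~\eqref{univ2}. Uniformity in $y\ge 0$ is standard via an $N^C$-dense grid and monotonicity of CDFs.

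The main obstacle is the conversion between the $\chi$-induced local empirical distribution produced by Theorem~\ref{thm2b} and the index-uniform empirical count in~\eqref{univ2}. If Theorem~\ref{thm2b} is stated for smooth functionals of the rescaled local gap measure (which is the natural formulation), the conversion is essentially deterministic Fubini combined with the identity $\Prob_\chi(k(x,H)=k) = (\lambda_{k+1}(H)-\lambda_k(H))/|I|$ valid on the rigidity event, where $k(x,H) := \max\set{k\given \lambda_k(H)\le E - x}$. If instead it is stated only for the single random gap at $k(x,H)$, then the natural $\chi$-law is gap-size biased, and a short de-biasing step is needed to restore the index-uniform weights; this can be carried out using the Wigner surmise's first-moment identity $\int_0^\infty t p_\beta(t)\dif t = 1$ together with a grid in $y$. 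In either case, all remaining ingredients---rigidity of eigenvalues, the local semicircle law, and the union bound absorbing $N^{\landauO{1}}$ factors into $\kappa_0$---are routine.
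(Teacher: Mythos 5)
Your high-level strategy is exactly the paper's: specialise Theorem~\ref{thm2b} to $A=I$ (Case~1), use the shift identity $\Spec(H^x)=\Spec(H)+x$ so the gaps do not depend on $x$, take $f(u)=\bm1(0\le u\le y)$ (via a standard smoothing/grid approximation to handle the $\sup_y$), and convert the $\chi$-average into the index-uniform empirical CDF over $I$. Where you diverge from the paper is the last conversion step, and there your proposal has a genuine gap.

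You take $\chi$ uniform, so $x$ is uniform on some $J\asymp I$. Then the law of the selected index in Theorem~\ref{thm2b} is not index-uniform: since $i_0(x,E)$ is defined via the \emph{quantiles} $\gamma_k^x$ (not eigenvalues), one has
$\Prob_\chi\bigl(i_0(x,E)=k\bigr)\propto\gamma_k-\gamma_{k-1}\approx 1/(N\rho_{\mathrm{sc}}(\gamma_k))$, a deterministic quantile-gap weight. Your proposal instead writes the selection probability in terms of eigenvalue gaps via $k(x,H):=\max\{k:\lambda_k\le E-x\}$, conflating the deterministic quantile bias with a random eigenvalue-gap bias; and the suggested ``de-biasing via $\int_0^\infty tp_\beta(t)\dif t=1$'' is not the right tool here --- the bias factor $\rho_{\mathrm{sc}}(\gamma_k)$ is deterministic and has nothing to do with the first moment of the Wigner surmise. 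For $|I|\sim 1$ this factor varies by $\Theta(1)$ across $I$, so it is not a negligible boundary effect: uncorrected, it shifts the empirical CDF by an $O(1)$ amount and~\eqref{univ2} would fail.

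The paper sidesteps the issue entirely by choosing $x$ so that $-x$ has density proportional to $\rho_{\mathrm{sc}}\vert_I$ (rather than uniform). Then, since $\int_{\gamma_{k-1}}^{\gamma_k}\rho_{\mathrm{sc}}=1/N$ by the very definition of the quantiles, one gets
\[
\Prob_x\bigl(i_0(0,-x)=k\bigr)=\Bigl(\int_I\rho_{\mathrm{sc}}\Bigr)^{-1}\int_{I\cap(\gamma_{k-1},\gamma_k]}\rho_{\mathrm{sc}}(t)\dif t=\frac{1}{N\int_I\rho_{\mathrm{sc}}}+\text{boundary terms},
\]
which is exactly the index-uniform weight \emph{and} already carries the normalisation $N\int_I\rho_{\mathrm{sc}}$ appearing in~\eqref{univ2}. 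Rigidity then replaces $\gamma_k$ by $\lambda_k$ and $\gamma_k\in I$ by $\lambda_k\in I$ at an $\landauO{N^{-1+\xi}|I|^{-1}}$ cost, and the conversion is a one-line Fubini with no de-biasing needed. I would replace your uniform $\chi$ with this $\rho_{\mathrm{sc}}$-weighted choice; the rest of your outline then goes through as written.
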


Our main results are on the \emph{quenched (bulk) universality} of \emph{monoparametric random matrices}
\begin{equation}
   H^x := H + x A
\end{equation}
for general deterministic Hermitian matrices \(A\) of the same symmetry class\footnote{This restriction apparently excludes
   the case when \(A\) is complex Hermitian but \(H\) is real symmetric. With a slight modification of our proof (similar to the modification required in~\cite[Section 7]{MR4235475} compared to~\cite[Section 7]{1912.04100}), however, we can handle this
   case as well, but for brevity we refrain from presenting it.} as \(H\), and independent scalar random variables \(x\), just using the randomness of \(x\) for any fixed Wigner matrix \(H\) from a high probability set. For \(A=I\) the monoparametric universality of \(H^x\) implies the spectral sampling universality as stated in Corollary~\ref{thm2}, see Section~\ref{appendix prop}. Our results extend beyond Wigner matrices, we also allow for arbitrary additive deformations (certain results even extend to Wigner matrices with correlated entries), and cover general sufficiently regularly distributed scalar random variables \(x\).

\begin{assumption}[Deformed Wigner matrix]\label{ass:entr}
   We consider deformed Wigner matrices of the form \(H=W+B\), where \(W\) is a Wigner matrix as in~\eqref{eq wigner mat}, and \(B=B^\ast\) is an arbitrary deterministic matrix of bounded norm, i.e.\ \(\norm{B}\le C_0\) for some \(N\)-independent constant \(C_0\).

\end{assumption}
\begin{assumption}\label{ass:x}
   Assume that
   \(x=N^{-a}\chi\) with \(a\in [0,1)\), where \(\chi\) is an \(N\)-independent compactly supported real random variable such that for any small \(b_1>0\) there exists \(b_2>0\) such that for any interval \(I\subset\R\) with \(|I|\sim N^{-b_1}\) it holds \(\Prob(\chi\in I)\le |I|^{b_2}\).
\end{assumption}

To state the result, we now introduce the self-consistent density of states of \(H^x=W+B+xA\).
It has been proven in~\cite[Theorem 2.7]{MR3916109} that the resolvent \(G^x(z)= (H^x-z)^{-1}\) of \(H^x\) at a spectral parameter \(z\in \C\setminus\R\) can be well approximated
by the unique deterministic matrix \(M=M^x(z)\), solving the
\emph{Matrix Dyson Equation} (MDE)~\cite{MR4164728} (see also~\cite{MR2376207})
\begin{equation}\label{MDE}
   -M^{-1}=z-B - xA+\braket{M}, \qquad \Im M(z) \Im z> 0.
\end{equation}
We define the \emph{self consistent density of states} (scDos)~\cite[Section 4.1]{MR4164728} of \(H^x\) as
\begin{equation}\label{scdos}
   \rho^x(E):=\lim_{\eta\to0^+}\frac{1}{\pi}\braket{\Im M^x(E+\ii\eta)},
\end{equation}
and, in particular, the scDos of \(H\) by \(\rho:=\rho^0\). It is well known that \(\rho^x\) is a probability density which is compactly supported and real analytic inside its support~\cite[Proposition~2.3]{MR4164728}.
For the special case \(\E H=B=0\) the scDos of \(H\) is the standard Wigner semicircle law, i.e.\ \(\rho=\rho_\mathrm{sc}\).

We say that an energy \(E\in \R\) lies in the \emph{bulk of the spectrum} of \(H^x\) if \(\rho^x(E)\ge c\) for some \(N\)-independent constant \(c>0\).
For \(E\) in the bulk, the solution \(M^x(z)\) can be continuously extended to the real line, \(M^x(E):=\lim_{\eta\to 0^+}M^x(E+\ii\eta)\), and \(M^x(E+\ii\eta)\) for \(E\) in the bulk is \emph{uniformly bounded}, cf.~\cite[Proposition 3.5]{MR4164728}.
Finally, we define the \emph{classical eigenvalue locations}  to  be the \emph{quantiles} of \(\rho^x\), i.e.\ we define \(\gamma_i^x\) by
\begin{equation}\label{eq:defquantl}
   \int_{-\infty}^{\gamma_i^x} \rho^x(\tau)\dif \tau=\frac{i}{N},\qquad i\in [N].
\end{equation}

For clarity, in this section we only present single-gap versions of both mechanisms explained
in the introduction that yield quenched universality. Subsequently we will present the multi-gap analogues in Section~\ref{sec multi gap}.%
\subsection{Monoparametric universality via eigenbasis rotation}\label{mono univ rot}
The main universality result for the first mechanism (\emph{eigenbasis rotation}) is the following quenched
fixed-index universality result for the \emph{monoparametric ensemble}. We denote the probability measure and expectation of \(x\) by \(\Prob_x,\E_x\) in order to differentiate it from the probability measure \(\Prob_H\) of \(H\).

\begin{theorem}[Quenched universality for monoparametric ensemble]\label{thm1}
   Let \(H\) be a deformed Wigner matrix satisfying Assumption~\ref{ass:entr}, and let \(x=N^{-a}\chi\) be a scalar real random variable satisfying Assumption~\ref{ass:x} with \(a\in [0,a_0]\), where \(a_0\) is a small universal constant\footnote{Following the explicit constants along the proof,
      one may choose \(a_0=1/100\)}. Fix any \(c_0, c_1>0\)  small constants and assume that \(\braket{\mathring{A}^2}\ge c_0\), with \(\mathring{A}:=A-\braket{A}\). Suppose that \(i\in [N]\) is a \emph{bulk index}\footnote{To specify the \(c_1\)-dependence,
      we often speak of \(c_1\)-bulk index.}
   for \(H^x=H+xA\),  i.e.\ it holds that
   \begin{equation}\label{scond}
      \rho^x(\gamma^x_i)\ge c_1 \quad\text{for \(\Prob_x\)-almost all \(x\).}
   \end{equation}
   Then there exist small
   \(\alpha, \kappa>0\) and an event \(\Omega_i=\Omega_{i, A}\) with \(\Prob_H(\Omega_i^c)\le N^{-\kappa}\),
   so that for all \(H\in \Omega_i\) the  statistics of the  \(i\)-th rescaled gap of the eigenvalues \(\lambda_i^x\) of \(H^x\)  is universal, i.e.
   \begin{equation}\label{univ1}
      \abs*{\E_x f\Big(  N \rho^x(\lambda^x_{i})[ \lambda^x_{i+1}-  \lambda^x_{i}] \Big) - \int_0^\infty  p_\beta(t)
      f(t)\dif t}=\mathcal{O}(N^{-\alpha}\norm{f}_{C^5})
   \end{equation}
   for any smooth, compactly supported function \(f\) where the implicit constant in~\eqref{univ1} depends on \(a_0,c_0,c_1\) and the diameter of $\supp f$, and \(\kappa, \alpha\) depend on \(a_0\).
\end{theorem}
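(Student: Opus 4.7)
Set \(G^x := N\rho^x(\lambda_i^x)[\lambda_{i+1}^x - \lambda_i^x]\). The plan is to reformulate the quenched statement as a decorrelation estimate along the parameter \(x\): by Chebyshev, it suffices to show that
\[
\E_H\Bigl|\E_x f(G^x) - \int_0^\infty f(t) p_\beta(t)\,\dif t\Bigr|^2 \le N^{-\kappa'}
\]
for some \(\kappa'>0\). Expanding the square reduces the task to a uniform bound on the covariance \(\Cov_H(f(G^x), f(G^{x'}))\) averaged over \((x,x')\). Two ingredients then suffice: (i) annealed fixed-index bulk universality for each marginal \(G^x\), which for the deformed Wigner ensemble \(H^x\) is available via standard DBM+GFT technology under Assumption~\ref{ass:entr} and the bulk hypothesis~\eqref{scond}; and (ii) \emph{quantitative decorrelation} \(\abs{\Cov_H(f(G^x),f(G^{x'}))}\le N^{-\alpha}\) whenever \(\abs{x-x'}\ge N^{-\gamma_0}\), for some fixed \(\gamma_0>a_0\). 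The residual set \(\{\abs{x-x'}<N^{-\gamma_0}\}\) has measure \(\landauO{N^{-b_2(\gamma_0-a)}}\) by Assumption~\ref{ass:x} and contributes negligibly.

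For (ii) I would run a coupled Dyson Brownian motion on \(H_t := e^{-t/2}H + \sqrt{1-e^{-t}}\,U\) with an independent Gaussian divisor \(U\) of matching symmetry class and short time \(t = N^{-1+\omega}\), and track the eigenvalue processes of \(H_t^x := H_t + xA\) and \(H_t^{x'}\) simultaneously. Their martingale parts \(B_{ii}^x/\sqrt{N},\,B_{jj}^{x'}/\sqrt{N}\) satisfy
\[
\partial_t\,\langle B_{ii}^x, B_{jj}^{x'}\rangle_t = 2\,\abs{\braket{u_i^x(t), u_j^{x'}(t)}}^2,
\]
which vanishes precisely to the extent that the instantaneous eigenvectors are orthogonal. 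If these overlaps are \(\ll 1/N\) in an averaged sense throughout \([0,t]\), the two processes may be replaced, up to a vanishing error, by DBMs driven by \emph{independent} Brownian motions; single-gap universality for each spectrum (in the spirit of~\cite{MR3372074}) then lifts to the joint product form of the rescaled gap law, giving (ii).

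The crux is the bound on the eigenvector overlaps, which is a \emph{two-resolvent local law}. Via spectral decomposition, the overlap squared is controlled by
\[
\frac{1}{(N\eta)^2}\,\Im\,\braket{G^x(E+\ii\eta)\,G^{x'}(E'+\ii\eta)},
\]
so the central task is to prove that the deterministic approximation of \(\braket{G^x G^{x'}}\) has size \(1/\bigl(\abs{x-x'}^2 \braket{\mathring A^2}\bigr)\) rather than order one. The MDE analysis identifies the governing \emph{stability operator} as \(1 - M^x\mathcal{S}[\,\cdot\,]M^{x'}\); first-order perturbation gives \(M^{x'} - M^x = (x'-x)\,M^x\mathring A\,M^x + \landauO{\abs{x-x'}^2}\), and taking normalised traces produces exactly the factor \(\abs{x-x'}^2\braket{\mathring A^2}\) in the denominator. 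The hypothesis \(\braket{\mathring A^2}\ge c_0\) enters here precisely to guarantee non-degeneracy of this factor — this is the structural reason the rotational mechanism requires \(A\) not to be close to a multiple of the identity. Establishing this deterministic computation at the level of the stability operator, and propagating it through a sufficiently sharp stochastic estimate on the optimal scale \(\eta\sim 1/N\), is the main obstacle.

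Finally, a standard \emph{Green function comparison theorem} transfers the bulk gap statistics from the Gaussian-divisor model \(H_t^x\) back to \(H^x\); with \(t = N^{-1+\omega}\) and \(\omega>0\) small this step is routine. The universal constant \(a_0\) then arises from balancing \(\omega\), the admissible exponent \(\gamma_0\) delivered by the two-resolvent local law, and the regularity exponent \(b_2\) of Assumption~\ref{ass:x}; any \(a_0<\gamma_0\) will do.
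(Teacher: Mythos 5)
Your proposal follows essentially the same strategy as the paper: a second-moment (Chebyshev) reduction in $\Prob_H$, a covariance bound obtained by coupling the DBMs for $H_t^x$ and $H_t^{x'}$ via their joint martingale structure, an eigenvector-overlap bound proved through a two-resolvent local law for $\braket{G^x G^{x'}}$, and a GFT step to remove the Gaussian component. Two small inaccuracies worth noting: (i) the first-order perturbation of the MDE solution is $M^{x'}-M^x = (x'-x)\,\bigl(1-M^x\braket{\cdot}M^x\bigr)^{-1}[M^x A M^x] + \landauO{\abs{x-x'}^2}$; the stability-operator inverse is essential, and the trace $\braket{\mathring A^2}$ emerges only after decomposing $A=\braket{A}+\mathring A$ inside $\braket{(M^{x}-M^{x'})(M^{x}-M^{x'})^*}$, not from a formula in $\mathring A$ alone. (ii) The overlap bound one actually needs is merely $\abs{\braket{u_i^x,u_j^{x'}}}\le N^{-\omega_E}$ for some small $\omega_E>0$, not $\ll 1/N$; this polynomially small decorrelation is already sufficient for the homogenisation comparison (via intermediate processes driven by explicitly decoupled Brownian motions, constructed with Doob's martingale representation) to replace the coupled DBMs by independent ones at a sub-$1/N$ error after time $t=N^{-1+\omega}$. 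You also omit the preliminary reduction to small support of $x$, which the paper uses so that perturbative estimates in $x$ (in the stability-factor lemma) stay within the bulk.
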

\begin{remark}\label{remark}We mention a few simple observations about Theorem~\ref{thm1}.
   \begin{enumerate}[label=(\roman*)]
      \item By the regularity of \(f\), \(\rho^x\) and  by rigidity of the bulk eigenvalues (see~\eqref{eq:rig} later) we may replace the random scaling factor \(\rho^x(  \lambda^x_{i}  )\) with  \(\rho^x(  \gamma^x_{i}  )\) at negligible error.
      \item For \(\E H=0\) and \(a>0\) the condition~\eqref{scond} can simply be replaced
            by \(i\in [N\epsilon', N(1-\epsilon')]\) for some \(\epsilon'>0\)
            and the argument of \(f\) in~\eqref{univ1} simplifies to
            \(f\big(  N \rho_{\mathrm{sc}}(  \lambda^x_{i} )[ \lambda^x_{i+1}-  \lambda^x_{i}] \big)\).
      \item Empirically we find that the convergence towards the universal gap statistics in~\eqref{univ1} is much slower for the monoparametric ensemble compared to GUE, cf.\ Figure~\ref{figure quenched2}. While even for \(2\times 2\) GUE matrices the empirical gap distribution is already very close to the Gaudin-Mehta distribution (see Section~\ref{sec painleve}), we observe the same effect only for large monoparametric matrices.
   \end{enumerate}
\end{remark}

\begin{remark}\label{rem3}
   We mention an interesting special case of Theorem~\ref{thm1} when \(H\) is a Wigner matrix and \(A\) itself is chosen from a Wigner ensemble that is independent of \(H\) and \(x\). In this case Theorem~\ref{thm1} implies\footnote{The condition on \(\braket{\mathring{A}^2}\) is satisfied  since \(\braket{A^2} = 1+ o(1)\) and \(\braket{A}=o(1)\) with  very high probability. Moreover, the scDos \(\rho^x\) is very close to a rescaled semicircle law with radius \(2\sqrt{1+x^2}\) with very high probability in the joint probability space of \(H\) and \(A\), hence the condition~\eqref{scond} holds for all \(i\in [N\epsilon', N(1-\epsilon')]\) for some \(\epsilon'>0\).} that for any fixed pair of Wigner matrices \(A,H\) from a high probability set, the universality of the \(i\)-th gap statistics of \(H+xA\) for \(i\in [N\epsilon', N(1-\epsilon')]\) is solely
   generated by the single real random variable \(x\), i.e.
   \begin{equation}
      \Prob_{H, A} \Bigl( \text{\(i\)-th gap statistics of \(H+xA\) is universal} \Bigr) = 1- \mathcal{O}(N^{-\epsilon}).
   \end{equation}
   This mathematically rigorously answers to a question of Gharibyan,
   Pattison, Shenker and  Wells~\cite{GPSW}.
   While their original question referred to a standard Gaussian \(x\), which is not compactly supported, a simple cut-off argument extends our proof to this case as well.
\end{remark}
\begin{figure}
   \centering
   \begin{tikzpicture}
      \begin{axis}[width=5cm,height=3cm,
            ymin=0,xmax=3.2,ymax=1.2, ytick={.5,1}, xmajorticks=false,title={\(N=2\)},ylabel={GUE}
         ]
         \addplot[ybar,bar width=.2,fill=black!20,bar shift=0.0,draw=none] table[col sep=comma,x index=0,y index=1] {GUEvsMono2.csv};
         \addplot[thick] table[col sep=comma,x index=1,y index=0] {surmise.csv};
      \end{axis}
   \end{tikzpicture}
   \begin{tikzpicture}
      \begin{axis}[width=5cm,height=3cm,
            ymin=0,xmax=3.2,ymax=1.2, ymajorticks=false , xmajorticks=false,title={\(N=100\)}
         ]
         \addplot[ybar,bar width=.2,fill=black!20,bar shift=0.0,draw=none] table[col sep=comma,x index=0,y index=2] {GUEvsMono2.csv};
         \addplot[thick] table[col sep=comma,x index=1,y index=0] {surmise.csv};
      \end{axis}
   \end{tikzpicture}
   \begin{tikzpicture}
      \begin{axis}[width=5cm,height=3cm,
            ymin=0,xmax=3.2,ymax=1.2, ymajorticks=false , xmajorticks=false,title={\(N=1000\)}
         ]
         \addplot[ybar,bar width=.2,fill=black!20,bar shift=0.0,draw=none] table[col sep=comma,x index=0,y index=3] {GUEvsMono2.csv};
         \addplot[thick] table[col sep=comma,x index=1,y index=0] {surmise.csv};
      \end{axis}
   \end{tikzpicture}
   \begin{tikzpicture}
      \begin{axis}[width=5cm,height=3cm,
            ymin=0,xmax=3.2,ymax=1.2, ytick={.5,1}, xtick={0,1,2}, ylabel={monop.}
         ]
         \addplot[ybar,bar width=.2,fill=black!40,bar shift=0.0,draw=none] table[col sep=comma,x index=0,y index=4] {GUEvsMono2.csv};
         \addplot[thick] table[col sep=comma,x index=1,y index=0] {surmise.csv};
      \end{axis}
   \end{tikzpicture}
   \begin{tikzpicture}
      \begin{axis}[width=5cm,height=3cm,
            ymin=0,xmax=3.2,ymax=1.2, ymajorticks=false , xtick={0,1,2},
         ]
         \addplot[ybar,bar width=.2,fill=black!40,bar shift=0.0,draw=none] table[col sep=comma,x index=0,y index=5] {GUEvsMono2.csv};
         \addplot[thick] table[col sep=comma,x index=1,y index=0] {surmise.csv};
      \end{axis}
   \end{tikzpicture}
   \begin{tikzpicture}
      \begin{axis}[width=5cm,height=3cm,
            ymin=0,xmax=3.2,ymax=1.2, ymajorticks=false, xtick={0,1,2},
         ]
         \addplot[ybar,bar width=.2,fill=black!40,bar shift=0.0,draw=none] table[col sep=comma,x index=0,y index=6] {GUEvsMono2.csv};
         \addplot[thick] table[col sep=comma,x index=1,y index=0] {surmise.csv};
      \end{axis}
   \end{tikzpicture}
   \caption{Here we show the histogram of a (rescaled) single eigenvalue gap
      \(\lambda_{N/2+1}-\lambda_{N/2}\)  in the middle of the spectrum for \(N\in\set{2,100,1000}\) and for random matrices sampled from either the GUE or the monoparametric ensemble. For the GUE ensemble the histogram has been generated by sampling \(2000\) independent GUE matrices \(H\). For the monoparametric ensemble only two GUE matrices \(H,A\) have been drawn at random, and the histogram has been generated by sampling \(2000\) standard Gaussian random variables \(x\) and considering the gaps of  \(H^x=H+xA\). The solid black line represents the theoretical limit \(p_2(s)\) which matches the empirical distribution very closely already for \(N=2\). In Appendix~\ref{appendix figure} we present numerical evidence for the speed of convergence, inspired by the observation on the slow convergence of the spectral form factor made in~\cite{GPSW}.}\label{figure quenched2}
\end{figure}
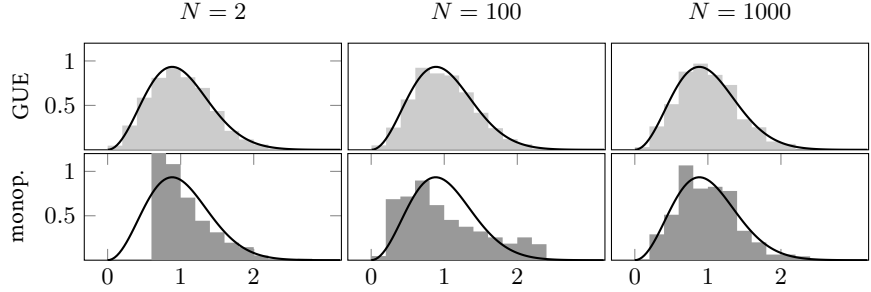
\subsection{Monoparametric universality via spectral sampling}\label{mono univ samp}
The main universality result for the second mechanism (\emph{spectral sampling}) is the following quenched
fixed-energy universality result for the \emph{monoparametric ensemble}. We define \(i_0=i_0(x,E)\) as the index such that \(\gamma_{i_0}^x\) is the quantile of \(\rho^x\) closest to \(E\), i.e.
\begin{equation}
   i_0(x,E) := \biggl\lceil N\int_{-\infty}^E \rho^x(t) \dif t \biggr\rceil \in [N]
\end{equation}
with \(\lceil\cdot\rceil\) denoting rounding to the next largest integer.

For the special case \(A=I\) (formulated as~\ref{case1} in Theorem~\ref{thm2b} below)
we obtain quenched sampling universality for a much broader class of Hermitian random matrices \(H\) with slow correlation decay\footnote{\label{cor mat def}These are \(N\times N\) Hermitian matrices \(H\) with
\emph{covariance operator} \(\mathcal{S}[R] := \frac{1}{N} \E WRW\), where \(W:=\sqrt{N}(H-\E H)\)
is a correlated centred random matrix. Note that this \(W\) is \(\sqrt{N}\)-times bigger
than the Wigner matrix \(W\) defined in Assumption~\ref{ass:entr}. This notational inconsistency
occurs only in this description of the correlated ensemble where we follow the
convention of~\cite{MR3941370}.
We assume that \(\norm{\E H}\le C\) and
that \(W\) satisfies Assumptions (B)--(E) of~\cite{MR3941370}. We recall that Assumption (B) requires that  all moments of the matrix elements of \(W\) are finite, i.e.\ \(\E |W_{ab}|^q\le C_q\) with some constant \(C_q\) for any \(q\) integer, uniformly in the indices \(a, b\in [N]^2\), while Assumption (E) requires  that the covariance operator satisfies the so called \emph{flatness condition}
\[
   c\braket{R}\le \mathcal{S}[R] \le C\braket{R}
\]
for any positive semi-definite matrix \(R\), where
\(c, C\) are some  fixed positive constants.
Finally, Assumptions (C), (D) or their simplified version (CD) impose decay conditions
on the cumulants of different entries of \(W\), we refer the reader to~\cite[Eqs. (2.5a)--(2.5b)]{MR3941370} for the precise condition. The self-consistent density of states \(\rho\) is defined analogously to~\eqref{scdos}, where \(M\) solves the MDE~\eqref{MDE} with \(\braket{M}\) replaced by \(\mathcal{S}[M]\).} defined in~\cite{MR3941370}. In the second situation,~\ref{case2} in Theorem~\ref{thm2b} below we consider deformed Wigner matrices \(H\) and general \(A\) with a condition complementary to the condition \(\braket{\mathring A^2}\ge c_0\) of Theorem~\ref{thm1}.

\begin{theorem}[Quenched monoparametric universality via spectral sampling mechanism]\label{thm2b}
   There is a small universal constant \(a_0\) and for any small \(c_1>0\) there exists a \(c_0>0\) such that the following hold. Let \(x= N^{-a}\chi\) be a scalar random variable satisfying Assumption~\ref{ass:x}, and let \(H,A,a\) be such that either
   \begin{enumerate}[label=Case \arabic*)]
      \item\label{case1} \(H\) is a correlated random matrix\textsuperscript{\ref{cor mat def}}, \(A=I\), and \(a\in [0, 1-a_1]\) for an arbitrary small \(a_1\),
      \item\label{case2} \(H\) is a deformed Wigner matrix (cf.\ Ass.~\ref{ass:entr}), \(c_0\abs{\braket{A}}\ge  \braket{\mathring{A}^2}^{1/2} \), \(\abs{\braket{A}}\ge c_0\), and \(a\in [0,a_0]\),
   \end{enumerate}
   and fix an energy \(E\) with \(\rho^x(E)\ge c_1>0\) for \(\Prob_x\)-almost all \(x\). Then there exist small \(\alpha,\kappa>0\) and an event \(\Omega_E=\Omega_{E,A}\) with \(\Prob_H(\Omega_E^c)\le N^{-\kappa}\) such that for all \(H\in \Omega_E\) the matrix \(H^x\) satisfies
   \begin{equation}\label{eq:mainbneedsam}
      \left|\E_x f\left(N\rho^x(E)[\lambda_{i_0(x,E)+1}^x-\lambda_{i_0(x,E)}^x]\right)-\int p_\beta(t)f(t)\dif t\right|=\mathcal{O}\left(N^{-\alpha} \| f\|_{C^5}\right).
   \end{equation}
   The exponents \(\kappa, \alpha\) depend on \(a_0, a_1\) while the
   implicit constant in~\eqref{eq:mainbneedsam} depends on \(a_0,a_1, c_0,c_1\) and the diameter of $\supp f$.
\end{theorem}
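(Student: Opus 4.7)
The guiding idea is that in Case~\ref{case1} (and, to leading order, in Case~\ref{case2}) the one-parameter flow $x\mapsto H^x$ essentially only \emph{shifts} the spectrum of $H$, because $A=I$ leaves eigenvectors invariant and in Case~\ref{case2} the matrix $A$ is close to $\braket{A}I$. Consequently, as $x$ varies, the gap near a fixed bulk energy $E$ of $H^x$ probes gaps of $H$ at the translated energy $E-x\braket{A}$. The whole argument is then organized around showing that these translated gaps decorrelate fast enough in $x$ to turn the $\Prob_x$-average into a concentrated empirical average whose mean is the Wigner surmise.

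First I would reduce Case~\ref{case2} to Case~\ref{case1}: write $A=\braket{A}I+\mathring A$ with $|\braket{A}|\ge c_0$ and $\braket{\mathring A^2}^{1/2}\le c_0|\braket{A}|$, and absorb the small deterministic perturbation $x\mathring A$ into the deformation $B$ of the deformed Wigner matrix. The leading effect of $xA$ is then the scalar shift $x\braket{A}$. Next, discretize $x$ via Assumption~\ref{ass:x}: choose a grid $\{x_j\}_{j=1}^K$ with $|x_j-x_k|\ge N^{-1+\delta}$ and weights $p_j=\Prob_x(x\in I_j)$, so that $\E_x f(\cdot)$ is, up to negligible error, $\sum_j p_j F_j(H)$, where $F_j(H):= f\bigl(N\rho^{x_j}(E)[\lambda^{x_j}_{i_0(x_j,E)+1}-\lambda^{x_j}_{i_0(x_j,E)}]\bigr)$ is the gap statistic of $H$ localized near the shifted energy $E_j:=E-x_j\braket{A}$, and the spacing forces $|E_j-E_k|\gg N^{-1}$.

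Second, the ensemble mean is handled by the standard bulk fixed-energy universality for deformed Wigner matrices (Case~\ref{case2}) or for correlated matrices with slow correlation decay (Case~\ref{case1}, via \cite{MR3941370}), which yields $\E_H F_j(H)=\int p_\beta(t)f(t)\,\dif t+\landauO{N^{-\alpha}}$ for every $j$, hence $\E_{H,x}f(\cdot)=\int p_\beta f+\landauO{N^{-\alpha}}$. The theorem then follows from a Chebyshev-type concentration bound on
\[
\Var_H\Bigl(\E_x f(\cdot)\Bigr)=\sum_{j,k}p_jp_k\Bigl[\E_H\bigl(F_j F_k\bigr)-\E_H F_j\cdot\E_H F_k\Bigr],
\]
so the core task is to show that for $j\ne k$ with $|E_j-E_k|\ge N^{-1+\delta}$ the local gap statistics at the two well-separated bulk energies are asymptotically independent.

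Third, asymptotic independence is proved by running Dyson Brownian motion, following and adapting the two-point strategy of~\cite{1912.04100}. For time $t=N^{-1+\tau}$ the local spectral data of $H_t$ in disjoint windows around $E_j$ and $E_k$ are driven by projections of the same matrix Brownian motion onto eigenvector subspaces corresponding to algebraically orthogonal spectral regions. The isotropic local law and eigenvector delocalization/orthogonality at distant energies then make these stochastic differentials approximately independent, so the joint law of $(F_j,F_k)$ converges to a product of two independent GOE/GUE Wigner-surmise laws, giving $\E_H(F_jF_k)\to(\int p_\beta f)^2$ and hence $o(1)$ variance. Finally, the small Gaussian component added along DBM is removed by a standard Green function comparison theorem. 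The main obstacle is this last step: establishing, quantitatively and with errors beating $1/K$, that gap statistics of a \emph{single} (deformed) Wigner or correlated matrix at two distant bulk energies are asymptotically independent; this requires a careful two-window DBM coupling together with the corresponding multi-window local laws and rigidity, and is precisely where the assumption $|x-x'|\ge N^{-1+\delta}$ enters. Once this decorrelation is in place, Chebyshev produces the high-probability set $\Omega_E$ with $\Prob_H(\Omega_E^c)\le N^{-\kappa}$ on which~\eqref{eq:mainbneedsam} holds.
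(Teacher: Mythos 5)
Your overall architecture (discretize $x$, write $\E_x f$ as a weighted average of localized gap statistics, bound $\Var_H(\E_x f)$ by asymptotic decorrelation of gaps at different $x$'s, prove decorrelation via DBM plus Green function comparison) matches the paper's strategy, and your treatment of Case~\ref{case1} is essentially correct. However, your proposed reduction of Case~\ref{case2} to Case~\ref{case1}---``absorb the small deterministic perturbation $x\mathring A$ into the deformation $B$''---has a genuine gap. Once you absorb $x\mathring A$ into $B$, the deterministic part of the deformed Wigner matrix \emph{depends on $x$}, so the eigenvectors of $H^{x_1}=(H+x_1\mathring A)+x_1\braket{A}I$ and $H^{x_2}=(H+x_2\mathring A)+x_2\braket{A}I$ are eigenvectors of \emph{different} matrices $H+x_1\mathring A$ and $H+x_2\mathring A$. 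The algebraic orthogonality of eigenvectors with distant indices of a single matrix, which is what makes Case~\ref{case1} trivial at the level of the driving Brownian motions, is no longer available. You implicitly acknowledge this by writing ``eigenvector delocalization/orthogonality at distant energies,'' but this is not a free consequence of rigidity: for $\mathring A\neq 0$ it requires a genuine multi-resolvent local law for $\Tr\,\Im G^{x_1}\Im G^{x_2}$ and a lower bound on the two-body stability factor $|1-\braket{M^{x_1}M^{x_2}}|$. This is exactly what the paper's Proposition~\ref{pro:bover1} and Lemma~\ref{lem:stabop} supply, and it is why in Case~\ref{case2} the needed index separation is $|j_1-j_2|\gtrsim N^{1-\zeta_1}$ (equivalently $|x_1-x_2|\gtrsim N^{-\zeta_1}$ with a very small $\zeta_1$, whence the restriction $a\in[0,a_0]$), while your spacing $|x_j-x_k|\ge N^{-1+\delta}$ is far too permissive in that case.

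Concretely: in Lemma~\ref{lem:stabop}, the lower bound on the stability factor has leading term $|E_1-x_1\braket{A}-E_2+x_2\braket{A}|^2$ plus $|x_1-x_2|^2\braket{\mathring A^2}$, but an error $\mathcal O(|x_1-x_2|^3+|E_1-E_2|^3)$. When $\braket{\mathring A^2}$ has no effective lower bound, only the first (shift) term helps, and beating the cubic error forces $|i_1-i_2|/N$ to be at most a small constant $c_*$ and at least $N^{-\zeta_1}$. Your plan never confronts this quantitative balance, and without it the decorrelation step (your ``main obstacle'') is not actually achievable by the Case~\ref{case1} argument once $\mathring A\neq 0$. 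The paper circumvents the problem by noting explicitly that for $\mathring A\neq 0$ one ``technically follows the eigenbasis rotation mechanism'' (i.e.\ uses multi-resolvent local laws to control the overlaps), even though the source of orthogonality is the shift in spectrum by $x\braket{A}$, not the rotation by $x\mathring A$. Apart from this, the remaining ingredients you invoke---annealed bulk universality for deformed Wigner and correlated ensembles to fix the mean, and GFT to remove the Gaussian component added by DBM---are indeed the ones used.
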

We remark that the condition \(c_0|\braket{A}|\ge  \braket{\mathring{A}^2}^{1/2} \) in~\ref{case2} is not really necessary for~\eqref{eq:mainbneedsam} to hold. Indeed, if \(\braket{\mathring{A}^2}\ge c\) with any small positive constant, then we are back to the setup of Theorem~\ref{thm1} where the \emph{eigenbasis rotation} mechanism is effective. One can easily see that the proof of~\eqref{univ1} implies~\eqref{eq:mainbneedsam} in this case (see Remark~\ref{rem:betexp} below). However, we kept the condition \(c_0|\braket{A}|\ge  \braket{\mathring{A}^2}^{1/2} \) with a sufficiently small \(c_0\) in the formulation of  Theorem~\ref{thm2b} since it is necessary for the \emph{spectral sampling} mechanism to be effective
which is the mechanism represented in Theorem~\ref{thm2b}.

Note that as long as \(\mathring{A}\ne 0\), i.e.~\ref{case1} is not applicable,
the eigenbasis of \(H^x\) changes with \(x\) and we have to rely on the multi-resolvent local law method.
However,  lacking an effective lower bound on \(\braket{\mathring{A}^2}\),
the effective asymptotic orthogonality  still comes from the spectral sampling effect of
\(\braket{A}\),  the nontrivial tracial part of \(A\).
So along the proof of~\ref{case2}, technically we follow the eigenbasis rotation mechanism, but morally the effect
is similar to the spectral sampling mechanism as it still comes
from a shift in the spectrum triggered by \(x\braket{A}\), the leading part of \(xA\) in \(H^x=H+xA\).
Finally, a
simple perturbation argument shows that \(x\mathring{A}\) has no sizeable effect on the
sampling, but its presence  hinders the technically simpler orthogonality proof used in~\ref{case1}.

\subsection{Gaudin-Mehta distribution}\label{sec painleve}
For completeness we close this section by providing explicit formulas for the universal \emph{Gaudin-Mehta gap distributions} \(p_1,p_2\) which can either be defined as the Fredholm determinant of the sine kernel~\cite{MR0220494} or via the solution to the Painlev\'e V differential equation~\cite{MR573370}.
Given the solution \(\sigma\) to the non-linear differential equation
\begin{equation}
   (t\sigma'')^2 + 4(t\sigma'-\sigma)(t\sigma'-\sigma+(\sigma')^2)=0, \qquad \sigma(t)\sim-\frac{t}{\pi}-\frac{t^2}{\pi^2}\quad(\text{as \(t\to0\)}),
\end{equation}
we have~\cite{MR1808248}
\begin{equation}\label{pbeta eq}
   p_2(s)= \frac{\dif^2}{\dif s^2} \exp\Bigl(\int_0^{\pi s}\frac{\sigma(t)}{t}\dif t\Bigr),\quad p_1(s) = \frac{\dif^2}{\dif s^2} \exp\Bigl(\frac{1}{2}\int_0^{\pi s}\Bigl( \frac{\sigma(t)}{t} - \sqrt{-\frac{\dif}{\dif t} \frac{\sigma(t)}{t} } \Bigr)\dif t\Bigr).
\end{equation}
Remarkably, the Wigner surmise
\begin{equation}
   p_2^{\mathrm{Wigner}}(s):= \frac{32s^2}{\pi^2} \exp\Bigl(-\frac{4s^2}{\pi}\Bigr),\qquad p_1^{\mathrm{Wigner}}(s):= \frac{s\pi}{2} \exp\Bigl(-\frac{\pi s^2}{4}\Bigr)
\end{equation}
obtained by E.\ Wigner from explicitly computing the gap distribution for \(2\times2\) matrices, is very close to the large \(N\) limit \(p_2(s)\), more precisely \(\sup_s \abs{p_2(s)-p_2^{\mathrm{Wigner}}(s)}\approx 0.005\) and \(\sup_s \abs{p_1(s)-p_1^{\mathrm{Wigner}}(s)}\approx 0.016\).

\section{Quenched universality: Proof of Theorem~\ref{thm1} and Theorem~\ref{thm2b}}
In Section~\ref{sec:univrot} we prove Theorem~\ref{thm1} while in Section~\ref{sec:univsamp}
we present the proof of Theorem~\ref{thm2b} which structurally is analogous to the argument in Section~\ref{sec:univrot}.
For notational simplicity we introduce the discrete difference operator \(\delta\), i.e.\ for a tuple \(\lambda\) we set
\begin{equation}
   (\delta\lambda)_i=\delta\lambda_i :=  \lambda_{i+1}-\lambda_i
\end{equation}
in order to express eigenvalue differences \emph{(gaps)} more compactly. We also introduce the notation \(\braket{f}_\mathrm{gap}\) for the expectation of test functions \(f\) with respect to the density \(p_\beta\) from~\eqref{pbeta eq}, i.e.
\begin{equation}
   \braket{f}_\mathrm{gap}:=\int p_\beta(t)f(t)\dif t.
\end{equation}

\subsection{Universality via eigenbasis rotation mechanism: Proof of Theorem~\ref{thm1}}\label{sec:univrot}

To prove Theorem~\ref{thm1} we will show that the gaps \(\lambda_{i+1}^{x_1}-\lambda_i^{x_1}\), \(\lambda_{i+1}^{x_2}-\lambda_i^{x_2}\) for sufficiently large \(|x_1-x_2|\) are asymptotically independent in the sense of the following proposition whose proof will be presented in Section~\ref{sec:DBM}.
In the following we will often denote the covariance of two random variables \(X, Y\) in the \(H\)-space by
\[
   \Cov_H(X,Y):=\E_H XY-(\E_H X)(\E_H Y).
\]
\begin{proposition}\label{pro:realmaintres} Under the conditions of Theorem~\ref{thm1}
   there exists a sufficiently small  \(c^*>0\)  (depending on \(c_0, c_1\)) and
   for any small \(\zeta_1\) there exists \(\zeta_2>0\) such that the following holds. Pick real numbers
   \(x_1,x_2\) with \(N^{-\zeta_1}\le |x_1-x_2|\le c^*\)
   and indices \(j_1,j_2\) with \(|j_1-j_2|\lesssim  N|x_1-x_2|\), such that the corresponding quantiles \(\gamma_{j_r}^{x_r}\), are in the \(c_1\)-bulk of the spectrum of \(H^{x_r}\) for each \(r=1,2\). Then the covariance \(\Cov_H(X,Y)\) satisfies
   \begin{equation}\label{eq:neednow}
      \Cov_H\Bigl(P_1(N\delta\lambda_{j_1}^{x_1}),P_2(N\delta\lambda_{j_2}^{x_2})\Bigr)=\landauO*{N^{-\zeta_2} \norm{P_1}_{C^5}\norm{P_2}_{C^5}}
   \end{equation}
   for any \(P_1,P_2\colon\R\to\R\) bounded smooth test functions, and where %
   the implicit constant in \(\mathcal{O}(\cdot)\) may depend on \(c_0,c_1\) at most polynomially.
\end{proposition}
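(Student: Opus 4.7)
The plan is to combine the short-time Dyson Brownian motion (DBM) coupling strategy of~\cite{1912.04100} with a two-resolvent local law that quantifies the \emph{eigenbasis rotation} induced by $\mathring{A}$. First I would replace $H$ by an Ornstein--Uhlenbeck regularisation $H_s$ run up to a short time $t=N^{-1+\omega}$ with $0<\omega\ll \zeta_1$. Since the OU flow preserves the first two moments of the entries, a Green function comparison argument, applied after smoothing the gap observables $P_r(N\delta\lambda^{x_r}_{j_r})$ to resolvent traces at imaginary part slightly below $1/N$, reduces~\eqref{eq:neednow} for $H$ to the same estimate for $H_t$ up to negligible errors. Throughout I would use single-resolvent local laws and rigidity for deformed Wigner matrices~\cite{MR3916109,MR4164728} to identify $\gamma^{x_r}_{j_r}$ with $\lambda^{x_r}_{j_r}$ at scale $N^{-1+\xi}$.

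For the regularised matrix I would run two coupled copies of the Dyson eigenvalue flow with deformations $x_1 A$ and $x_2 A$, driven by the \emph{same} matrix-valued Brownian motion coming from $H_s$. The eigenvalue SDEs then have martingale increments whose quadratic cross-covariations are proportional to $\abs{\braket{\bm{u}^{x_1}_i(s),\bm{u}^{x_2}_j(s)}}^2/N$, with $\bm{u}^{x_r}_i(s)$ the instantaneous eigenvectors of $H_s^{x_r}$. The key input is the \emph{eigenvector asymptotic orthogonality}
\[
\abs*{\braket{\bm{u}^{x_1}_i(s),\bm{u}^{x_2}_j(s)}}^2 \le N^{-\zeta}\qquad \text{for $\gamma^{x_r}_{j_r}$ in the $c_1$-bulk},
\]
valid with very high $\Prob_H$-probability uniformly in $s\le t$. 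Granted this bound, I would couple each of the two systems to an independent auxiliary DBM, following~\cite{1912.04100}: the coupling error is dominated by a sum of overlap bounds over a short window of indices around $j_1,j_2$, so after time $t=N^{-1+\omega}$ the gap universality for DBM has set in within each decoupled copy. The expectations in $\E_H P_r(N\delta\lambda^{x_r}_{j_r})$ consequently almost factorise, and integrating against the $C^5$ test functions yields~\eqref{eq:neednow}.

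The main obstacle is proving the displayed orthogonality. I would derive it from a two-resolvent local law
\[
\braket{(\Im G^{x_1}(z_1))(\Im G^{x_2}(z_2))}=\braket{(\Im M^{x_1}(z_1))(\Im M^{x_2}(z_2))}+\landauO*{N^{-\zeta_2}}
\]
at spectral parameters with $\Im z_r\ge N^{-1+\omega}$, combined with a quantitative analysis of the stability operator of the joint MDE system for $(M^{x_1},M^{x_2})$. The assumption $\braket{\mathring{A}^2}\ge c_0$ is precisely what prevents this stability operator from degenerating when $x_1\ne x_2$: it yields a spectral gap in the relevant block whose size is controlled by $|x_1-x_2|$ and $c_0$, so that $\braket{(\Im M^{x_1})(\Im M^{x_2})}$ is strictly smaller than the diagonal $x_1=x_2$ value, with a quantitative gain that can be converted into the pointwise overlap bound above by spectral decomposition of $\Im G^{x_r}$ and a dyadic sum over bulk windows of width $N^{-1+\omega}$. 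The admissible range $|x_1-x_2|\ge N^{-\zeta_1}$, together with the upper cutoff $|x_1-x_2|\le c^*$ that keeps a common bulk region for $H^{x_1}$ and $H^{x_2}$, arises precisely from the quantitative regime in which this stability gap is exploitable and the multi-resolvent local law is known to hold.
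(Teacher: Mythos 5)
Your proposal follows the same overall architecture as the paper's argument: (i) a Green function comparison reduction to matrices with a small Gaussian component via an Ornstein--Uhlenbeck flow; (ii) coupling the eigenvalue DBMs for $H_s+x_1A$ and $H_s+x_2A$ through the common driving matrix Brownian motion, identifying the quadratic cross-covariation of the driving noises as $|\braket{\bm u^{x_1}_i(s),\bm u^{x_2}_j(s)}|^2\dif s$; (iii) comparing to a pair of fully independent DBMs using the short-time homogenisation technology of~\cite{1912.04100,MR3914908}; and (iv) controlling the coupling error via an eigenvector-overlap bound derived from a two-resolvent local law combined with a lower bound on the stability factor, for which the hypothesis $\braket{\mathring A^2}\gtrsim 1$ is precisely what produces a quantitative gap $|1-\braket{M^{x_1}M^{x_2,(*)}}|\gtrsim |x_1-x_2|^2$.

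One technical imprecision worth flagging: the two-resolvent local law you display, namely $\braket{\Im G^{x_1}\Im G^{x_2}}=\braket{\Im M^{x_1}\Im M^{x_2}}+\landauO{N^{-\zeta_2}}$, does not have the correct deterministic target. The deterministic approximation of $\braket{G_1G_2}$ is not $\braket{M_1M_2}$ but rather $\braket{M_1M_2}/(1-\braket{M_1M_2})$, i.e.\ it carries a $1/\delta$ prefactor with $\delta=|1-\braket{M_1M_2}|$. This $\delta^{-1}$ blow-up as $x_1\to x_2$ (and $\eta\to 0$) is exactly what the stability-operator analysis must control; the naive product $\braket{\Im M_1\Im M_2}$ stays order one and would not, by itself, explain why the overlap bound degrades as $|x_1-x_2|\downarrow 0$ nor why the regime $|x_1-x_2|\ge N^{-\zeta_1}$ enters. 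Your subsequent discussion of the stability gap suggests you understand the mechanism, but the displayed local law as written is not what would be proved and would not yield the overlap estimate by the spectral-decomposition argument you sketch. You would also want to state explicitly that the overlap bound is needed with the index constraint $|j_1-j_2|\lesssim N|x_1-x_2|$, which is used to keep $|E_1-E_2|\lesssim|x_1-x_2|$ so the cubic error term in the stability lower bound stays subleading.
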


\begin{remark}\label{rem:newrem}
   We stated the asymptotic independence of a single gap in Proposition~\ref{pro:realmaintres} and only for two \(x_1,x_2\) for notation simplicity. Exactly the same proof as in Section~\ref{sec:DBM} directly gives the result in~\eqref{eq:neednow}  for test functions \(P_r:\R^p\to \R\) of several gaps, for some fixed \(p\in\N\).
   Additionally, by the same proof we can also conclude the asymptotic independence of several gaps for several \(x_1,\dots, x_q\). For the same reason we also state Proposition~\ref{pro:realmaintressam} and Proposition~\ref{pro:realmaintresa} below only for two \(x_1,x_2\) and test functions \(P_r:\R\to \R\).
\end{remark}

\begin{proof}[Proof of Theorem~\ref{thm1}]
   We will first prove that without loss of generality we may assume that the
   linear size of the support of \(x\) is bounded by \(c^*\),
   where \(c^*\) is from Proposition~\ref{pro:realmaintres}.
   This initial simplification will then allow us to use perturbation in \(x\) when proving Proposition~\ref{pro:realmaintres}.
   Suppose that Theorem~\ref{thm1} is already proved for random variables with such a small support
   with an error term \(N^{-\alpha}\) on sets of probability at least \(1-N^{-2\kappa}\)
   and we are now given a random variable \(x\) with a larger support of size bounded by some  constant
   \(C\).  Then
   we define the random variables
   \[
      x_i:=\frac{x\cdot\bm1(x\in J_i)}{\Prob_x(J_i)},
   \]
   where \(J_i\)'s, for \(i=1,2, \ldots , C/c^*\), are disjoint intervals of size \(c^*\) such that \(\supp(x)=\bigcup_i J_i\). For any test function \(f\) we can then write
   \begin{equation}\label{eq:redsmall}
      \begin{split}
         \E_x f(N\rho^x(\gamma_j^x)\delta\lambda_j^x)&=\sum_{i=1}^{C/c^*}\Prob_x(J_i) \E_{x_i}f(N\rho^{x_i}(\gamma_j^{x_i})\delta\lambda_j^{x_i}) \\
         &=\braket{f}_\mathrm{gap}\sum_{i=1}^{C/c^*}\Prob_x(J_i)+\mathcal{O}(N^{-\alpha})=\braket{f}_\mathrm{gap}+\landauO*{N^{-\alpha}},
      \end{split}
   \end{equation}
   on a set of probability at least \(1-(C/c^*) N^{-2\kappa}\ge 1-N^{-\kappa}\),
   where we used Theorem~\ref{thm1} for the random variables \(x_i\) in the last step and a union bound.

   From now on we assume that the linear size of the support of \(x\) is bounded by  \(c^*\).
   With \(\nu(\dif x)\) denoting the measure of \(x\) we have
   \begin{align}\label{eq:mainvarcomp}
       & \E_H\left|\E_x f\left(N\rho^x(\gamma_{j}^x)\delta\lambda_j^x\right)-\braket{f}_\mathrm{gap}\right|^2 \\\nonumber
       & =\iint_{\abs{x_1-x_2}\ge N^{-\epsilon_2}}\nu(\dif x_1) \nu(\dif x_2)\E_H
      \Big[ \prod_{r=1}^2f\left(N\rho^{x_r}(\gamma_{j}^{x_r})\delta\lambda_{j}^{x_r}\right)
         \Big] -\braket{f}_\mathrm{gap}^2+\landauO*{N^{-c(\epsilon_2)}\norm{f}_{C^5}^2 },
   \end{align}
   for some sufficiently small \(\epsilon_2\) so that we can apply Proposition~\ref{pro:realmaintres}
   with \(\zeta_1 = \epsilon_2\).
   In~\eqref{eq:mainvarcomp} we used that
   the regime \(|x_1-x_2|\le N^{-\epsilon_2}\) can be removed at the price of a negligible error by the regularity assumption on the distribution of \(x=N^{-a}\chi\), with \(\chi\) satisfying Assumption~\ref{ass:x}. For the cross-term in~\eqref{eq:mainvarcomp} we used that by gap universality for the deformed Wigner matrix \(H^x\) with a fixed \(x\) (see e.g.~\cite[Corollary 2.11]{MR3916109}) it follows that
   \begin{equation}\label{eq H gap univ}
      \E_H f\left(N\rho^x(\gamma_{j}^x)\delta\lambda_{j}^x\right)=\braket{f}_\mathrm{gap}+\landauO*{N^{-\zeta_3}\norm{f}_{C^5}}
   \end{equation}
   for some small fixed \(\zeta_3>0\) depending only on the model parameters and on the constants \(a_0,c_1\).

   Applying Proposition~\ref{pro:realmaintres}
   to the first term in~\eqref{eq:mainvarcomp} with \(P_r(t):=f(\rho^{x_r}(\gamma_j^{x_r})t)\) noting that \(\rho^{x_r}\) is uniformly bounded, so that for \( N^{-\epsilon_2}\le \abs{x_1-x_2}\le c^*\), we get
   \begin{equation}\label{esq cov est}
      \E_H \Bigl[ \prod_{r=1}^2f\left(N\rho^{x_r}(\gamma_{j}^{x_r})\delta\lambda_{j}^{x_r}\right) \Bigr] = \prod_{r=1}^2\E_H f\left(N\rho^{x_r}(\gamma_{j}^{x_r})\delta\lambda_{j}^{x_r}\right) + \landauO*{N^{-\zeta_2}\norm{f}_{C^5}^2}.
   \end{equation}
   By using~\eqref{esq cov est} and~\eqref{eq H gap univ} in~\eqref{eq:mainvarcomp} it follows that
   \begin{equation}\label{eq:fincheba}
      \E_H\left|\E_x f\left(N\rho^x(\gamma_{j}^x)\delta\lambda_{j}^x\right)-\braket{f}_\mathrm{gap}\right|^2\le \left(N^{-c(\epsilon_2)}+N^{-\zeta_2}+N^{-\zeta_3}\right)\norm{f}_{C^5}^2.
   \end{equation}
   From~\eqref{eq:fincheba} and the Chebyshev inequality we obtain events \(\Omega_{j,f}\) on which~\eqref{univ1} holds with probability \(\Prob_H(\Omega_{j,f}^c)\le N^{-\kappa}\) for some suitably chosen \(\kappa,\alpha>0\).
\end{proof}

\subsection{Universality via spectral sampling mechanism: Proof of Theorem~\ref{thm2b}}\label{sec:univsamp}

The mechanism behind the proof of Theorem~\ref{thm2b} is quite different compared to Theorem~\ref{thm1}. In particular, in order to prove Theorem~\ref{thm2b} we will first show that under the assumptions of Theorem~\ref{thm1} the gaps \(\delta\lambda_i^x\), \(\delta\lambda_j^x\) are asymptotically independent for any fixed \(x\) in the probability space of \(H\) as long as \(\abs{i-j}\) is sufficiently large. This independence property
for the  \(A=I\) case has already been
used  as a heuristics without proof, e.g.\ in~\cite{MR3112927,MR4416591}
(a related result for not too distant  gaps for local log-gases
can be deduced from the De Giorgi-Nash-Moser H\"older regularity estimate, see~\cite[[Section 8.1]{MR3372074}).
More precisely, we have the following proposition:
\begin{proposition}\label{pro:realmaintressam}
   Under the conditions of Theorem~\ref{thm2b}
   there exists a sufficiently small  \(c_*>0\)  (depending on \(c_0, c_1\)) and
   for any sufficiently small \(\zeta_1>0\) there exists \(\zeta_2>0\)
   such that %
   the following hold.
   Pick indices \(j_1, j_2\)  and real numbers
   \(x_1,x_2\)  such that the corresponding quantiles \(\gamma_{j_r}^{x_r}\) are in  the \(c_1\)-bulk of the spectrum of \(H^{x_r}\), i.e.
   \(\rho^{x_r}(\gamma_{j_r}^{x_r})\ge c_1\),
   for \(r=1,2\). In the two different cases listed in Theorem~\ref{thm2b}
   we additionally assume the following:
   \begin{itemize}
      \item[\ref{case1}] \(|j_1-j_2|\ge  N^{\zeta_1} \);
      \item[\ref{case2}] \(N^{1-\zeta_1}\le |j_1-j_2|\le c_*N\) and  \(N|x_1-x_2|\lesssim |j_1-j_2|\).
   \end{itemize}
   Then in both cases it holds that
   \begin{equation}
      \Cov_H\Bigl( P_1(N \delta\lambda_{j_1}^{x_1}),P_2(N \delta\lambda_{j_2}^{x_2})\Bigr) = \mathcal{O}\left(N^{-\zeta_2}\prod_{r=1}^2 \norm{P_r}_{C^5}\right)
   \end{equation}
   for \(P_r\colon\R\to\R\) bounded, smooth test functions. The implicit constant in \(\mathcal{O}(\cdot)\) may depend on \(c_0,c_1\) at most polynomially.
\end{proposition}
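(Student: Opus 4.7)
The plan is to adapt the DBM-based framework that proves Proposition~\ref{pro:realmaintres}, replacing the mechanism ``orthogonality of eigenbases at distant $x$'' by ``orthogonality of eigenvectors at distant indices''. First, by a Green function comparison argument, we reduce to the setting where $H$ has a small Gaussian component, so that we may run Dyson Brownian motion on $H$ for a short time $t = N^{-1+\omega}$ with a small $\omega>0$. Along the DBM flow, the eigenvalues of $H^{x_r} = H + x_r A$ satisfy coupled SDEs whose quadratic covariations are proportional to the eigenvector overlaps $\abs{\braket{u_{j_1}^{x_1}, u_{j_2}^{x_2}}}^2$. Showing that these overlaps are small at scale $N^{-\zeta_2}$ is the central target; once established, the standard Ito-based moment analysis of the coupled gap processes gives the desired covariance bound.

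\textbf{Case 1.} When $A = I$, the eigenvectors of $H^x$ do not depend on $x$ and $\lambda_k^x = \lambda_k + x$, so $N\delta\lambda_k^x = N\delta\lambda_k$ is an $x$-independent random variable. The problem thus reduces to showing asymptotic independence of the gaps $N\delta\lambda_{j_1}$ and $N\delta\lambda_{j_2}$ of $H$ alone when $\abs{j_1-j_2}\ge N^{\zeta_1}$. Along the DBM of $H$, the diffusion parts of these two gaps are driven by the disjoint pairs $\set{B_{j_1}, B_{j_1+1}}$ and $\set{B_{j_2}, B_{j_2+1}}$ of independent scalar Brownian motions, so only the nonlocal drift terms $\frac{1}{N}\sum_{\ell\ne k}(\lambda_k-\lambda_\ell)^{-1}$ couple them. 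Using eigenvalue rigidity together with the De Giorgi--Nash--Moser H\"older regularity estimates for the associated parabolic equation (as in~\cite{MR3372074}), the effective drift coupling at distance $\abs{j_1-j_2}\ge N^{\zeta_1}$ is shown to decay polynomially in $N$, producing the desired covariance estimate after short-time DBM.

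\textbf{Case 2.} Here $A\ne I$ but $\braket{\mathring{A}^2}^{1/2}\le c_0\abs{\braket{A}}$ is small. Decompose $A = \braket{A}I + \mathring{A}$ so that $H^{x_r} = (H + x_r\braket{A}I) + x_r\mathring{A}$. The shift part $x_r\braket{A}I$ does not rotate the eigenbasis and merely shifts the spectrum by $x_r\braket{A}$. The condition $N\abs{x_1-x_2}\lesssim\abs{j_1-j_2}$ ensures that the effective spectral separation between the two tracked gaps remains of order $\abs{j_1-j_2}/N \gtrsim N^{-\zeta_1}$. The perturbation $x_r\mathring{A}$ rotates the eigenvectors by a small amount controlled by $\abs{x_r}\braket{\mathring{A}^2}^{1/2}$; using rigidity and a first-order perturbation bound, the overlap $\abs{\braket{u_{j_1}^{x_1}, u_{j_2}^{x_2}}}^2$ is bounded by a negative power of $\abs{j_1-j_2}$ plus a contribution controlled by $c_0^2$, both of which can be made $N^{-\zeta_2}$ for sufficiently small $c_0$. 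A DBM analysis paralleling Case 1, applied in the (almost shared) eigenframe, then yields the covariance bound.

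The main obstacle lies in Case 2, namely the quantitative control of the eigenvector overlap $\abs{\braket{u_{j_1}^{x_1}, u_{j_2}^{x_2}}}^2$ when the two matrices $H^{x_1}$ and $H^{x_2}$ have essentially the same eigenbasis but their spectra are shifted by $\abs{x_1-x_2}\braket{A}$. The multi-resolvent local law invoked for Proposition~\ref{pro:realmaintres} requires $\abs{x_1-x_2}\ge N^{-\epsilon}$, which we cannot always afford under the hypotheses of Theorem~\ref{thm2b}. Instead, we must exploit the assumption $\abs{j_1-j_2}\ge N^{1-\zeta_1}$ to extract orthogonality from the index separation: eigenvectors of a single $H^x$ at indices that far apart are exactly orthogonal, and the perturbation from passing between $H^{x_1}$ and $H^{x_2}$ stays small because $\braket{\mathring{A}^2}^{1/2}\le c_0\abs{\braket{A}}$. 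Balancing these two small quantities against the index separation is the delicate quantitative step.
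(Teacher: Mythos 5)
Your overall framework (DBM plus GFT plus eigenvector overlap control) matches the paper's, and Case~1 is handled along roughly the same lines (though the paper uses direct homogenisation comparison à la~\cite{MR3914908}, not the De Giorgi--Nash--Moser route, which it mentions only as a ``related result'' for local log-gases; you also omit the necessary modification of the OU flow to $\Sigma^{1/2}[\dif B_t]$ to preserve the covariance structure of the correlated ensemble in~\ref{case1}).

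The genuine gap is in your Case~2 mechanism. You propose to treat $x_r\mathring{A}$ as a small perturbation of $H+x_r\braket{A}I$ and use ``a first-order perturbation bound'' to control the rotation of the eigenvectors. This does not work: eigenvalue spacings in the bulk are $\sim N^{-1}$, so a perturbation of size $N^{-a}\braket{\mathring{A}^2}^{1/2}$ mixes each eigenvector across $\sim N^{1-a}$ neighbouring eigenvectors, and this mixing is not controllable by naive perturbation theory on individual eigenvectors. The paper in fact explicitly warns about this: ``its presence hinders the technically simpler orthogonality proof used in~\ref{case1}.'' Your further claim that the multi-resolvent local law cannot be used in Case~2 because it ``requires $|x_1-x_2|\ge N^{-\epsilon}$'' conflates the local law itself with the particular stability-factor lower bound used in Proposition~\ref{pro:bover}. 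The paper's actual argument (Proposition~\ref{pro:bover1}) does use the same multi-resolvent local law $\Tr\Im G_1\Im G_2$, but deduces the necessary lower bound on $|1-\braket{M^{x_1}(E_1)M^{x_2}(E_2)^{(*)}}|$ from Lemma~\ref{lem:stabop} via the \emph{energy-shift} term $|E_1-x_1\braket{A}-E_2+x_2\braket{A}|^2\gtrsim(|j_1-j_2|/N)^2\gtrsim N^{-2\zeta_1}$, which becomes effective precisely because of the index separation $|j_1-j_2|\ge N^{1-\zeta_1}$ and the condition $c_0|\braket{A}|\ge\braket{\mathring{A}^2}^{1/2}$ that keeps the error from $x\mathring{A}$ subordinate. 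Your intuition that the index separation has to be converted into an effective energy separation is right, but the quantitative tool you need is still the two-resolvent local law with a stability estimate tied to $|E_1-E_2|$, not direct eigenvector perturbation theory.
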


\begin{proof}[Proof of Theorem~\ref{thm2b}]
   We present the proof only for the more involved~\ref{case2}, the~\ref{case1} is much easier and omitted.
   Similarly to~\eqref{eq:fincheba} in the proof of Theorem~\ref{thm1} it is enough to consider
   the case when the linear size of the support of \(x\) is bounded by some \(\tilde c>0\) (determined later) and prove that
   \begin{equation}\label{eq:afterchebsam}
      \E_H\left|\E_x f\left(N\rho^x(E)\delta\lambda_{i_0(x,E)}\right)-\braket{f}_\mathrm{gap}\right|^2\le N^{-2\alpha-\kappa}\| f\|_{C^5}^2,
   \end{equation}
   for some small \(\alpha,\kappa>0\).

   Note that under the assumptions of~\ref{case2} in
   Theorem~\ref{thmA} for any \(x_1, x_2\) in the support of the random variable \(x\) it holds
   \begin{equation}\label{eq:farindex}
      \theta |i_0(x_1,E)-i_0(x_2,E)|\le  N |x_1-x_2| \le \Theta |i_0(x_1,E)-i_0(x_2,E)|
   \end{equation}
   with some \(\theta, \Theta\) (depending on \(c_0, c_1\))
   as long as \(|x_1-x_2|\gg N^{-1}\). The bound in~\eqref{eq:farindex} is a direct consequence of the following Lemma~\ref{lem:gamma}
   (assuming that \(\tilde c\le c^*\))
   whose proof is postponed to Appendix~\ref{sec:M}.

   \begin{lemma}\label{lem:gamma}
      For any \(c_1>0\), there exists a \(c^*=c^*(c_1)>0\) such that for
      \(x_1,x_2\) with \(|x_1-x_2|\le c^*\)  it holds that
      \begin{equation}\label{eq:approxgamma}
         \gamma_i^{x_1}=\gamma_i^{x_2}+(x_1-x_2)\braket{A}+\mathcal{O}\left(|x_1-x_2|\braket{\mathring{A}^2}^{1/2}+|x_1-x_2|^2\right),
      \end{equation}
      where \(\gamma_i^{x_r}\) are the quantiles of \(\rho^{x_r}\) and \(i\) is in the \(c_1\)-bulk, i.e.
      \(\rho^{x_r}(\gamma_i^{x_r})\ge c_1\), for \(r\in [2]\).
   \end{lemma}
   Indeed,~\eqref{eq:farindex} follows from \(|\braket{A}|\ge c_0\) and from the inequality
   \[
      \begin{split}
         \left|\frac{i_0(x_1,E)}{N}-\frac{i_0(x_2,E)}{N}\right|&=\left|\int_{\gamma_{i_0(x_2,E)}^{x_1}}^{\gamma_{i_0(x_1,E)}^{x_1}}\rho^{x_1}(E)\,\dif E\right|\ge c_1\left|\gamma_{i_0(x_2,E)}^{x_1}-\gamma_{i_0(x_1,E)}^{x_1}\right| \\
         &\ge \frac{c_1}{2} \left|\gamma_{i_0(x_2,E)}^{x_1}-\gamma_{i_0(x_2,E)}^{x_2}\right|  + \mathcal{O}(N^{-1})
         \ge \frac{c_1}{4}|x_1-x_2| \, |\braket{A}|,
      \end{split}
   \]
   where to go from the first to the second line we used that \(\gamma_{i_0(x_2,E)}^{x_2}=\gamma_{i_0(x_1,E)}^{x_1}+\mathcal{O}(N^{-1})\) by the definition of \(i_0(x,E)\)  and that we are in the bulk.
   In the last inequality  we used~\eqref{eq:approxgamma}
   and that  its error terms  are negligible  by \(c_0|\braket{A}|\ge \braket{\mathring{A}^2}^{1/2}\)
   and \(|x_1-x_2|\le \tilde c\)  assuming that \(\tilde c\le c_0/10\).
   Then by a similar chain of inequalities, and using~\eqref{eq:approxgamma} once more, we get the matching upper bound in~\eqref{eq:farindex}.

   To prove~\eqref{eq:afterchebsam}, we use the counterpart of~\eqref{eq:mainvarcomp}
   and  that we can neglect the regime \(|x_1-x_2|\le N^{-\epsilon_2}\)  for some  sufficiently  small
   \(\epsilon_2>0\) so that we can apply Proposition~\ref{pro:realmaintressam} with
   \(\zeta_1=\epsilon_2\). %
   We remove this regime to ensure that on its complement
   \(|i_0(x_1,E)-i_0(x_2,E)|\) is sufficiently large by~\eqref{eq:farindex}. More precisely,
   for any \(x_1, x_2\) with \(N^{-\epsilon_2}\le |x_1-x_2|\le \tilde c\),
   we have \(\Theta^{-1}  N^{1-\epsilon_2}\le |i_0(x_1,E)-i_0(x_2,E)|\le \theta^{-1}  \tilde c N\) from~\eqref{eq:farindex}.
   Assuming \(\tilde c\le c_* \theta\),  we can apply~\ref{case2} of  Proposition~\ref{pro:realmaintressam}
   by choosing \(j_1= i_0(x_1,E)\), \(j_2= i_0(x_2,E)\) and
   with exponent \(\zeta_2\) to factorise the expectation in the equivalent of~\eqref{eq:mainvarcomp}.
   Using again the gap universality~\eqref{eq H gap univ}, similarly to~\eqref{eq:fincheba} we conclude~\eqref{eq:mainbneedsam} choosing \(\alpha, \kappa>0 \) appropriately.
\end{proof}

\begin{remark}\label{rem:betexp}
   The proof of~\eqref{eq:mainbneedsam} in the case \(\braket{\mathring{A}^2}\ge c\) is analogous to the proof of Theorem~\ref{thm1} above. We note that Proposition~\ref{pro:realmaintres} allows to also conclude the asymptotic independence of \(\delta\lambda_{i_0(x_1,E)}\) and \(\delta\lambda_{i_0(x_2,E)}\) since \(|i_0(x_1,E)-i_0(x_2,E)|\lesssim N |x_1-x_2|\) due to \eqref{eq:farindex}.
\end{remark}
\subsection{Proof of Corollary~\ref{thm2}}\label{appendix prop}
Picking \(E=0\) and the test function\footnote{While \(f\) does not literally satisfy the regularity condition, one can easily extend the validity of~\eqref{eq:mainbneedsam} to interval characteristic functions \(f\) by a standard approximation argument.} \(f(u) =\bm1(0\le u\le y)\) in~\ref{case1} of Theorem~\ref{thm2b}, and choosing the random variable \(x\) such that
\(-x\) has density proportional to \(\rho\vert_I\), with \(\rho=\rho_{\mathrm{sc}}\), it follows that with very high probability in the space of \(H\) it holds that
\begin{equation}
   \begin{split}
      &\E_x F(N\rho^x(0)\delta\lambda^x_{i_0(x,0)}) \\
      &\quad=\E_x F(N\rho(\lambda_{i_0(0,-x)})\delta\lambda_{i_0(0,-x)})+\landauO{N^{-1+\xi}}\\
      &\quad =\Bigl(\int_I\rho\Bigr)^{-1}\sum_i F(N\rho(\lambda_i) \delta\lambda_{i}) \int_{I\cap (\gamma_{i-1},\gamma_i]} \rho(t) \dif t+\landauO{N^{-1+\xi}}\\
      &\quad=\Bigl(N\int_I\rho\Bigr)^{-1} \#\set*{i\given  \lambda_{i+1}-\lambda_i\le \frac{y}{N\rho(\lambda_i)}, \lambda_i\in I}  + \landauO*{N^{-1+\xi}\abs{I}^{-1}},
   \end{split}
\end{equation}
where in the first and third step we used rigidity (see e.g.~\cite[Lemma 7.1, Theorem 7.6]{MR3068390} or~\cite[Section 5]{MR2871147}), i.e.\ that for any small \(\xi>0\) we have
\begin{equation}\label{eq:rig}
   \abs{\lambda_i-\gamma_i}\le \frac{N^\xi}{N},
\end{equation}
for all \(\gamma_i\) in the bulk, with very high probability.

\section{DBM analysis: Proof of Proposition~\ref{pro:realmaintres} and~Proposition~\ref{pro:realmaintressam}}\label{sec:DBM}
In this section we first present the proof of Proposition~\ref{pro:realmaintres} in details and
later in Section~\ref{sec:36} we
explain  the very minor changes that are required to prove Proposition~\ref{pro:realmaintressam}. %

\subsection{Proof of Proposition~\ref{pro:realmaintres}}\label{sec:mainsec}
By standard Green function comparison (GFT) argument (see e.g.~\cite[Section 15]{MR3699468}) it is enough to prove Proposition~\ref{pro:realmaintres} only for matrices with a small Gaussian component. More precisely, consider the DBM flow
\begin{equation}\label{eq:matrdbm}
   \dif \widehat{H}_t=\frac{\dif \widehat{B}_t}{\sqrt{N}}, \qquad \widehat{H}_0=H^\#,
\end{equation}
with \(\widehat{B}_t\) being a real symmetric or complex Hermitian standard Brownian motion (see e.g.~\cite[Section 12.1]{MR3699468} for the precise definition) independent of the initial condition \(\widehat{H}_0=H^\#\),
where \(H^\#\) is a deformed Wigner matrix specified later. Throughout this section we fix \(T>0\) and analyse the DBM for times \(0\le t\le T\).

We denote the ordered collection of eigenvalues of \(\widehat{H}_t+xA\) by \({\bm \lambda}^x(t)=\{\lambda_i^x(t)\}_{i\in [N]}\). The main result of this section is the asymptotic independence of \({\bm \lambda}^{x_1}(t_1)\), \({\bm \lambda}^{x_2}(t_1)\) for \(|x_1-x_2|\ge N^{-\zeta_1}\) and \(t_1\ge N^{-1+\omega_1}\), for some \(\omega_1>0\).

We note that in this entire section we do not use the randomness of \(x\), in the statement of Propositions~\ref{pro:realmaintres} and~\ref{pro:realmaintressam} \(x_1, x_2\)  are fixed parameters. Hence all probabilistic statements, such as covariances etc., are understood in the probability space of
the random matrices and the driving Brownian motions in~\eqref{eq:matrdbm}.
\begin{proposition}\label{pro:realmaintresa}
   Let \(H^\#\) be a deformed Wigner matrix satisfying Assumption~\ref{ass:entr}, let \(\widehat{H}_t\) be the solution of~\eqref{eq:matrdbm}, and let \(A\)  be  a deterministic matrix such that \(\braket{\mathring{A}^2}\ge c_0\)  and \(\| A\|\lesssim 1\). Then there exists a small  \(c^*>0\)  (depending on \(c_0, c_1\)) and
   for any small \(\zeta_1,\omega_1>0\) there exists some \(\zeta_2>0\) such that
   the following hold. Fix \(x_1,x_2\) with \(N^{-\zeta_1}\le |x_1-x_2|\le c^*\)  and
   indices \(j_1, j_2\) such that \(|j_1-j_2|\lesssim N|x_1-x_2|\),
   and the corresponding quantiles \(\gamma_{j_r}^{x_r}\) are in the bulk of the spectrum of \(H^\#+x_r A\) for \(r=1,2\). Then for the eigenvalues of \(\widehat{H}_t+x_r A\) it holds that
   \begin{equation}\label{eq:factora}
      \Cov \left[P\left(N \delta\lambda_{j_1}^{x_1}(t_1)\right), Q\left(N\delta\lambda_{j_2}^{x_2}(t_1)\right)\right]=\mathcal{O}\left(N^{-\zeta_2} \norm{P}_{C^1}\norm{Q}_{C^1} \right),
   \end{equation}
   with \(t_1=N^{-1+\omega_1}\) for any \(P,Q\colon\R\to\R\) bounded smooth test functions.

\end{proposition}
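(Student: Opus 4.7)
The plan is to run the Dyson Brownian motion (DBM) for $\widehat{H}_t$ and track simultaneously the two spectra $\bm\lambda^{x_1}(t)$ and $\bm\lambda^{x_2}(t)$ of $\widehat{H}_t + x_r A$. These satisfy coupled SDEs whose noise covariance is controlled by the squared eigenvector overlaps $\abs{\braket{\bm u_i^{x_1}(t),\bm u_j^{x_2}(t)}}^2$. The central idea, in the spirit of \cite{1912.04100}, is to show via a \emph{multi-resolvent local law} that these overlaps are atypically small — the \emph{eigenbasis rotation} effect — so that the two spectra can be coupled to two \emph{genuinely independent} DBMs driven by independent Brownian motions, on which single-gap universality and standard DBM rigidity give the claimed factorisation.

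\textbf{Step 1: SDE reduction and noise covariance.} For each $r=1,2$ one derives that $\lambda_i^{x_r}(t)$ solves the standard $\beta$-DBM with drift $\tfrac{1}{N}\sum_{j\ne i}(\lambda_i^{x_r}-\lambda_j^{x_r})^{-1}$, driven by scalar Brownian motions $B_i^{x_r}(t)$ satisfying
\[
\dif B_i^{x_1}(t)\,\dif B_j^{x_2}(t) \;=\; \abs*{\braket{\bm u_i^{x_1}(t),\bm u_j^{x_2}(t)}}^{2}\dif t
\]
(with the appropriate symmetric/Hermitian modification). Thus the entire obstruction to independence is the size of the overlap matrix $O_{ij}(t) := \abs{\braket{\bm u_i^{x_1}(t),\bm u_j^{x_2}(t)}}^2$ in the bulk region corresponding to indices near $j_1,j_2$.

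\textbf{Step 2: Smallness of overlaps via multi-resolvent local law.} Using the contour integral representation
\[
\sum_{\substack{\abs{i-j_1}\le K\\ \abs{j-j_2}\le K}} O_{ij}(t) \;\approx\; \frac{1}{(2\pi)^2}\oint\oint \Tr\bigl[G^{x_1}(z_1)G^{x_2}(z_2)\bigr]\dif z_1\dif z_2
\]
along contours of height $\eta\sim K/N$ around the relevant energies, one reduces the question to a multi-resolvent local law of the form $\Tr[G^{x_1}G^{x_2}] \approx \Tr[M^{x_1,x_2}]$, where the deterministic approximation $M^{x_1,x_2}$ solves a two-body MDE. The key algebraic input is that the kernel of this MDE contracts whenever the rotation $x_2-x_1$ acts nontrivially on the non-scalar part of $A$; quantitatively, the hypothesis $\braket{\mathring A^2}\ge c_0$ combined with $\abs{x_1-x_2}\ge N^{-\zeta_1}$ yields a bound $O_{ij}(t)\lesssim N^{-1-\zeta}$ for some $\zeta>0$, uniformly on the relevant window $\abs{i-j_1},\abs{j-j_2}\le K$ in the bulk. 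This is exactly the quantitative eigenbasis-rotation estimate advertised in the introduction.

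\textbf{Step 3: Coupling to independent DBMs and conclusion.} Define auxiliary processes $\widetilde{\bm\lambda}^{x_r}(t)$ starting from $\bm\lambda^{x_r}(0)$ and solving the $\beta$-DBM driven by \emph{independent} Brownian motions $\widetilde B_i^{x_r}$. Using Step 2, the original noises $(B_i^{x_1},B_j^{x_2})$ can be coupled to $(\widetilde B_i^{x_1},\widetilde B_j^{x_2})$ with quadratic variation discrepancy at most $N^{-\zeta}t$, hence by a Gronwall/energy estimate for the short-range restriction of DBM — in combination with rigidity for the initial data $H^\#+x_rA$ (available from \cite{MR3916109,MR3941370}) — one obtains
\[
\bigl|N\delta\lambda^{x_r}_{j_r}(t_1) - N\delta\widetilde\lambda^{x_r}_{j_r}(t_1)\bigr| \;\prec\; N^{-\zeta/2},\qquad r=1,2,
\]
on the scale $t_1 = N^{-1+\omega_1}$. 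Since the two families $\widetilde{\bm\lambda}^{x_1}$ and $\widetilde{\bm\lambda}^{x_2}$ are driven by independent Brownian motions, the covariance in \eqref{eq:factora} for the tilded variables factorises exactly, and the single-gap universality for each copy (another application of DBM short-time regularisation) replaces each marginal by $\braket{P}_\mathrm{gap},\braket{Q}_\mathrm{gap}$ up to $N^{-\zeta'}\norm{P}_{C^1}\norm{Q}_{C^1}$. Combining these two estimates proves \eqref{eq:factora}.

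\textbf{Main obstacle.} The crux is Step 2: proving the quantitative overlap bound $O_{ij}(t)\ll N^{-1}$ in the bulk. It requires a two-resolvent local law for $G^{x_1}(z_1)G^{x_2}(z_2)$ down to optimal imaginary parts $\eta\sim K/N$, together with the structural observation that the corresponding deterministic $M$-function is \emph{small} precisely when $\braket{\mathring A^2}\ge c_0$ and the shifts differ by at least $N^{-\zeta_1}$. This is the only place where $\braket{\mathring A^2}\ge c_0$ enters and where the restriction $\abs{x_1-x_2}\ge N^{-\zeta_1}$ (equivalently the smallness of $\zeta_1$) becomes essential; the remaining comparison between real and coupled DBM is standard once the overlap bound is in place.
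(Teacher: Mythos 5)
Your Steps 1 and 2 correctly identify the eigenvalue SDE structure and the fact that the obstruction to independence is governed by the cross-overlaps $\abs{\braket{\bm u_i^{x_1},\bm u_j^{x_2}}}^2$, to be controlled by a two-resolvent local law together with a stability bound of the form $\abs{1-\braket{M_1M_2^{(\ast)}}}\gtrsim \abs{x_1-x_2}^2\braket{\mathring A^2}$. That much matches the paper. However, Steps 2 and 3 both contain genuine errors.

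\textbf{The overlap estimate you claim is far too strong.} You assert $O_{ij}(t)\lesssim N^{-1-\zeta}$, i.e.\ a power-one saving over the trivial bound $\sum_j O_{ij}=1$. What the two-resolvent local law plus the stability lower bound actually deliver (see Proposition~\ref{prop evec orth} and Lemma~\ref{lem:stabop}) is $\abs{\braket{\bm u_{j_1}^{x_1},\bm u_{j_2}^{x_2}}}^2\prec N^{-1/15}\,\abs{1-\braket{M_1M_2^{(\ast)}}}^{-16/15}\lesssim N^{-\omega_E}$ for some \emph{small} $\omega_E>0$ depending on $\zeta_1$. In particular the squared overlap is only $N^{-\omega_E}$ small, nowhere near $N^{-1-\zeta}$. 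The entire DBM comparison in the paper is engineered to work with this weak decorrelation input, through the covariance matrix $C(t)=I+E(t)$ with $\norm{E}\le N^{-\omega_E}$ and the coupling $\dif\underline{\bm b}^{\mathrm{in}}=\sqrt{C}\dif\bm\theta$.

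\textbf{Your Step 3 is broken by the shared initial data.} You define $\widetilde{\bm\lambda}^{x_r}$ to start from $\bm\lambda^{x_r}(0)$, the eigenvalues of $H^\#+x_rA$, and then replace the driving Brownian motions by independent ones, concluding that "the covariance\ldots for the tilded variables factorises exactly." This is false: $\bm\lambda^{x_1}(0)$ and $\bm\lambda^{x_2}(0)$ are deterministic functions of the \emph{same} random matrix $H^\#$ and are therefore strongly correlated; swapping the drivers for independent Brownian motions does not decorrelate the initial conditions, so the joint law of the two gap observables does not factorise. To repair this you would have to condition on $H^\#$ and invoke a DBM universality result with fixed deterministic (rigid) initial data for each marginal; that is a nontrivial additional input you do not supply, and it is precisely what the paper circumvents. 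The paper instead constructs two comparison processes $\widetilde{\bm\lambda}^{(r)}(t)$ on the original probability space $\Omega_b$ and $\widetilde{\bm\mu}^{(r)}(t)$ on a fresh space $\Omega_\beta$, both started from \emph{independent GOE} initial conditions, arranges them to have the same joint law (eq.~\eqref{eq:samejointdistr}), and compares $\bm\lambda^{x_r}\to\widetilde{\bm\lambda}^{(r)}$ and $\widetilde{\bm\mu}^{(r)}\to\bm\mu^{(r)}$ using Bourgade's homogenisation theory (Proposition~\ref{pro:similCLT}), obtaining the gap-difference bound $\abs{\delta\lambda_{j_r}^{x_r}(t_1)-\delta\widetilde\lambda_{j_r}^{(r)}(t_1)}\le N^{-1-\omega}$. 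Your invocation of a "Gronwall/energy estimate" is also not adequate here: with overlap smallness only at scale $N^{-\omega_E}$, a naive Gronwall bound accumulates an error of order $(N^{-\omega_E}t_1)^{1/2}\gg N^{-1}$ over the window $t_1=N^{-1+\omega_1}$; one needs the smoothing of the discrete heat kernel $p_{t_1}$ and the cancellation in the differenced kernel (eq.~\eqref{eq:heatkernel}) to push the gap error below $N^{-1}$.
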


Using Proposition~\ref{pro:realmaintresa} as an input we readily conclude Proposition~\ref{pro:realmaintres}.
\begin{proof}[Proof of Proposition~\ref{pro:realmaintres}]
   Let \(H\) be the deformed Wigner matrix from Proposition~\ref{pro:realmaintres}, and consider the Ornstein-Uhlenbeck flow
   \begin{equation}\label{eq:matrou}
      \dif H_t=-\frac{1}{2}(H_t-\E H_0) \dif t+\frac{\dif B_t}{\sqrt{N}}, \qquad H_0=H,
   \end{equation}
   with \(B_t\) being a real symmetric or complex Hermitian standard Brownian motion independent of \(H_0\).

   Let \(H^\#_{t_1}\), with \(t_1\) from Proposition~\ref{pro:realmaintresa}, be such that
   \begin{equation}\label{eq:solou}
      H_{t_1}\stackrel{\mathrm{d}}{=}H^\#_{t_1}+\sqrt{c(t_1)t_1}U,
   \end{equation}
   with \(U\) a GOE/GUE matrix independent of \(H^\#_{t_1}\) and \(c=c(t_1)=1+\mathcal{O}(t_1)\) is an appropriate constant very close to one. Then by~\eqref{eq:solou} it follows that
   \begin{equation}\label{eq:imprel}
      \widehat{H}_{ct_1}\stackrel{\mathrm{d}}{=} H_{t_1},
   \end{equation}
   with \(\widehat{H}_{ct_1}\) being the solution of~\eqref{eq:matrdbm} with initial condition \(\widehat{H}_0=H^\#_{t_1}\).

   Then, by a standard GFT argument~\cite[Section 15]{MR3699468}, we have that
   \begin{equation}\label{eq:impGFT}
      \begin{split}
         &\Cov\left(P\left(N\delta\lambda_{j_1}^{x_1}\right), Q\left(N\delta\lambda_{j_2}^{x_2}\right)\right)\\
         &\qquad=\Cov\left(P\left(N\delta\lambda_{j_1}^{x_1}(ct_1)\right), Q\left(N\delta\lambda_{j_2}^{x_2}(ct_1)\right)\right)+
         \mathcal{O}\left(N^{-\zeta_2}\norm{P}_{C^5}\norm{Q}_{C^5}\right).
      \end{split}
   \end{equation}
   Finally, by~\eqref{eq:impGFT} together with~\eqref{eq:imprel} and Proposition~\ref{pro:realmaintresa} applied to \(H^\#:=  H^\#_{t_1}\) we conclude the proof of Proposition~\ref{pro:realmaintres}.
\end{proof}

\subsection{Proof of Proposition~\ref{pro:realmaintresa}}\label{sec:DBMDBM}
This proof is an adaptation of the proof of~\cite[Proposition 7.2]{1912.04100}
(which itself is based upon~\cite{MR3914908}) with two minor differences. First, the DBM in this paper is for eigenvalues (see~\eqref{eq:eigflow} below) while in~\cite[Eq. (7.15)]{1912.04100} it was for singular values.
Second,  in~\cite[Section 7]{1912.04100} it was sufficient to consider singular values close to zero hence
the base points \(j_1\) and \(j_2\) were fixed to be \(0\); here they are arbitrary.  Both changes
are simple to incorporate, so we present only the backbone of the
proof that shows the differences,  skipping certain steps that remain unaffected.

The flow~\eqref{eq:matrdbm} induces the following flow on the eigenvalues of \(\widehat{H}_t+x_l A\):
\begin{equation}\label{eq:eigflow}
   \dif\lambda_i^{x_r}(t)=\sqrt{\frac{2}{\beta N}}\dif b_i^{x_r}(t)+\frac{1}{N}\sum_{j\ne i}\frac{1}{\lambda_i^{x_r}(t)-\lambda_j^{x_r}(t)}\dif t,
\end{equation}
with \(r\in [2]\) and \(\beta=1, 2\) in the real and complex case, respectively. Here (omitting the time dependence) we used the notation
\begin{equation}
   \dif b_i^{x_r}=\sqrt{\frac{2}{\beta}}\sum_{a,b=1}^N \overline{{\bm u}_i^{x_r}(a)}\dif \widehat{B}_{ab}(t){\bm u}_i^{x_r}(b),
\end{equation}
with \({\bm u}_i^{x_r}(t)\) being the orthonormal eigenvectors of \(\widehat{H}_t+x_r A\). The collection \({\bm b}^{x_r}:=\{b_i^{x_r}\}_{i\in [N]}\), for fixed \(r\), consists of i.i.d standard real Brownian motions. However, the families \({\bm b}^{x_1}\), \({\bm b}^{x_2}\) are not independent for different \(r\)'s, in fact their joint distribution is not necessarily Gaussian. The quadratic covariation of these two processes is given by
\begin{equation}\label{eq:quadcov12}
   \dif[b_i^{x_1}(t),b_j^{x_2}(t)]=\big|\braket{{\bm u}_i^{x_1}(t),{\bm u}_j^{x_2}(t)}\big|^2\dif t.
\end{equation}
We remark that in~\eqref{eq:quadcov12} we used a different notation for the quadratic covariation compared to~\cite[Section 7.2.1]{1912.04100}.

\subsubsection{Definition of the comparison processes for \({\bm \lambda}^{x_r}\)}

To make the notation cleaner we only consider the real case (\(\beta=1\)). To prove
the asymptotic independence of
the processes \({\bm \lambda}^{x_1}\), \({\bm \lambda}^{x_2}\), realized on the probability space \(\Omega_b\),
we will compare them with two completely independent processes \({\bm \mu}^{(r)}(t)=\{\mu_i^{(r)}(t)\}_{i=1}^N\) realized on a different probability space \(\Omega_\beta\). The processes \({\bm \mu}^{(r)}(t)\) are the unique strong solution of
\begin{equation}\label{eq:indeppro}
   \dif \mu_i^{(r)}(t)=\sqrt{\frac{2}{N}}\dif \beta_i^{(r)}+\frac{1}{N}\sum_{j\ne i}\frac{1}{\mu_i^{(r)}(t)-\mu_j^{(r)}(t)}\dif t, \qquad \mu_i^{(r)}(0)=\mu_i^{(r)},
\end{equation}
with \(\mu_i^{(r)}\) being the eigenvalues of two independent GOE matrices \(H^{(r)}\), and \({\bm \beta}^{(r)}=\{\beta_i^{(r)}(t)\}_{i=1}^N\) being independent vectors of standard i.i.d. Brownian motions.

We now define two intermediate processes \(\widetilde{\bm \lambda}^{(r)}(t)\), \(\widetilde{\bm \mu}^{(r)}(t)\) so that for \(t\gg N^{-1}\) the particles \(\widetilde{\lambda}_i^{(r)}(t)\), \(\widetilde{\mu}_i^{(r)}(t)\) will be close to \(\lambda_i^{x_r}(t)\) and \(\mu_i^{x_r}(t)\), respectively, for indices \(i\) close to \(j_r\), with very high probability (see Lemmas~\ref{lem:lam}--\ref{lem:mu} below). Additionally, the processes \(\widetilde{\bm \lambda}^{(r)}(t)\), \(\widetilde{\bm \mu}^{(r)}(t)\),  which will be realized on two different probability spaces, will have the same joint distribution:
\begin{equation}\label{eq:samejointdistr}
   \left(\widetilde{\bm \lambda}^{(1)}(t),\widetilde{\bm \lambda}^{(2)}(t)\right)_{0\le t\le T}\stackrel{\mathrm{d}}{=}\left(\widetilde{\bm \mu}^{(1)}(t),\widetilde{\bm \mu}^{(2)}(t)\right)_{0\le t\le T}.
\end{equation}

Fix any small \(\omega_A>0\) (later \(\omega_A\) will be chosen smaller than \(\omega_E\) from~\eqref{eq:bover}) and define the process \(\widetilde{\bm \lambda}^{(r)}(t)\) to be the unique strong solution of
\begin{equation}\label{eq:complam}
   \dif\widetilde{\lambda}_i^{(r)}(t)=\frac{1}{N}\sum_{j\ne i}\frac{1}{\widetilde{\lambda}_i^{(r)}(t)-\widetilde{\lambda}_j^{(r)}(t)}\dif t+\begin{cases}
      \sqrt{\frac{2}{N}}\dif b_i^{x_r}             & \text{if}\quad |i-j_r|\le N^{\omega_A}, \\
      \sqrt{\frac{2}{N}}\dif \widetilde{b}_i^{(r)} & \text{if}\quad |i-j_r|> N^{\omega_A},
   \end{cases}
\end{equation}
with initial data \(\widetilde{\bm \lambda}^{(r)}(0)\) being the eigenvalues of independent GOE matrices, which are also independent of \(H^\#\) in~\eqref{eq:matrdbm}.
Here the Brownian motions
\begin{equation}
   \underline{\bm b}^\mathrm{in} := (b_{j_1-N^{\omega_A}}^{x_1},\ldots, b_{j_1+N^{\omega_A}}^{x_1}, b_{j_2-N^{\omega_A}}^{x_2},\dots, b_{j_2+N^{\omega_A}}^{x_2}).
\end{equation}
for indices close to \(j_r\) are exactly the ones in~\eqref{eq:eigflow}. For indices away from \(j_r\) we define the driving Brownian motions to be an independent family
\begin{equation}\label{eq:bmout}
   \underline{\bm b}^\mathrm{out}:=\set*{\widetilde{b}_i^{(r)}\given |i-j_r|> N^{\omega_A}, r\in [2]}.
\end{equation} of standard real i.i.d.\ Brownian motions which are also independent of \(\underline{\bm b}^\mathrm{in}\). The Brownian motions \(\underline{\bm b}^\mathrm{out}\) are defined on the same probability space of \(\underline{\bm b}^\mathrm{in}\), which we will still denote by \(\Omega_b\), with a slight abuse of notation.

For any \(i,j\in [4N^{\omega_A}+2]\) we use the notation
\begin{equation}
   i=(r-1)N^{\omega_A}+\mathfrak{i}, \qquad j=(m-1)N^{\omega_A}+\mathfrak{j}
\end{equation}
with \(r,m\in [2]\) and \(\mathfrak{i},\mathfrak{j}\in [2N^{\omega_A}+1]\). The covariance matrix \(C(t)\) of the increments of \(\underline{\bm b}^\mathrm{in}\), consisting of four blocks of size \(2N^{\omega_A}+1\), is given by
\begin{equation}\label{eq:defcov}
   C_{ij}(t)\dif t:=\dif[\underline{\bm b}^\mathrm{in}_i, \underline{\bm b}^\mathrm{in}_j]=\Theta_{\mathfrak{i}\mathfrak{j}}^{x_r,x_m}(t)\dif t,
\end{equation}
where
\begin{equation}\label{eq:thetaover}
   \Theta_{\mathfrak{i}\mathfrak{j}}^{x_r,x_m}(t):= \big|\braket{{\bm u}_{\mathfrak{i}+j_r-N^{\omega_A}-1}^{x_r}(t),{\bm u}_{\mathfrak{j}+j_m-N^{\omega_A}-1}^{x_m}(t)}\big|^2
\end{equation}
and \(\{{\bm u}_i^{x_r}(t)\}_{i=1}^N\) are the orthonormal eigenvectors of \(\widehat{H}_t+x_r A\). Note that \(\{{\bm u}_i^{x_r}(t)\}_{i=1}^N\) are not well-defined if \(\widehat{H}_t+x_r A\) has multiple eigenvectors, however, without loss of generality, we can assume that almost surely \(\widehat{H}_t+x_l A\) does not have multiple eigenvectors for any \(r\in[2]\) for almost all \(t\in [0,T]\) by~\cite[Proposition 2.3]{MR4009717} together with Fubini's theorem. By Doob's martingale representation theorem~\cite[Theorem 18.12]{MR1876169} there exists a real standard Brownian motion \({\bm \theta}(t)\in\R^{4N^{\omega_A}+2}\) such that
\begin{equation}\label{b theta}
   \dif \underline{\bm b}^{\mathrm{in}}=\sqrt{C}\dif {\bm \theta}.
\end{equation}

Similarly, on the probability space \(\Omega_\beta\) we define the comparison process \(\widetilde{\bm \mu}^{(r)}(t)\) to be the solution of
\begin{equation}\label{eq:compmu}
   \dif\widetilde{\mu}_i^{(r)}(t)=\frac{1}{N}\sum_{j\ne i}\frac{1}{\widetilde{\mu}_i^{(r)}(t)-\widetilde{\mu}_j^{(r)}(t)}\dif t+\begin{cases}
      \sqrt{\frac{2}{N}}\dif \zeta_i^{(r)}             & \text{if}\quad |i-j_r|\le N^{\omega_A}, \\
      \sqrt{\frac{2}{N}}\dif \widetilde{\zeta}_i^{(r)} & \text{if}\quad |i-j_r|> N^{\omega_A},
   \end{cases}
\end{equation}
with initial data \(\widetilde{\bm \mu}^{(r)}(0)\) being the eigenvalues of independent GOE matrices defined on the probability space \(\Omega_\beta\), which are also independent of \(H^{(r)}\). We now construct the driving Brownian motions in~\eqref{eq:compmu} so that~\eqref{eq:samejointdistr} is satisfied. For indices away from \(j_r\) the standard real Brownian motions
\begin{equation}
   \underline{\bm\zeta}^\mathrm{out} := \set*{\widetilde{\zeta}_i^{(r)} \given \abs{i-j_r}>N^{w_A}, r\in[2]}
\end{equation}
are i.i.d.\ and they are independent of \({\bm \beta}^{(1)}\), \({\bm \beta}^{(2)}\) in~\eqref{eq:indeppro}. For indices \(|i-j_r|\le N^{\omega_A}\) the collections
\begin{equation}
   \underline{\bm\zeta}^\mathrm{in} := (\zeta_{j_1-N^{\omega_A}}^{(1)},\dots, \zeta_{j_1+N^{\omega_A}}^{(1)}, \zeta_{j_2-N^{\omega_A}}^{(2)},\dots, \zeta_{j_2+N^{\omega_A}}^{(2)})
\end{equation}
will be constructed from the independent families\footnote{The families \(\underline{\bm \beta}^{\mathrm{in}}\), \(\underline{\bm b}^{\mathrm{in}}\) were denoted by \(\underline{\bm \beta}\) and \(\underline{\bm b}\), respectively, in~\cite[Eqs. (7.22)-(7.23)]{1912.04100}.}
\begin{equation}
   \underline{\bm\beta}^{\mathrm{in}}:=(\beta_{j_1-N^{\omega_A}}^{(1)},\dots, \beta_{j_1+N^{\omega_A}}^{(1)}, \beta_{j_2-N^{\omega_A}}^{(2)},\dots, \beta_{j_2+N^{\omega_A}}^{(2)}),
\end{equation}
as follows.

Since the original process \({\bm \lambda}^{x_r}(t)\) and the comparison processes \({\bm \mu}^{(r)}(t)\) are realized on two different probability spaces, we construct a matrix valued process \(C^\#(t)\) and a vector-valued Brownian motion \(\underline{\bm \beta}^{\mathrm{in}}\) on the probability space \(\Omega_\beta\) such that \((C^\#(t),\underline{\bm \beta}^{\mathrm{in}}(t))\) have the same joint distribution as \((C(t),{\bm \theta}(t))\) with \(C,\bm\theta\) from~\eqref{b theta}. This \(\underline{\bm \beta}^{\mathrm{in}}\) is the driving Brownian motion of the \({\bm \mu}^{(r)}(t)\) process in~\eqref{eq:indeppro}. Define the process
\begin{equation}\label{eq:zetain}
   \underline{{\bm \zeta}}^{\mathrm{in}}(t):=\int_0^t \sqrt{C^\#(s)}\dif \underline{\bm \beta}^{\mathrm{in}}(s)
\end{equation}
on the probability space \(\Omega_\beta\). By construction we see that the processes \(\underline{\bm b}^{\mathrm{in}}\) and \(\underline{\bm \zeta}^{\mathrm{in}}\) have the same distribution, and that the two collections \(\underline{\bm b}^{\mathrm{out}}\) and \(\underline{\bm \zeta}^{\mathrm{out}}\) are independent of \(\underline{\bm b}^{\mathrm{in}}\), \(\underline{\bm \beta}^{\mathrm{in}}\) and among each other. Hence we conclude that
\begin{equation}\label{eq:allind}
   \big(\underline{\bm b}^{\mathrm{in}}(t),\underline{\bm b}^{\mathrm{out}}(t)\big)_{0\le t\le T}\stackrel{\mathrm{d}}{=}\big(\underline{\bm \zeta}^{\mathrm{in}}(t),\underline{\bm \zeta}^{\mathrm{out}}(t)\big)_{0\le t\le T}.
\end{equation}
Finally, by the definitions in~\eqref{eq:complam},~\eqref{eq:compmu} and by~\eqref{eq:allind}, we conclude that the processes \(\widetilde{\bm \lambda}^{(r)}(t)\), \(\widetilde{\bm \mu}^{(r)}(t)\) have the same joint distribution (see~\eqref{eq:samejointdistr}), since their initial conditions and their driving processes~\eqref{eq:allind} agree in distribution.

\subsubsection{Proof of the asymptotic independence of the eigenvalues}
In this section we use that the processes \({\bm \lambda}^{x_r}(t)\), \(\widetilde{\bm \lambda}^{(r)}(t)\) and \(\widetilde{\bm \mu}^{(r)}(t)\),  \({\bm \mu}^{(r)}(t)\) are close pathwise at time \(t_1=N^{-1+\omega_1}\) as stated below in Lemma~\ref{lem:lam} and Lemma~\ref{lem:mu}, respectively, to conclude the proof of Proposition~\ref{pro:realmaintresa}. The proof of these lemmas is completely analogous to the proof in~\cite[Lemmas 7.6--7.7]{1912.04100},~\cite[Eq. (3.7), Theorem 3.1]{MR3914908}, hence we will only explain the very minor differences required in this paper. First, we compare the processes \({\bm \lambda}^{x_r}(t)\), \(\widetilde{\bm \lambda}^{(r)}(t)\), in particular this lemma shows that for \(i\) far away from \(j_1,j_2\) the Brownian motions \(b_i^{x_1},b_i^{x_2}\) can be replaced by the independent Brownian motions from \(\underline{\bm b}^{\mathrm{out}}\) at a negligible error.
\begin{lemma}\label{lem:lam}
   Let \({\bm \lambda}^{x_r}(t)\), \(\widetilde{\bm \lambda}^{(r)}(t)\), with \(r\in [2]\), be the processes defined in~\eqref{eq:eigflow} and~\eqref{eq:complam}, respectively. For any small \(\omega_1>0\) there exists \(\omega>0\), with \(\omega\ll \omega_1\), such that it holds
   \begin{equation}
      \left|\rho^{x_r}(\gamma_{j_r}^{x_r})\delta\lambda_{j_r}^{x_r}(t_1)-\rho_{\mathrm{sc}}(\gamma_{j_r})\delta\widetilde{\lambda}_{j_r}^{(r)}(t_1)\right|\le N^{-1-\omega},
   \end{equation}
   for any \(j_r\) in the \(c_1\)-bulk, with very high probability on the probability space \(\Omega_b\), where \(t_1:=N^{-1+\omega_1}\). Here by \(\gamma_{j_r}\) we denoted the \(j_r\)-quantile of the semicircular law.
\end{lemma}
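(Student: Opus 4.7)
The plan is to follow the short-range Dyson Brownian motion coupling strategy of Landon--Yau and Bourgade--Erd\H{o}s--Yau--Yin that the authors explicitly cite, adapted from the setting of singular values at a fixed base index $j=0$ in \cite{1912.04100} to eigenvalues of deformed Wigner matrices at an arbitrary bulk index $j_r$. The two processes $\bm\lambda^{x_r}(t)$ and $\widetilde{\bm\lambda}^{(r)}(t)$ obey essentially the same DBM \eqref{eq:eigflow}/\eqref{eq:complam} and differ in three aspects: the initial data ($\widehat H_0 + x_rA$ versus an independent GOE matrix), the driving Brownian motions are decoupled for $|i-j_r|>N^{\omega_A}$, and the equilibrium densities are $\rho^{x_r}$ versus $\rho_{\mathrm{sc}}$. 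The short-range method is designed precisely to absorb all three discrepancies on the microscopic window around $j_r$ in the short time $t_1=N^{-1+\omega_1}$.

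The first step is to pick a short-range cut-off $\ell=N^{\omega_\ell}$ with $\omega_1\ll \omega_\ell\ll \omega_A$ and truncate the interaction sum in both DBMs to $|i-j|\le \ell$, producing localised processes $\widehat{\bm\lambda}^{x_r}$ and $\widehat{\widetilde{\bm\lambda}}^{(r)}$. Using the optimal bulk rigidity $|\lambda_i^{x_r}(t)-\gamma_i^{x_r}|\prec 1/N$ coming from the local law for $\widehat H_t+x_rA$ of \cite{MR3916109}, one shows that the dropped long-range drift at indices close to $j_r$ is essentially affine in $i$ on scale $\ell$, identical for the full and the short-range processes up to error $N^{-1-\omega}$, and therefore cancels in the \emph{gap} $\delta\lambda_{j_r}$. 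This reduces the problem to comparing $\widehat{\bm\lambda}^{x_r}$ and $\widehat{\widetilde{\bm\lambda}}^{(r)}$, both of which are \emph{local} and driven by the same Brownian motions $b^{x_r}_i$ at every index in the relevant window.

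The second step is the coupling itself. After a deterministic affine rescaling around $\gamma_{j_r}^{x_r}$ with slope $\rho^{x_r}(\gamma_{j_r}^{x_r})/\rho_{\mathrm{sc}}(\gamma_{j_r})$ to match the two local densities, the difference of the short-range processes satisfies a discrete parabolic equation with uniformly elliptic, random coefficients $B_{ij}(t)\sim N^2(\gamma_j-\gamma_i)^{-2}$ obtained from the mean-value theorem applied to the interaction kernel $1/(x-y)$ together with rigidity. The driving noises cancel exactly thanks to the common coupling on the short-range window, so the only remaining source of mismatch is the initial condition. I would then invoke the De Giorgi--Nash--Moser type H\"older regularity available in the DBM context (\cite[Theorem 3.1]{MR3914908}, going back to the techniques of \cite[Section 8]{MR3372074}), which dissipates the $\mathcal O(1)$-scale initial mismatch at a polynomial rate $(Nt_1)^{-\omega'}=N^{-\omega_1\omega'}$, yielding the required bound $N^{-1-\omega}$ for some $\omega\ll \omega_1$.

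The main obstacle, and essentially the only point where the argument is more delicate than in \cite{1912.04100}, is the density rescaling $\rho^{x_r}(\gamma_{j_r}^{x_r})\neq\rho_{\mathrm{sc}}(\gamma_{j_r})$ which produces the two distinct prefactors in the statement. I expect the resolution to be the construction of a local deterministic quantile map $F_{x_r}$ with $F_{x_r}(\gamma_i^{x_r})=\gamma_i$ on the $\ell$-window around $j_r$, together with the verification that $F_{x_r}$ is close enough to an affine map on this window (again using smoothness of $\rho^{x_r}$ in the bulk and rigidity) that the rescaled short-range DBM for $\bm\lambda^{x_r}$ coincides with the short-range DBM for $\widetilde{\bm\lambda}^{(r)}$ up to a perturbation of size $N^{-1-\omega}$. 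Combined with the H\"older regularity input this then concludes the proof.
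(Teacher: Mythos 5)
Your proposal is essentially the paper's proof, just with the internal machinery spelled out rather than packaged into an intermediate lemma. The paper reduces Lemma~\ref{lem:lam} to Proposition~\ref{pro:similCLT}, which is precisely the short-range-truncated, density-rescaled homogenisation formula of \cite[Theorem~3.1]{MR3914908} adapted to two DBMs with strongly coupled driving noise; the short-range cut-off, the linearised discrete parabolic equation with elliptic kernel, and the absorption of the density mismatch via a local quantile map are all carried out inside that proposition (and its predecessor \cite[Proposition~7.14]{1912.04100}). What you call ``De~Giorgi--Nash--Moser H\"older regularity'' and attribute to \cite[Theorem~3.1]{MR3914908} is more precisely an explicit heat-kernel representation of the solution; the paper then applies it at the two neighbouring indices $j_r$ and $j_r+1$, subtracts, and uses heat-kernel smoothness (second line of \eqref{eq:heatkernel}, i.e.\ $|\partial_y p_{t_1}|\lesssim (N t_1)^{-1}p_{t_1}$) together with rigidity $|\lambda_i^{x_r}-\gamma_i^{x_r}|\le N^{-1+\xi}$ to get the required $N^{-1-\omega}$ gap bound. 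That is the same mechanism as your ``dissipation of the $O(1)$ initial mismatch at rate $(Nt_1)^{-\omega'}$'', so there is no real gap. One technical remark: in your step~1 you drop the long-range drift before rescaling by the density, arguing the long-range part is ``identical'' for the two processes. Since the equilibrium densities $\rho^{x_r}$ and $\rho_{\mathrm{sc}}$ differ macroscopically, you should first apply the affine (or quantile-map) rescaling before comparing the long-range drifts; the paper avoids this bookkeeping by letting Proposition~\ref{pro:similCLT} carry the two density prefactors from the start.
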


Second, we compare the processes \(\widetilde{\bm \mu}^{(r)}(t)\), \({\bm \mu}^{(r)}(t)\), i.e.\ we control the error made by replacing the weakly correlated Brownian motions \(\underline{\bm \zeta}^{\mathrm{in}}\) by the independent Brownian motions \(\underline{\bm \beta}^{\mathrm{in}}\).
\begin{lemma}\label{lem:mu}
   Let \({\bm \mu}^{(r)}(t)\), \(\widetilde{\bm \mu}^{(r)}(t)\), with \(r\in [2]\), be the processes defined in~\eqref{eq:indeppro} and~\eqref{eq:compmu}, respectively. For any small \(\omega_1,\zeta_1>0\) there exists \(\omega>0\), with \(\omega\ll \omega_1\), such that for any \(N^{-\zeta_1}\le |x_1-x_2|\le c^*\) it holds
   \begin{equation}
      \abs*{\delta\mu_{j_r}^{(r)}(t_1)-\delta\widetilde{\mu}_{j_r}^{(r)}(t_1)}\le N^{-1-\omega},
   \end{equation}
   with very high probability on the probability space \(\Omega_\beta\), where \(t_1:=N^{-1+\omega_1}\).
\end{lemma}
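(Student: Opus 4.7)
The proof follows the coupling argument of \cite{MR3914908}, executed here exactly as in \cite[Lemma~7.7]{1912.04100}; the only real change is that the base indices $j_1,j_2$ are now arbitrary bulk indices instead of being $0$, which does not affect any local DBM estimate. First I would couple the ``outer'' families by identifying $\widetilde b_i^{(r)}=\widetilde\zeta_i^{(r)}$ for $|i-j_r|>N^{\omega_A}$---both are i.i.d.\ standard Brownian motions independent of all other data---so that $\bm\mu^{(r)}$ and $\widetilde{\bm\mu}^{(r)}$ differ only through the stochastic forcing inside the window. Inside the window the increment of $\underline{\bm\zeta}^{\mathrm{in}}$ differs from that of $\underline{\bm\beta}^{\mathrm{in}}$ precisely by $(\sqrt{C^\#(t)}-I)\,\dif\underline{\bm\beta}^{\mathrm{in}}$, so the entire argument reduces to showing that $C^\#(t)$ is close to the identity in operator norm on $[0,T]$.

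\textbf{Closeness of $C^\#$ to the identity.} The diagonal $(2N^{\omega_A}+1)\times(2N^{\omega_A}+1)$ blocks of $C$ equal the identity exactly because $\Theta^{x_r,x_r}_{\mathfrak i\mathfrak j}=|\braket{\bm u_i^{x_r},\bm u_j^{x_r}}|^2=\delta_{\mathfrak i\mathfrak j}$ by orthonormality of the eigenvectors of a single matrix. The two off-diagonal blocks are governed by the cross overlaps $|\braket{\bm u_i^{x_1}(t),\bm u_j^{x_2}(t)}|^2$, and the key input is a polynomial smallness
\begin{equation*}
|\braket{\bm u_i^{x_1}(t),\bm u_j^{x_2}(t)}|^2 \le N^{-1-\tau}
\end{equation*}
for some $\tau>0$, uniformly in bulk indices $i,j$ and in $t\in[0,T]$, with very high probability. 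This is extracted from a two-resolvent local law for $G^{x_1}(z_1)\,G^{x_2}(z_2)$ combined with a spectral decomposition at scale $\eta\sim N^{-1+\xi}$: the deterministic approximation of the product gains a factor that shrinks with $|x_1-x_2|$, and this gain is nontrivial precisely because $\braket{\mathring A^2}\ge c_0$ and $|x_1-x_2|\ge N^{-\zeta_1}$. Choosing $\omega_A\ll\tau$ and summing over the at most $(2N^{\omega_A}+1)^2$ cross-block entries then yields $\|C^\#(t)-I\|_{\mathrm{op}}\le N^{-\omega'}$ for some $\omega'>0$.

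\textbf{From the covariance estimate to the gap estimate.} With the stochastic forcing of $v_i:=\mu_i^{(r)}-\widetilde\mu_i^{(r)}$ controlled by $\sqrt{2/N}\,\|I-\sqrt{C^\#}\|_{\mathrm{op}}$, one applies the short-range/long-range decomposition of the linearised DBM operator and the energy method of \cite[Theorem~3.1]{MR3914908}. The a priori inputs required---rigidity at scale $N^{-1+\xi}$ and the usual level repulsion estimates---are available throughout $[0,t_1]$ because $\bm\mu^{(r)}$ starts from a genuine GOE matrix. This propagates the smallness of the forcing into the single-gap bound $|\delta\mu^{(r)}_{j_r}(t_1)-\delta\widetilde\mu^{(r)}_{j_r}(t_1)|\le N^{-1-\omega}$ with $\omega\ll\omega_1$. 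The only genuine obstacle is the cross-overlap bound above, which is where the hypotheses $\braket{\mathring A^2}\ge c_0$ and $|x_1-x_2|\ge N^{-\zeta_1}$ enter; beyond this input the argument is a literal translation of \cite[Lemma~7.7]{1912.04100} with $j_r$ in place of $0$.
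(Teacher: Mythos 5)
The overall architecture of your argument (couple outer Brownian motions, reduce to smallness of $C^\#-I$, feed this into a homogenisation-type estimate as in~\cite{MR3914908}) matches the paper, which packages the coupled homogenisation in Proposition~\ref{pro:similCLT} and then applies the heat-kernel estimates~\eqref{eq:heatkernel}. However, the quantitative claim you identify as \emph{the} key input is wrong, and in a way that matters conceptually.

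You assert the cross-overlap bound
\[
\abs{\braket{\bm u_i^{x_1}(t),\bm u_j^{x_2}(t)}}^2 \le N^{-1-\tau}
\]
uniformly in bulk indices. This is far stronger than what is true or provable, and in fact is false: $N^{-1-\tau}$ is the scale at which two \emph{independent} random unit vectors overlap, whereas here $H^{x_1}$ and $H^{x_2}=H^{x_1}+(x_2-x_1)A$ share the same randomness $H$ and differ only by a deterministic shift of size $|x_1-x_2|$, which can be as small as $N^{-\zeta_1}$. In that regime the eigenbases are far from independent. The correct statement, and what the paper actually proves (Proposition~\ref{pro:bover} via Propositions~\ref{prop evec orth} and~\ref{prop local law} together with Lemma~\ref{lem:stabop}), is
\[
\abs{\braket{\bm u_{j_1}^{x_1},\bm u_{j_2}^{x_2}}} \le N^{-\omega_E}
\]
with very high probability for some \emph{small} $\omega_E>0$, tracking back to the explicit estimate $\abs{\braket{\cdot,\cdot}}^2 \prec N^{-1/15}\abs{1-\braket{M^{x_1}M^{x_2\,(*)}}}^{-16/15}$ and the stability bound $\abs{1-\braket{M_1M_2^{(*)}}}\gtrsim \abs{x_1-x_2}^2\braket{\mathring A^2}\gtrsim N^{-2\zeta_1}$. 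So the exponent is at most of order $1/15$ minus something, never close to $1$.

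The good news is that the polynomially small bound $N^{-\omega_E}$ is exactly what is needed: the only role it plays is to verify Assumption~\ref{ass:close}~\ref{close3} with $\omega_Q=\omega_E$, i.e.\ an entrywise bound $\abs{L_{ij}}\le N^{-\omega_Q}$, where $L_{ij}$ is the quadratic covariation of the difference of the driving Brownian motions. Equivalently, with your formulation, one needs $\norm{C^\#-I}_{\mathrm{op}}\le N^{-\omega'}$; since the off-diagonal blocks of $C^\#$ have entries bounded by $N^{-2\omega_E}$ and the block size is $\sim N^{\omega_A}$ with $\omega_A\ll\omega_E$, this follows. Your condition $\omega_A\ll\tau$ would be the right shape if you replaced $N^{-1-\tau}$ by $N^{-2\omega_E}$; as written with $N^{-1-\tau}$ the condition is vacuous and the argument does not make contact with the true difficulty of the problem. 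You should explicitly flag that the overlap smallness is only a weak power-law gain coming from the \emph{eigenbasis rotation} mechanism, not from delocalisation, and that this weak gain is nevertheless sufficient because it beats the window size $N^{\omega_A}$.

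Two smaller points. First, the paper formulates the required smallness entrywise (Assumption~\ref{ass:close}~\ref{close3}) and does not pass through the operator norm of $C^\#-I$; your operator-norm route works but needs the observation $\abs{[(\sqrt{C^\#}-I)(\sqrt{C^\#}-I)^T]_{ij}}\le\norm{\sqrt{C^\#}-I}^2$ and the elementary bound $\norm{\sqrt{C^\#}-I}\lesssim\norm{C^\#-I}$ for $C^\#$ near $I$, which you should spell out. Second, your final paragraph gestures at the energy method and short-range/long-range decomposition but omits the concrete last step of the paper: Proposition~\ref{pro:similCLT} is applied twice (once with $i=0$, once with $i=1$), the two are subtracted, and the heat-kernel smoothing bounds~\eqref{eq:heatkernel} together with rigidity of the initial data are what turn the forcing smallness into the $N^{-1-\omega}$ gap estimate. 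Without that telescoping the gap bound does not follow directly from the single-particle homogenisation statement.
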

The key ingredient for the proof of Lemma~\ref{lem:mu} is the following fundamental bound on the eigenvector overlaps in~\eqref{eq:bover} proven in Section~\ref{sec:overlap}, which ensures that the correlation \(\Theta_{\mathfrak{i}\mathfrak{j}}^{x_r,x_m}\) in~\eqref{eq:thetaover} is small.
\begin{proposition}\label{pro:bover} Given \(c_0, c_1\) as in Proposition~\ref{pro:realmaintres},
   assume \(\braket{\mathring{A}^2}\ge c_0\), \(\| A\|\lesssim 1\).
   There exists \(c^*\) depending on \(c_0, c_1\)  such that the following holds
   for any  small \(\zeta_1>0\).
   Pick \(x_1, x_2\) such that \(N^{-\zeta_1}\le |x_1-x_2|\le c^*\), and
   let \(\{{\bm u}_i^{x_r}\}_{i\in [N]}\), for \(r\in [2]\), be the orthonormal eigenbasis of the matrices \(H+x_r A\).
   Then there exists \(\omega_E>0\) such that
   \begin{equation}\label{eq:bover}
      |\braket{{\bm u}_{j_1}^{x_1},{\bm u}_{j_2}^{x_2}}|\le N^{-\omega_E} %
   \end{equation}
   with very high probability for any \(j_1, j_2\) in the \(c_1\)-bulk  with \(|j_1-j_2|\lesssim N|x_1-x_2|\).
\end{proposition}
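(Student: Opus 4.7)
The plan is to reduce the single-overlap bound to a two-resolvent trace, apply a multi-resolvent local law (MRLL) for the resolvents $G^{x_r}(z) := (H+x_rA-z)^{-1}$, and then exploit the structural smallness of the deterministic approximation when $|x_1-x_2|$ is not too small. This realises the \emph{eigenbasis rotation} mechanism described in the introduction.

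\emph{Spectral reduction.} Set $z_r := E_r + \ii\eta$ with $E_r := \gamma_{j_r}^{x_r}$ and $\eta := N^{-1+\epsilon}$ for a small $\epsilon>0$ to be chosen. The spectral decomposition of $\Im G^{x_r}(z_r)$ gives
\begin{equation*}
\Tr[\Im G^{x_1}(z_1)\Im G^{x_2}(z_2)] = \sum_{i,j}\frac{\eta^2\abs{\braket{{\bm u}_i^{x_1},{\bm u}_j^{x_2}}}^2}{\bigl((\lambda_i^{x_1}-E_1)^2+\eta^2\bigr)\bigl((\lambda_j^{x_2}-E_2)^2+\eta^2\bigr)}.
\end{equation*}
By bulk rigidity~\eqref{eq:rig} applied to $H^{x_r}$, one has $\abs{\lambda_{j_r}^{x_r}-E_r}\le N^\xi/N\ll\eta$ with very high probability, so keeping only the $(i,j)=(j_1,j_2)$ term yields
\begin{equation*}
\abs{\braket{{\bm u}_{j_1}^{x_1},{\bm u}_{j_2}^{x_2}}}^2 \lesssim \eta^2\,\Tr[\Im G^{x_1}(z_1)\Im G^{x_2}(z_2)].
\end{equation*}

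\emph{Two-resolvent local law and stability.} Next I would establish an MRLL
\begin{equation*}
\frac{1}{N}\bigl|\Tr[\Im G^{x_1}(z_1)\Im G^{x_2}(z_2)] - \Tr[\Im\Lambda(z_1,z_2)]\bigr| \le N^{-\kappa_0}
\end{equation*}
with very high probability on the spectral scale $\eta=N^{-1+\epsilon}$, where $\Lambda(z_1,z_2)$ is the deterministic approximation produced by the coupled Matrix Dyson Equation for $G^{x_1}(z_1)\,\Im\,G^{x_2}(z_2)$. Its linear stability operator has the schematic form $1-\mathcal{S}_{12}$ with $\mathcal{S}_{12}[R]=M^{x_1}(z_1)\braket{R}M^{x_2}(z_2)$ in the pure Wigner case (modified in the obvious way under Assumption~\ref{ass:entr}). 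The structural input is the quantitative invertibility
\begin{equation*}
\norm{(1-\mathcal{S}_{12})^{-1}}\lesssim \frac{1}{\abs{x_1-x_2}\,\braket{\mathring A^2}^{1/2}} \le \frac{N^{\zeta_1}}{c_0^{1/2}},
\end{equation*}
which uses the lower bound $\braket{\mathring A^2}\ge c_0$ together with the uniform bulk bounds $\norm{M^{x_r}(z_r)}+\norm{(\Im M^{x_r}(z_r))^{-1}}\lesssim 1$ from the MDE theory of~\cite{MR4164728}. The hypothesis $\abs{j_1-j_2}\lesssim N\abs{x_1-x_2}$ enters via Lemma~\ref{lem:gamma} to force $\abs{E_1-E_2}\lesssim\abs{x_1-x_2}$, keeping the two spectral parameters in the natural window where this stability bound applies. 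The conclusion is that $\tfrac{1}{N}\abs{\Tr[\Im\Lambda(z_1,z_2)]}\lesssim N^{\zeta_1}$ \emph{uniformly in $\eta$}, i.e.\ without the $\eta^{-1}$ divergence that would be present for $x_1=x_2$.

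\emph{Conclusion and main obstacle.} Combining the two steps yields
\begin{equation*}
\abs{\braket{{\bm u}_{j_1}^{x_1},{\bm u}_{j_2}^{x_2}}}^2 \lesssim \eta^2\,N\,\bigl(N^{\zeta_1}+N^{-\kappa_0}\bigr) \lesssim N^{-1+2\epsilon+\zeta_1},
\end{equation*}
and choosing $\epsilon,\zeta_1$ sufficiently small (and $\kappa_0>\zeta_1$) fixes a positive $\omega_E$ with $\abs{\braket{{\bm u}_{j_1}^{x_1},{\bm u}_{j_2}^{x_2}}}\le N^{-\omega_E}$. The hard part will be the MRLL itself: the single-resolvent local laws of~\cite{MR3916109} do not suffice, and I expect the proof to proceed via a cumulant expansion keeping two resolvents at distinct parameters $z_1,z_2$ simultaneously, controlling the resulting self-consistent system down to the scale $\eta=N^{-1+\epsilon}$. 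The delicate point is to match the MRLL error exponent $\kappa_0$ against the structural loss $N^{\zeta_1}$ arising from the near-degenerate stability operator, and it is precisely this trade-off that dictates the quantitative restriction $\abs{x_1-x_2}\ge N^{-\zeta_1}$ here and ultimately the restriction $a\in[0,a_0]$ in Theorem~\ref{thm1}.
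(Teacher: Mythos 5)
Your proposal follows essentially the same route as the paper: spectral decomposition and rigidity to reduce the overlap to the two-resolvent quantity $\eta^2\Tr\Im G^{x_1}\Im G^{x_2}$, a multi-resolvent local law giving the deterministic approximation controlled by the stability factor $\delta=\abs{1-\braket{M_1M_2}}$, and a structural lower bound on $\delta$ coming from $\braket{\mathring A^2}\ge c_0$ via $\abs{x_1-x_2}\ge N^{-\zeta_1}$ (with $\abs{j_1-j_2}\lesssim N\abs{x_1-x_2}$ entering through Lemma~\ref{lem:gamma} exactly as you anticipate). This is precisely how the paper factors the argument through Propositions~\ref{prop evec orth} and~\ref{prop local law} together with Lemma~\ref{lem:stabop}. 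One quantitative inaccuracy: you claim $\norm{(1-\mathcal{S}_{12})^{-1}}\lesssim \bigl(\abs{x_1-x_2}\braket{\mathring A^2}^{1/2}\bigr)^{-1}$, but since $\mathcal{S}_{12}[R]=M_1\braket{R}M_2$ has rank one, $\norm{(1-\mathcal{S}_{12})^{-1}}\sim\delta^{-1}$ and Lemma~\ref{lem:stabop} gives the \emph{quadratic} lower bound $\delta\gtrsim\abs{x_1-x_2}^2\braket{\mathring A^2}+\mathcal{O}(\abs{x_1-x_2}^3)$, coming from Taylor-expanding $M_1-M_2$ inside $\Re[1-\braket{M_1M_2^*}]\sim\braket{(M_1-M_2)(M_1-M_2)^*}$; the inverse is therefore only $\lesssim\abs{x_1-x_2}^{-2}\braket{\mathring A^2}^{-1}\lesssim N^{2\zeta_1}$. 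This, together with the paper's optimised choice $\eta=(N\delta)^{-4/5}$ rather than $\eta=N^{-1+\epsilon}$, changes the resulting numerical value of $\omega_E$ but not the qualitative conclusion, so your argument is sound modulo that exponent.
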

Using Lemmas~\ref{lem:lam}--\ref{lem:mu} as an input we conclude Proposition~\ref{pro:realmaintresa}.
\begin{proof}[Proof of Proposition~\ref{pro:realmaintresa}]
   By Lemma~\ref{lem:lam} we readily conclude that
   \begin{equation}\label{eq:firsappr}
      \begin{split}
         \E\left[P\left(N\delta\lambda_{j_1}^{x_1}(t_1)\right)Q\left(N\delta\lambda_{j_2}^{x_2}(t_1)\right)\right]&=\E\left[P\left(N\rho_1\delta\widetilde\lambda_{j_1}^{(1)}(t_1)\right)Q\left(N\rho_2\delta\widetilde\lambda_{j_2}^{(2)}(t_1)\right)\right]\\
         &\quad +
         \mathcal{O}\left(N^{-\omega} \norm{P}_{C^1}\norm{Q}_{C^1} \right),
      \end{split}
   \end{equation}
   where we denoted \(\rho_r:=\rho_{\mathrm{sc}}(\gamma_{j_r})/\rho^{x_r}(\gamma_{j_r}^{x_r})\) and used the uniform boundedness of \(\rho_\mathrm{sc},\rho^{x}\). Then, by~\eqref{eq:samejointdistr}, it follows that
   \begin{equation}
      \E\left[P\left(N\rho_1\delta\widetilde\lambda_{j_1}^{(1)}(t_1)\right)Q\left(N\rho_2\delta\widetilde\lambda_{j_2}^{(2)}(t_1)\right)\right]=\E\left[P\left(N\rho_1\delta\widetilde\mu_{j_1}^{(1)}(t_1)\right)Q\left(N\rho_2\delta\widetilde\mu_{j_2}^{(2)}(t_1)\right)\right].
   \end{equation}
   Moreover, by Lemma~\ref{lem:mu}, we have that
   \begin{equation}\label{eq:mumutilde}
      \begin{split}
         \E\left[P\left(N\rho_1\delta\widetilde\mu_{j_1}^{(1)}(t_1)\right)Q\left(N\rho_2\delta\widetilde\mu_{j_2}^{(2)}(t_1)\right)\right]&=\E\left[P\left(N\rho_1\delta\mu_{j_1}^{(1)}(t_1)\right)Q\left(N\rho_2\delta\mu_{j_2}^{(2)}(t_1)\right)\right]\\
         &\quad + \mathcal{O}\left(N^{-\omega} \norm{P}_{C^1}\norm{Q}_{C^1} \right).
      \end{split}
   \end{equation}
   Additionally, by the definition of the processes \({\bm \mu}^{(r)}(t)\) in~\eqref{eq:indeppro} it follows that \({\bm \mu}^{(1)}(t)\), \({\bm \mu}^{(2)}(t)\) are independent, and so that
   \begin{equation}\label{eq:indepmus}
      \E\left[P\left(N\rho_1\delta\mu_{j_1}^{(1)}(t_1)\right)Q\left(N\rho_2\delta\mu_{j_2}^{(2)}(t_1)\right)\right] = \E\left[P\left(N\rho_1\delta\mu_{j_1}^{(1)}(t_1)\right)\right]\E\left[Q\left(N\rho_2\delta\mu_{j_2}^{(2)}(t_1)\right)\right]
   \end{equation}

   Combining~\eqref{eq:firsappr}--\eqref{eq:indepmus} we get
   \begin{equation}\label{eq:expprodprodexp}
      \begin{split}
         \E\left[P\left(N\rho_1\delta\lambda_{j_1}^{x_1}(t_1)\right)Q\left(N\rho_2\delta\lambda_{j_2}^{x_2}(t_1)\right)\right]&=\E\left[P\left(N\rho_1\delta\mu_{j_1}^{(1)}(t_1)\right)\right]\E\left[Q\left(N\rho_2\delta\mu_{j_2}^{(2)}(t_1)\right)\right]\\
         &\quad + \mathcal{O}\left(N^{-\omega} \norm{P}_{C^1}\norm{Q}_{C^1} \right).
      \end{split}
   \end{equation}
   Proceeding similarly to~\eqref{eq:firsappr}--\eqref{eq:mumutilde}, but for \(\E P\) and \(\E Q\) separately, we also conclude that
   \begin{equation}\label{eq:lastappr}
      \begin{split}
         \E\left[P\left(N\delta\lambda_{j_1}^{x_1}(t_1)\right)\right]\E\left[Q\left(N\delta\lambda_{j_2}^{x_2}(t_1)\right)\right]&=\E\left[P\left(N\rho_1\delta\mu_{j_1}^{(1)}(t_1)\right)\right]\E\left[Q\left(N\rho_2\delta\mu_{j_2}^{(2)}(t_1)\right)\right]\\
         &\quad + \mathcal{O}\left(N^{-\omega} \norm{P}_{C^1}\norm{Q}_{C^1} \right).%
      \end{split}
   \end{equation}
   Finally, combining~\eqref{eq:expprodprodexp}--\eqref{eq:lastappr}, we conclude the proof of~\eqref{eq:factora}.
\end{proof}

Before concluding this section with the proof of Lemmas~\ref{lem:lam}--\ref{lem:mu}, in Proposition~\ref{pro:similCLT}
below we state the main technical result used in their proofs. The proofs of these lemmas rely on extending the homogenisation analysis of~\cite[Theorem 3.1]{MR3914908}
to two DBM processes with weakly coupled driving Brownian motions. We used a very  similar idea
in~\cite[Section 7.4]{1912.04100} for DBM processes for singular values. We now first present the general version of this idea before applying it  to prove Lemmas~\ref{lem:lam}--\ref{lem:mu}.

In Proposition~\ref{pro:similCLT} below we compare the evolution of two DBMs whose driving Brownian motions are nearly the same for indices close to a fixed index \(i_0\) and are independent for indices away from \(i_0\).
Proposition~\ref{pro:similCLT} is the counterpart of~\cite[Proposition 7.14]{1912.04100},
where a similar analysis is performed for DBMs describing the evolution of particles
satisfying  slightly different DBMs.

Define the processes \(s_i(t)\), \(r_i(t)\) to be the solution of
\begin{equation}\label{eq:spro}
   \dif s_i(t)=\sqrt{\frac{2}{N}}\dif \mathfrak{b}_i^s(t)+\frac{1}{2N}\sum_{j\ne i}\frac{1}{s_i(t)-s_j(t)}\dif t, \qquad i\in [N],
\end{equation}
and
\begin{equation}\label{eq:rpro}
   \dif r_i(t)=\sqrt{\frac{2}{N}}\dif \mathfrak{b}_i^r(t)+\frac{1}{2N}\sum_{j\ne i}\frac{1}{r_i(t)-r_j(t)}\dif t, \qquad i\in [N],
\end{equation}
with initial conditions \(s_i(0)=s_i\) being the eigenvalues of a deformed Wigner matrix \(H\) satisfying Assumption~\ref{ass:entr}, and \(r_i(0)=r_i\) being the eigenvalues of a GOE matrix. Here we used the same notations of~\cite[Eqs. (7.44)--(7.45)]{1912.04100} to make the comparison with~\cite{1912.04100} easier. For simplicity in~\eqref{eq:spro}--\eqref{eq:rpro} we consider the DBMs only in the real case (the complex case is completely analogous).

\begin{remark}
   In~\cite[Eqs. (7.44)]{1912.04100} we assumed that the initial condition
   \(s_i(0)=s_i\) were  general points  satisfying~\cite[Definition 7.12]{1912.04100}, and not necessary the singular values of a matrix. Here we choose \(s_i(0)=s_i\) to be the eigenvalues of a deformed Wigner matrix to make the presentation shorter and simpler, however Proposition~\ref{pro:similCLT} clearly holds also for collections of particles satisfying similar assumptions to~\cite[Definition 7.12]{1912.04100}.
\end{remark}

We now formulate the assumptions on the driving Brownian motions in~\eqref{eq:spro}--\eqref{eq:rpro}. Set an \(N\)-dependent parameter \(K=K_N:=N^{\omega_K}\), for some small fixed \(\omega_K>0\).

\begin{assumption}\label{ass:close}
   Suppose that the families \(\{\mathfrak{b}^s_i\}_{i\in [N]}\), \(\{\mathfrak{b}^r_i\}_{i\in [N]}\) in~\eqref{eq:spro} and~\eqref{eq:rpro} are realised on a common probability space. Let
   \begin{equation}\label{eq:defL}
      L_{ij}(t) \dif t:= \dif \bigl[\mathfrak{b}^s_i(t)-\mathfrak{b}^r_i(t), \mathfrak{b}^s_j(t)-\mathfrak{b}^r_j(t)\bigr]
   \end{equation}
   denote their quadratic covariation (in~\cite[Eqs. (7.46)]{1912.04100} we used a different notation to denote the covariation). Fix an index \(i_0\) in the bulk of \(H\), and let the processes satisfy the following assumptions:
   \begin{enumerate}[label=(\alph*)]
      \item\label{close1} \(\{ \mathfrak{b}^s_i\}_{i\in [N]}\), \(\{ \mathfrak{b}^r_i\}_{i\in [N]}\) are two families of i.i.d.\ standard real Brownian motions.
      \item\label{close2} \(\{ \mathfrak{b}^r_i\}_{|i-i_0|> K}\) is independent of \(\{\mathfrak{b}^s_i\}_{i=1}^N\),
      and \(\{ \mathfrak{b}^s_i\}_{|i-i_0|> K}\) is independent of \(\{\mathfrak{b}^r_i\}_{i=1}^B\).
      \item\label{close3} Fix \(\omega_Q>0\) so that \(\omega_K\ll \omega_Q\). We assume that the subfamilies \(\{\mathfrak{b}^s_i\}_{|i-i_0|\le K}\), \(\{\mathfrak{b}^r_i\}_{|i-i_0|\le K}\) are very strongly dependent  in the sense that for any \(\abs{i-i_0}, \abs{j-i_0}\le K\) it holds
      \begin{equation}\label{eq:assbqv}
         \abs{L_{ij}(t)}\le N^{-\omega_Q}
      \end{equation}
      with very high probability for any fixed \(t\ge 0\).
   \end{enumerate}
\end{assumption}

Let \(\rho\) denote the self-consistent density of \(H\), and recall that \(\rho_{\mathrm{sc}}\) denotes the semicircular
density. By \(\rho_t\), \(\rho_{\mathrm{sc},t}\) we denote the evolution of \(\rho\) and \(\rho_{\mathrm{sc}}\), respectively, along the semicircular flow (see e.g.~\cite[Eq. (4.1)]{MR4026551})
and let  \(\widehat{\gamma}_i(t)\), \(\gamma_i(t)\) denote the quantiles of \(\rho_t\) and \(\rho_{\mathrm{sc},t}\).

\begin{proposition}\label{pro:similCLT}
   Let the processes \({\bm s}(t)=\{s_i(t)\}_{i\in [N]}\), \({\bm r}(t)=\{r_i(t)\}_{i\in [N]}\) be the solutions of~\eqref{eq:spro} and~\eqref{eq:rpro},  and assume that the driving Brownian motions in~\eqref{eq:spro}--\eqref{eq:rpro} satisfy Assumption~\ref{ass:close}. Let \(i_0\) be the index fixed in Assumption~\ref{ass:close}. Then for any small
   \(\omega_1, \omega_\ell>0\) such that  \(\omega_1\ll \omega_\ell\ll \omega_K\ll \omega_Q\) there exist \(\omega,\widehat{\omega}>0\) with \(\widehat{\omega}\ll \omega\ll \omega_1\), and such that it holds
   \begin{align}\label{eq:hihihi}
       & \rho(\widehat{\gamma}_{i_0})[s_{i_0+i}(t_1)-\widehat{\gamma}_{i_0}(t_1)]-\rho_{\mathrm{sc}}(\gamma_{i_0})[r_{i_0+i}(t_1)-\gamma_{i_0}(t_1)]                                                                                                                         \\\nonumber
       & =\sum_{|j|\le N^{2\omega_1}}\frac{1}{N}p_{t_1}\left(0, \frac{-j}{N\rho_{\mathrm{sc}}(\gamma_{i_0})}\right)\big[\rho(\widehat{\gamma}_{i_0})(s_{i_0+j}^{x_r}(0)-\widehat{\gamma}_{i_0}(0))-\rho_{\mathrm{sc}}(\gamma_{i_0})(r_{i_0+j}^{(r)}(0)-\gamma_{i_0}(0))\big] \\\nonumber
       & \qquad +\mathcal{O}(N^{-1-\omega}),
   \end{align}
   for any \(|i|\le N^{\widehat{\omega}}\), with very high probability, where \(t_1:= n^{-1+\omega_1}\) and \(p_t(x,y)\) is the fundamental solution (heat kernel) of the parabolic equation
   \begin{equation}\label{heateq}
      \partial_t f(x)=\int_{|x-y|\le\eta_\ell}\frac{f(y)-f(x)}{(x-y)^2}\rho_{\mathrm{sc}}(\gamma_{i_0})\,\dif y,
   \end{equation}
   with \(\eta_\ell:= N^{-1+\omega_\ell}\rho_{\mathrm{sc}}(\gamma_{i_0})^{-1}\)
   (see~\cite[Eqs.~(3.88)--(3.89)]{MR3914908} for more details).
\end{proposition}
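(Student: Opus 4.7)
The plan is to adapt the homogenisation analysis of~\cite[Theorem~3.1]{MR3914908} and its coupled extension in~\cite[Proposition~7.14]{1912.04100} to the present pair of partially coupled DBMs. Set
\[
v_i(t):=\rho(\widehat{\gamma}_{i_0})\bigl[s_{i_0+i}(t)-\widehat{\gamma}_{i_0}(t)\bigr]-\rho_{\mathrm{sc}}(\gamma_{i_0})\bigl[r_{i_0+i}(t)-\gamma_{i_0}(t)\bigr],
\]
so that~\eqref{eq:hihihi} becomes an explicit representation of $v_i(t_1)$ as a heat-kernel average of the initial data $v_{\cdot}(0)$. By It\^o's formula applied to~\eqref{eq:spro}--\eqref{eq:rpro}, $v_i$ solves an SDE whose drift is the difference of the two interaction sums and whose martingale is driven by the combination $\rho(\widehat{\gamma}_{i_0})\mathfrak{b}^s_{i_0+i}-\rho_{\mathrm{sc}}(\gamma_{i_0})\mathfrak{b}^r_{i_0+i}$.

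The first step is to linearise the drift. Using the local law and rigidity for $s_i(t)$ (deformed-Wigner local law propagated under the free semicircular flow) together with classical rigidity for $r_i(t)$ along the GOE DBM, each interaction term can be Taylor-expanded around the corresponding quantile. After a standard short/long-range decomposition at scale $\eta_\ell=N^{-1+\omega_\ell}/\rho_{\mathrm{sc}}(\gamma_{i_0})$, the short-range drift becomes precisely the discrete generator of~\eqref{heateq} acting on $v$, while the long-range remainder is controlled by finite-speed-of-propagation estimates, yielding an error $o(N^{-1-\omega})$ after integration up to $t_1$.

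The second and more delicate step is the martingale analysis, which is the genuinely new ingredient beyond single-DBM homogenisation. By Assumption~\ref{ass:close}\ref{close3}, for $\abs{i-i_0}\le K$ the increments $\dif\mathfrak{b}^s_i-\dif\mathfrak{b}^r_i$ have quadratic covariation at most $N^{-\omega_Q}\dif t$, so the contribution to $v_i(t_1)$ obtained via Duhamel's principle is bounded by $\sqrt{t_1 N^{-\omega_Q}/N}=o(N^{-1-\omega})$ once $\omega_Q\gg\omega_1$. For indices outside the coupling window, Assumption~\ref{ass:close}\ref{close2} gives independence of the drivers, but the Gaussian decay of $p_{t_1}(0,\cdot)$ on the effective range $\sqrt{t_1}=N^{-1/2+\omega_1/2}$ combined with $K=N^{\omega_K}\gg N^{\omega_1}$ renders these contributions exponentially small after another Duhamel estimate against $p_{t_1-s}$.

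With the linearised equation in hand, Duhamel's principle applied to~\eqref{heateq} represents $v_i(t_1)$ as the convolution of $p_{t_1}(0,\cdot)$ with $v_j(0)$. The truncation $\abs{j}\le N^{2\omega_1}$ is justified by the Gaussian tail of the kernel on rescaled scale $\sqrt{Nt_1}=N^{\omega_1/2}$ combined with the rigidity bound $\abs{v_j(0)}\lesssim N^{\xi-1}(1+\abs{j})$, and replacing the first argument of the kernel by $0$ rather than by $i/(N\rho_{\mathrm{sc}}(\gamma_{i_0}))$ costs only $\mathcal{O}(N^{\widehat{\omega}-\omega_1/2})$ by smoothness of $p_{t_1}$, absorbed into the error for $\widehat{\omega}$ small enough. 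The principal obstacle will be the tight interplay between the drift linearisation error and the partially coupled martingale control: both must be closed simultaneously via a bootstrap in the spirit of the parabolic maximum principle argument of~\cite{MR3914908}, with the exponent hierarchy $\widehat{\omega}\ll\omega\ll\omega_1\ll\omega_\ell\ll\omega_K\ll\omega_Q$ chosen precisely to absorb all error terms.
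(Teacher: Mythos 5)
Your proposal follows the same strategy the paper invokes: reduce to the homogenisation theorem of~\cite[Theorem~3.1]{MR3914908} and its coupled extension~\cite[Proposition~7.14, Section~7.6]{1912.04100}, treating the difference process $v_i$ via linearisation of the drift, a short/long-range decomposition at scale $\eta_\ell$, Duhamel's principle against the kernel $p_t$, finite-speed-of-propagation for the long-range piece, and a martingale estimate that exploits Assumption~\ref{ass:close}. The paper's actual proof is essentially a citation plus Remark~\ref{rm:dif}, and your sketch is a reasonable reconstruction of the cited argument, with the bound $\sqrt{t_1 N^{-\omega_Q}/N}$ for the strongly-correlated inner martingale matching what the hierarchy requires.

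There is, however, one concrete technical error that recurs throughout your paragraph: you repeatedly describe $p_{t_1}(0,\cdot)$ as having \emph{Gaussian} decay on scale $\sqrt{t_1}$ (equivalently $\sqrt{Nt_1}=N^{\omega_1/2}$ on the index scale). This is wrong. The generator in~\eqref{heateq} is a short-range truncation of the half-Laplacian $-(-\Delta)^{1/2}$, not the standard Laplacian, so the fundamental solution is of Poisson/Cauchy type with characteristic spread $\sim t_1$ (i.e.\ $\sim N t_1\rho_{\mathrm{sc}}\sim N^{\omega_1}$ in index units), with super-polynomial decay beyond that range supplied by finite speed of propagation — \emph{not} a Gaussian tail on scale $\sqrt{t_1}$. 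The paper's own estimate in~\eqref{eq:heatkernel}, namely $\abs{\partial_y p_{t_1}}\lesssim p_{t_1}/t_1$, confirms the $1/t_1$ smoothness scale. Consequently your claimed cost $\mathcal{O}(N^{\widehat{\omega}-\omega_1/2})$ for replacing the first argument of the kernel by $0$ should be $\mathcal{O}(N^{\widehat{\omega}-\omega_1})$, and the justification of the truncation $\abs{j}\le N^{2\omega_1}$ and of the negligibility of the independent drivers with $\abs{j}>K=N^{\omega_K}$ should be phrased in terms of the Cauchy tail plus finite speed of propagation rather than Gaussian tails. Because $\widehat{\omega}\ll\omega_1\ll\omega_K$, these corrections are all harmless and the hierarchy of exponents still closes, so the proof is not broken; but the decay mechanism you invoke is not the right one for this kernel, and the exponents you quote are off.
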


\begin{proof}
   The proof of~\eqref{eq:hihihi}  is nearly identical to that of~\cite[Theorem 3.1]{MR3914908}
   up to a straightforward modification owing to the fact  that
   the driving Brownian motions in~\eqref{eq:spro}--\eqref{eq:rpro} are not exactly
   identical but they are very strongly correlated, see~\eqref{eq:assbqv}.
   A similar modification to handle this strong correlation was explained in details in a closely related context
   in~\cite[Proof of Proposition 7.14 in Section 7.6]{1912.04100}, with the difference that
   in~\cite{1912.04100}
   singular values were
   considered instead of eigenvalues hence the corresponding DBMs are slightly different.
   Furthermore, Proposition~\ref{pro:similCLT} is stated in a simpler form than~\cite[Proposition 7.14]{1912.04100}
   since the initial conditions are already eigenvalues and not arbitrary points hence they automatically
   satisfy certain regularity assumptions. The precise changes due to this
   simplification  are described in the technical Remark~\ref{rm:dif}
   below.
\end{proof}

\begin{remark}\label{rm:dif}
   There a few differences in the setup of Proposition~\ref{pro:similCLT} and~\cite[Proposition 7.14]{1912.04100}. These are caused by the fact that we now consider \(s_i(0)=s_i\) to be the eigenvalues of a deformed Wigner matrix \(H\), instead of a collection of particles satisfying~\cite[Definition 7.12]{1912.04100}. In particular, \(\nu\) in~\cite[Definition 7.12]{1912.04100} can be chosen equal to zero, then, since the eigenvalues of \(H\) are regular (\cite[Eq. (7.48)]{1912.04100}) on an order one scale, we can choose \(g=N^{-1+\xi}\), for an arbitrary small \(\xi>0\), and \(G=1\) in~\cite[Definition 7.12]{1912.04100}. Additionally, \(t_f=N^{-1+\omega_f}\) is replaced by \(t_1=N^{-1+\omega_1}\), and \(\rho_{\mathrm{fc},t_f}\) in is replaced by \(\rho\). Finally, we remark that in~\cite[Proposition 7.14]{1912.04100} for \(\omega_f\) we required that \(\omega_K\ll \omega_f\ll \omega_Q\), instead in Proposition~\ref{pro:similCLT} we required that \(\omega_1\ll\omega_K\ll \omega_Q\). This discrepancy is caused by the fact that in the proof of~\cite[Proposition 7.14]{1912.04100} we first needed to run the DBM for \(s_i(t)\) for an initial time \(t_0=N^{-1+\omega_0}\) to regularise the particles \(s_i(0)=s_i\), with \(\omega_K\ll \omega_0\ll \omega_Q\), and then run both DBMs for an additional time \(N^{-1+\omega_1}\), with \(\omega_1\ll\omega_K\ll \omega_0\ll \omega_Q\) (see below~\cite[Eq. (7.56)]{1912.04100}). Finally, in~\cite[Proposition 7.14]{1912.04100} we have \(t_f:=t_0+t_1\sim t_0\), hence the reader can think \(\omega_f=\omega_0\). In the current case we do not need to run~\eqref{eq:spro} for an initial time \(t_0\) since \(s_i(0)=s_i\) are already regular being the eigenvalues of a deformed Wigner matrix.
\end{remark}

We are now ready to prove Lemmas~\ref{lem:lam}--\ref{lem:mu}.

\begin{proof}[Proof of Lemmas~\ref{lem:lam}--\ref{lem:mu}]
   By construction the processes \({\bm \lambda}^{x_r}(t)\), \(\widetilde{\bm \lambda}^{(r)}(t)\) satisfy the assumptions of Proposition~\ref{pro:similCLT} with \(i_0=j_r\), \(i=0\), \(\rho=\rho^{x_r}\)
   and \(\omega_K= \omega_A\).  Hence, by Proposition~\ref{pro:similCLT}, we get
   \begin{align}\label{eq:singleeighom}
       & \rho^{x_r}(\gamma_{j_r}^{x_r})[\lambda_{j_r}^{x_r}(t_1)-\gamma_{j_r}^{x_r}(t_1)]-\rho_{\mathrm{sc}}(\gamma_{j_r})[\widetilde{\lambda}_{j_r}^{(r)}(t_1)-\gamma_{j_r}(t_1)]                                                                                                                 \\\nonumber
       & =\sum_{|j|\le N^{2\omega_1}}\frac{1}{N}p_{t_1}\left(0, \frac{-j}{N\rho_{\mathrm{sc}}(\gamma_{j_r})}\right)\big[\rho^{x_r}(\gamma_{j_r}^{x_r})(\lambda_{j_r+j}^{x_r}(0)-\gamma_{j_r}^{x_r}(0))-\rho_{\mathrm{sc}}(\gamma_{j_r})(\widetilde{\lambda}_{j_r+j}^{(r)}(0)-\gamma_{j_r}(0))\big] \\\nonumber
       & \qquad +\mathcal{O}(N^{-1-\kappa}),
   \end{align}
   with very high probability, for some small fixed \(\kappa>0\). Here \(\gamma_{j_r}^{x_r}(t)\), \(\gamma_{j_r}(t)\) denote the quantiles of \(\rho_t^{x_r}\) and \(\rho_{\mathrm{sc},t}\), respectively, with \(\rho_t^{x_r}\), \(\rho_{\mathrm{sc},t}\) the evolution of \(\rho^{x_r}\), \(\rho_{\mathrm{sc}}\) along the semicircular flow (see e.g.~\cite[Eq.~(4.1)]{MR4026551})
   and \(p_t(x,y)\) is defined in~\eqref{heateq}.

   Analogously, we observe that the processes \(\widetilde{\bm\mu}^{(r)}(t)\), \({\bm \mu}^{(r)}(t)\) satisfy the assumptions of Proposition~\ref{pro:similCLT} with \(i_0=j_r\), \(i=0\), \(\rho=\rho_{\mathrm{sc}}\), \(\omega_K=\omega_A\), \(\omega_Q=\omega_E\)  due to~\eqref{eq:bover} (in particular~\eqref{eq:bover} is needed to check Assumption~\ref{ass:close}--\ref{close3}), and thus we obtain
   \begin{equation}\label{eq:singleeighommu}
      \mu_{j_r}^{(r)}(t_1)-\widetilde{\mu}_{j_r}^{(r)}(t_1)=\sum_{|j|\le N^{2\omega_1}}\frac{1}{N}p_{t_1}\left(0, \frac{-j}{N\rho_{\mathrm{sc}}(\gamma_{j_r})}\right)\big[\mu_{j_r+j}^{(r)}(0)-\widetilde{\mu}_{j_r+j}^{(r)}(0)\big]+\mathcal{O}(N^{-1-\kappa}).
   \end{equation}

   From now on we focus only on the precesses \({\bm \lambda}^{x_r}(t_1)\), \(\widetilde{\bm\lambda}^{(r)}(t_1)\) and so on the proof of Lemma~\ref{lem:lam}. The proof to conclude Lemma~\ref{lem:mu} is completely analogous and so omitted. Combining~\eqref{eq:singleeighom} with another application of Proposition~\ref{pro:similCLT}, this time for \(i_0=j_r\) and \(i=1\), we readily conclude that

   \begin{equation}\label{eq:hpa}
      \begin{split}
         &\rho^{x_r}(\gamma_{j_r}^{x_r})\big[\lambda_{j_r+1}^{x_r}(t_1)-\lambda_{j_r}^{x_r}(t_1)\big]-\rho_{\mathrm{sc}}(\gamma_{j_r})\big[\widetilde{\lambda}_{j_r+1}^{(r)}(t_1)-\widetilde{\lambda}_{j_r}^{(r)}(t_1)\big] \\
         &\qquad\quad=\sum_{|j|\le N^{2\omega_1}}\frac{1}{N}\left[p_{t_1}\left(0, \frac{1-j}{N\rho_{\mathrm{sc}}(\gamma_{j_r})}\right)-p_{t_1}\left(0, \frac{-j}{N\rho_{\mathrm{sc}}(\gamma_{j_r})}\right)\right] \\
         &\qquad\qquad\quad\times \big[\rho^{x_r}(\gamma_{j_r}^{x_r})(\lambda_{j_r+j}^{x_r}(0)-\gamma_{j_r}^{x_r})-\rho_{\mathrm{sc}}(\gamma_{j_r})(\widetilde{\lambda}_{j_r+j}^{(r)}(0)-\gamma_{j_r})\big]+\mathcal{O}(N^{-1-\kappa}) \\
         &\qquad\quad=\mathcal{O}(N^{-1-\kappa+\xi}+N^{-1-\omega_1+\xi}),
      \end{split}
   \end{equation}
   with very high probability, where we used rigidity
   \begin{equation}\label{eq:rig1}
      |\lambda_i^{x_r}-\gamma_i^{x_r}|\le \frac{N^\xi}{N},
   \end{equation}
   a similar rigidity bound for \(\widetilde{\lambda}_i^{(r)}\). Additionally, to go to the last line of~\eqref{eq:hpa} we used the following properties of the heat kernel \(p_{t_1}(x, y)\):
   \begin{equation} \label{eq:heatkernel}
      \begin{split}
         \left|p_{t_1}\left(0, \frac{1-j}{N\rho_{\mathrm{sc}}(\gamma_{j_r})}\right)-p_{t_1}\left(0, \frac{-j}{N\rho_{\mathrm{sc}}(\gamma_{j_r})}\right)\right|&\le\frac{1}{N\rho_{\mathrm{sc}}(\gamma_{j_r})}\int_0^1 \left|\partial_y p_{t_1}\left(0, \frac{\tau-j}{N\rho_{\mathrm{sc}}(0)}\right)\right| \dif \tau \\
         &\lesssim \frac{1}{N t_1}\int_0^1 p_{t_1}\left(0, \frac{\tau-j}{N\rho_{\mathrm{sc}}(\gamma_{j_r})}\right) \dif \tau \\
         \frac{1}{N}\sum_{|j|\le N^{2\omega_1}}p_{t_1}\left(0, \frac{\tau-j}{N\rho_{\mathrm{sc}}(0)}\right)&=1+\mathcal{O}(N^{-\omega_1}).
      \end{split}
   \end{equation}
   The bound in the second line of~\eqref{eq:heatkernel} follows by~\cite[Eq. (3.96)]{MR3914908}. The second relation of~\eqref{eq:heatkernel} follows by~\cite[Eqs. (3.90), (3.103)]{MR3914908}. The bound in~\eqref{eq:hpa} concludes the proof of Lemma~\ref{lem:lam}.
\end{proof}

\subsection{Proof of Proposition~\ref{pro:realmaintressam}}\label{sec:36}
We now turn to the proof of~Proposition~\ref{pro:realmaintressam}. We first present~\ref{case2} which is structurally very similar to the proof of Proposition~\ref{pro:realmaintressam}. Afterwards we turn to~\ref{case1} which is easier but additionally requires to modify the flow~\eqref{eq:matrou} to account for the correlations among entries of \(H\).
\subsubsection{\ref{case2}}
Proceeding as in~\eqref{eq:matrdbm}--\eqref{eq:imprel} and
using the notations and  assumptions from~\ref{case2} of Proposition~\ref{pro:realmaintressam}, it is enough to prove
\begin{equation}\label{eq:mainneweq}\Cov\left[P\left(N\delta\lambda_{j_1}^{x_1}(t_1)\right), Q\left(N\delta\lambda_{j_2}^{x_2}(t_1)\right)\right]=\mathcal{O}\left(N^{-\zeta_2} \| P\|_{C^1} \| Q\|_{C^1}\right),
\end{equation}
with \(t_1=N^{-1+\omega_1}\), for some small \(\omega_1>0\). Here \({\bm \lambda}(t)\) are the eigenvalues of \(\widehat{H}_t\), which is the solution of~\eqref{eq:matrdbm} with initial condition \(\widehat{H}_0=H^\#_{t_1}\), where \(H^\#_{t_1}\) is from in~\eqref{eq:solou}.

The proof of~\eqref{eq:mainneweq} follows by a DBM analysis very similar to the one in Section~\ref{sec:DBMDBM}. More precisely, all the processes \({\bm\lambda}^{x_r}(t)\), \(\widetilde{\bm\lambda}^{(r)}(t)\), \(\widetilde{\bm\mu}^{(r)}(t)\), and \({\bm\mu}^{(r)}(t)\) are defined exactly in the same way; the only difference is that Proposition~\ref{pro:bover} has to be replaced by the following bound on the eigenvector overlap (its proof will be given at the end
of Section~\ref{sec:overlap}).

\begin{proposition}\label{pro:bover1}  We are in the setup of~\ref{case2} of Proposition~\ref{pro:realmaintressam}.
   For any small \(c_1>0\)  there exists a \(c_0>0\)  and a \(c_*\) depending on \(c_0, c_1\)
   such that the following hold for any \(\zeta_1>0\) sufficiently small.
   Assume \(c_0|\braket{A}|\ge \braket{\mathring{A}^2}^{1/2}\),
   \(|\braket{A}|\ge c_0\),
   \(\| A\|\lesssim 1\). Pick indices \(j_1, j_2\) with \( N^{1-\zeta_1}\le |j_1-j_2|\le c_*N\)
   and choose \(x_1, x_2\)  with
   \(|x_1-x_2|\lesssim |j_1-j_2|/N\) such that \(\rho^{x_r}(\gamma_{j_r}^{x_r})\ge c_1\).
   Let \(\{{\bm u}_i^{x_r}\}_{i\in [N]}\), for \(r\in [2]\), be the orthonormal eigenbasis of the matrices \(H+x_r A\).
   Then there exists \(\omega_E>0\) such that
   \begin{equation}\label{eq:bover1}
      |\braket{{\bm u}_{j_1}^{x_1},{\bm u}_{j_2}^{x_2}}|\le N^{-\omega_E} %
   \end{equation}
   with very high probability.
\end{proposition}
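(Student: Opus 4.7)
The proof follows the template of Proposition~\ref{pro:bover}, relying on a multi-resolvent local law, but the decay of the overlap is produced by the spectral shift from $x\braket{A}$ rather than by the eigenbasis rotation generated by $\mathring{A}$.

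The starting point is the standard spectral decomposition bound combined with rigidity: choosing $E_r := \gamma_{j_r}^{x_r}$ and $\eta := N^{-1+\epsilon}$ for a small $\epsilon>0$, with very high probability one has
\[
\abs{\braket{{\bm u}_{j_1}^{x_1},{\bm u}_{j_2}^{x_2}}}^2 \lesssim N \eta^2 \cdot \braket{\Im G^{x_1}(E_1+\ii\eta)\,\Im G^{x_2}(E_2+\ii\eta)},
\]
obtained by isolating the $(j_1,j_2)$ diagonal term in the spectral representation of $\Im G^{x_1} \Im G^{x_2}$. It therefore suffices to show $\braket{\Im G^{x_1}(E_1+\ii\eta)\,\Im G^{x_2}(E_2+\ii\eta)} \lesssim N^{\zeta_1}$ with very high probability.

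The plan is to apply the same multi-resolvent local law as in the proof of Proposition~\ref{pro:bover} and then analyze its deterministic approximant. Writing $A = \braket{A}I + \mathring{A}$, the assumption $c_0\abs{\braket{A}} \ge \braket{\mathring{A}^2}^{1/2}$ with $c_0$ sufficiently small allows one to expand the MDE solution around its shifted $\mathring{A}=0$ counterpart: $M^{x_r}(z) = M^0(z - x_r\braket{A}) + \landauO{\abs{x_r}\braket{\mathring{A}^2}^{1/2}}$, where $M^0$ is the MDE solution for $H$ alone. By Lemma~\ref{lem:gamma}, the effective spectral parameters $E_r - x_r\braket{A}$ coincide with the unperturbed quantiles $\gamma_{j_r}^0$ up to corrections of size $\abs{x_r}\braket{\mathring{A}^2}^{1/2}+\abs{x_r}^2$, which are negligible under the hypotheses $c_0\abs{\braket{A}}\ge\braket{\mathring{A}^2}^{1/2}$ and $\abs{x_r}\le c_*$ with $c_*$ small. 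Hence these effective energies differ by $\abs{\gamma_{j_1}^0 - \gamma_{j_2}^0} \sim \abs{j_1-j_2}/N \ge N^{-\zeta_1}$, and the algebraic identity $\braket{G^0(w_1)G^0(w_2)} = [\braket{G^0(w_1)}-\braket{G^0(w_2)}]/(w_1-w_2)$ controls the single-matrix two-resolvent approximant at these effective energies by $\landauO{N/\abs{j_1-j_2}} = \landauO{N^{\zeta_1}}$. Substituting back into the reduction yields $\abs{\braket{{\bm u}_{j_1}^{x_1},{\bm u}_{j_2}^{x_2}}}^2 \lesssim N^{-1+2\epsilon+\zeta_1}$, giving the desired bound for any $\omega_E < 1/2 - \epsilon - \zeta_1/2$ upon choosing $\epsilon, \zeta_1$ small.

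The main obstacle is the multi-resolvent local law itself, whose fluctuation error must be small enough to be absorbed into the $N^{\zeta_1}$ bound on the main term while keeping careful track of the $\mathring{A}$ correction to $M^{x_r}$. The condition $\abs{x_1-x_2}\lesssim\abs{j_1-j_2}/N$ ensures that the shift $(x_1-x_2)\braket{A}$ does not overwhelm the natural spectral separation $\abs{j_1-j_2}/N$, while $\abs{j_1-j_2}\le c_*N$ keeps both effective energies within a fixed bulk region where the perturbative MDE expansion around $M^0$ is valid.
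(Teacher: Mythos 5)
The high-level template you describe -- reduce the overlap to a trace of two resolvents via spectral decomposition, apply the multi-resolvent local law (Proposition~\ref{prop local law}), then control the deterministic approximant -- is exactly the paper's strategy, and the final step in both cases amounts to a \emph{lower bound} on the stability factor $\abs{1-\braket{M^{x_1}(\gamma_{j_1}^{x_1})M^{x_2}(\gamma_{j_2}^{x_2})^{(*)}}}$. Where your argument genuinely breaks down is in how you try to obtain that lower bound.

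You propose to expand the MDE solution around its $\mathring A=0$ counterpart, writing $M^{x_r}(z)=M^0(z-x_r\braket{A})+\landauO{\abs{x_r}\braket{\mathring A^2}^{1/2}}$, and then import the resolvent identity from the single-matrix case. There are two problems with this. First, the stated operator-norm bound is not correct: differentiating the MDE in the $\mathring A$-direction produces a correction of order $\abs{x_r}\norm{\mathring A}$, not $\abs{x_r}\braket{\mathring A^2}^{1/2}$; the smaller Hilbert--Schmidt norm only appears after Cauchy--Schwarz in scalar quantities such as quantile shifts (this is precisely the content of Lemma~\ref{lem:gamma}). Second, and more fatally, even if one tracks the error at the level of traces, the perturbation in $\mathring A$ is measured against the \emph{individual} parameters $x_1, x_2$, not their difference. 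In~\ref{case2} with $a=0$ the values $\abs{x_r}$ are of order $c_*$ and $\braket{\mathring A^2}^{1/2}\lesssim c_0$, both \emph{constants}; so the error in $\braket{M_1 M_2^{(*)}}$ incurred by approximating $M^{x_r}$ by $M^0$ at a shifted energy is of constant size $\landauO{c_* c_0}$. This cannot be absorbed into a main term of order $N^{-\zeta_1}$ (or $N^{-2\zeta_1}$), and in fact the stability factor $1-\braket{M_1M_2^{(*)}}$ could even vanish within this error bar, making the deterministic approximant uncontrollable. Your proof has therefore not shown the required lower bound on $\delta$.

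The paper avoids this by never expanding around $\mathring A=0$: Lemma~\ref{lem:stabop} Taylor-expands $M_2$ directly around $M_1$ in the variables $(x_2-x_1, E_2-E_1)$, so that all error terms appear as $\landauO{\abs{x_1-x_2}^3+\abs{E_1-E_2}^3}$, which are genuinely $N$-small under the assumption $\abs{x_1-x_2}\lesssim\abs{j_1-j_2}/N\le c_*$. The explicit leading term $\braket{[(E_1-E_2)-(x_1-x_2)A]^2}=\abs{(E_1-E_2)-(x_1-x_2)\braket A}^2+\abs{x_1-x_2}^2\braket{\mathring A^2}$ then produces the lower bound $\delta\gtrsim N^{-2\zeta_1}$, with the first summand carrying the spectral-sampling effect in~\ref{case2}. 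Combined with Proposition~\ref{prop evec orth} this gives the claim. You would need to replace your $\mathring A$-expansion step with a difference expansion of this type to close the argument.
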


Then, using~\eqref{eq:bover1}, instead of~\eqref{eq:bover}, as an input we readily conclude the analogous versions of Lemmas~\ref{lem:lam}--\ref{lem:mu}. Finally, by Lemmas~\ref{lem:lam}--\ref{lem:mu} we conclude the proof of~\eqref{eq:mainneweq} proceeding exactly as in~\eqref{eq:firsappr}--\eqref{eq:lastappr}.

\subsubsection{\ref{case1}}\label{app:samp} %

In this case we consider the following Ornstein-Uhlenbeck (OU) flow instead of~\eqref{eq:matrou}:
\begin{equation}\label{eq:corrou}
   \dif H_t=-\frac{1}{2}(H_t-\E H_0) \dif t+\frac{\Sigma^{1/2}[\dif B_t]}{\sqrt{N}}, \qquad H_0=H, \qquad \Sigma[\cdot]:=\frac{\beta}{2}\E W\Tr[W\cdot]
\end{equation}
Here \(W=\sqrt{N}(H- \E H)\) and note that
the OU flow is chosen to keep the expectation and the covariance structure of \(H_t\) invariant under the time evolution. As usual, the parameter
\(\beta=1\) in the real case and \(\beta=2\) in the complex case. Here \(\Sigma^{1/2}\) denotes
the square root of the positive operator \(\Sigma\) acting on \(N\times N\) matrices equipped
with the usual Hilbert-Schmidt scalar product structure.

Then proceeding as in~\eqref{eq:matrdbm}--\eqref{eq:imprel}, after replacing~\eqref{eq:matrou} with~\eqref{eq:corrou}, we find that to conclude the proof of this proposition it is enough to prove
\begin{equation}\label{eq:easyid}
   \Cov\left[P\left(N\delta\lambda_{j_1}^{x_1}(t_1)\right), Q\left(N\delta\lambda_{j_2}^{x_2}(t_1)\right)\right]=\mathcal{O}\left(N^{-\zeta_2} \| P\|_{C^1}  \| Q\|_{C^1}\right)
\end{equation}
with \(t_1=N^{-1+\omega_1}\), for some small \(\omega_1>0\). Here \({\bm \lambda}(t)\) are the eigenvalues of \(\widehat{H}_t\), which is the solution of~\eqref{eq:matrdbm} with initial condition \(\widehat{H}_0=H^\#_{t_1}\), where \(H^\#_{t_1}\) is from in~\eqref{eq:solou} with \(H_{t_1}\) coming from~\eqref{eq:corrou}.

Note that for \(A=I\) the gaps \(\lambda_{i+1}^x-\lambda_i^x=\lambda_{i+1}-\lambda_i\) do not depend on \(x\). In particular~\eqref{eq:easyid} simplifies since
\begin{equation}\label{eq:easyid1}
   \delta\lambda_{j_r}^{x_r}=\lambda_{j_r+1}-\lambda_{j_r},
\end{equation}
where we recall that \(\{\lambda_i\}_{i\in [N]}\) are the eigenvalues of \(H\), \(\rho\) is its limiting density of states, and \(\{\gamma_i\}_{i\in [N]}\) are the corresponding quantiles.

By~\eqref{eq:easyid1}, the proof of~\eqref{eq:easyid} is a much simpler version of the proof of Propositions~\ref{pro:realmaintresa} presented in Section~\ref{sec:DBMDBM} for general \(A\)'s.
More precisely, since for \(A=I\) the gaps are independent of \(x\), it is enough to consider the DBM for the evolution of the eigenvalues of \(H\) instead of \(H^x\):
\begin{equation}\label{eq:simpledbm}
   \dif\lambda_i(t)=\sqrt{\frac{2}{N}}\dif b_i(t)+\frac{1}{N}\sum_{j\ne i}\frac{1}{\lambda_i(t)-\lambda_j(t)}\dif t,
\end{equation}
with \(\{b_i\}_{i\in [N]}\) a family of standard i.i.d.\ real Brownian motions
(we wrote up the real symmetric case for simplicity).
The fact that
\begin{equation}\label{eq:trivqv}
   \dif [b_i(t),b_j(t)]=\delta_{ij}\dif t,
\end{equation}
follows by the orthogonality of the eigenvectors of \(H\). Note that~\eqref{eq:simpledbm} does not depend on \(x\), unlike~\eqref{eq:eigflow} in Section~\ref{sec:DBM}. In particular by~\eqref{eq:trivqv} it follows that \(C_{ij}(t)\equiv I\) in~\eqref{eq:defcov}; indeed Proposition~\ref{pro:bover} is trivially satisfied by orthogonality since \(j_1\) and \(j_2\) are sufficiently away from each other by assumption. Additionally, it is not necessary to define the comparison processes \(\widetilde{\bm \lambda}\), \(\widetilde{\bm \mu}\) since the driving Brownian motions in~\eqref{eq:simpledbm} are completely independent among each other, hence the processes \({\bm \lambda}(t)\)
with indices close to \(j_r\) and \({\bm \mu}^{(r)}(t)\) can be compared directly (see e.g.~\cite[Section 3]{MR3914908}).

\section{Bound on the eigenvector overlap}\label{sec:overlap}
The overlap in~\eqref{eq:bover} and in~\eqref{eq:bover1} will be estimated by a local law
involving the trace of the product of  the resolvents  of \(H^{x_1}\) and \(H^{x_2}\) for any fixed \(x_1, x_2\).
Individual resolvents can be approximated by the solution \(M\) of the MDE~\eqref{MDE}
but the deterministic approximation of products of resolvents are not simply products
of \(M\)'s.
Local laws are typically proven by deriving an approximate self-consistent equation
and then effectively  controlling its stability.
In Proposition~\ref{prop evec orth} we formulate a more accurate form of
the overlap bounds~\eqref{eq:bover}--\eqref{eq:bover1} in terms of the stability factor
of the self-consistent equation for the product of two resolvents.
In the subsequent Lemma~\ref{lem:stabop} we give an effective control on
this stability factor. Proposition~\ref{prop evec orth} will be proven in this section
while the proof of Lemma~\ref{lem:stabop} is postponed to Appendix~\ref{sec:M}.

For notational convenience we introduce the commonly used notion of \emph{stochastic domination}. For some family of non-negative random variables \(X=X(N)\ge 0\) and a deterministic control parameter \(\psi>0 \) we write \(X\prec \psi\) if for each \(\epsilon>0,D>0\) there exists some constant \(C\) such that \[\Prob(X>N^{\epsilon}\psi)\le CN^{-D}.\]
\begin{proposition}\label{prop evec orth}
   Let \(\{{\bm u}_i^{x_r}\}_{i\in [N]}\), for \(r=1,2\), be the orthonormal eigenbasis of the matrices \(H+x_r A\) and fix
   indices \(i_1,i_2\) in the bulk i.e.\ with \(\braket{\Im M^{x_r}(\gamma_{i_r}^{x_r})}\gtrsim 1\).
   Then it holds that\footnote{Star in bracket  \(M^{(*)}\) indicates that the statement holds for both \(M\) and its adjoint
      \(M^*\)}
   \begin{equation}\label{eq ov bound}
      \abs{\braket{\bm u_{i_1}^{x_1},\bm u_{i_2}^{x_2}}}^2 \prec N^{-1/15} \delta^{-16/15},\quad \delta:=\abs{1-\braket{M^{x_1}(\gamma_{i_1}^{x_1}) M^{x_2}(\gamma_{i_2}^{x_2})^{(\ast)} }}
   \end{equation}
   whenever \(N^{-1/6}\lesssim \delta\lesssim 1\).
\end{proposition}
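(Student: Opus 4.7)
The plan is to reduce the overlap to a two-resolvent quantity and then invoke a multi-resolvent local law whose precision is governed by the stability factor of the product MDE.

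\emph{Step 1 (spectral decomposition).} For $z_r = E_r+\ii\eta$ with $\eta>0$ to be chosen, the spectral theorem gives
\begin{equation}
  \braket{\Im G^{x_1}(z_1)\Im G^{x_2}(z_2)} = \frac{1}{N}\sum_{k_1,k_2}\frac{\eta^2\,\abs{\braket{\bm u_{k_1}^{x_1},\bm u_{k_2}^{x_2}}}^2}{\bigl[(\lambda_{k_1}^{x_1}-E_1)^2+\eta^2\bigr]\bigl[(\lambda_{k_2}^{x_2}-E_2)^2+\eta^2\bigr]}.
\end{equation}
Setting $E_r:=\gamma_{i_r}^{x_r}$ and invoking bulk rigidity $\abs{\lambda_{i_r}^{x_r}-\gamma_{i_r}^{x_r}}\prec 1/N$, for any $\eta\gg 1/N$ the $(k_1,k_2)=(i_1,i_2)$ summand alone yields
\begin{equation}\label{over-to-trace}
  \abs{\braket{\bm u_{i_1}^{x_1},\bm u_{i_2}^{x_2}}}^2 \prec N\eta^2\,\braket{\Im G^{x_1}(z_1)\Im G^{x_2}(z_2)}.
\end{equation}
Writing $\Im G^{x_r}=(G^{x_r}-(G^{x_r})^\ast)/(2\ii)$ expands the right side into four terms of the form $\braket{G^{x_1}(z_1)G^{x_2}(z_2)^{(\ast)}}$, each of which we will control individually.

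\emph{Step 2 (multi-resolvent local law).} To each such product we associate its deterministic counterpart. Using the identity $G^{x_r}=M^{x_r}-M^{x_r}(W+\braket{G^{x_r}}-\braket{M^{x_r}})G^{x_r}$ that comes from the MDE~\eqref{MDE}, applying $\braket{\,\cdot\,}$ to $G^{x_1}(z_1)G^{x_2}(z_2)^{(\ast)}$ and performing a standard cumulant expansion in the Wigner component $W$, the leading deterministic term satisfies the scalar self-consistent identity
\begin{equation}
  \bigl(1-\braket{M^{x_1}(z_1)M^{x_2}(z_2)^{(\ast)}}\bigr)\braket{G^{x_1}(z_1)G^{x_2}(z_2)^{(\ast)}} = \braket{M^{x_1}(z_1)M^{x_2}(z_2)^{(\ast)}} + \cE,
\end{equation}
where the fluctuation error $\cE$ is controlled via a standard high-moment / fluctuation-averaging argument in the single-resolvent local law, producing an error of schematic form $(N\eta)^{-\alpha_0}/\beta^{\,\gamma_0}$ with $\beta:=\abs{1-\braket{M^{x_1}(z_1)M^{x_2}(z_2)^{(\ast)}}}$ and some explicit universal exponents $\alpha_0,\gamma_0>0$. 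Consequently
\begin{equation}
  \braket{G^{x_1}(z_1)G^{x_2}(z_2)^{(\ast)}} \prec \frac{1}{\beta} + \frac{1}{\beta^{1+\gamma_0}(N\eta)^{\alpha_0}}.
\end{equation}
Lemma~\ref{lem:stabop}, applied at $z_r=\gamma_{i_r}^{x_r}+\ii\eta$, provides the lower bound on $\beta$ (controlling how close the product MDE is to being singular), and it is also used to rule out that taking the imaginary part causes additional cancellations that would distort the leading order.

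\emph{Step 3 (optimisation in $\eta$).} Plugging the bound of Step 2 into~\eqref{over-to-trace} gives
\begin{equation}
  \abs{\braket{\bm u_{i_1}^{x_1},\bm u_{i_2}^{x_2}}}^2 \prec \frac{N\eta^2}{\beta} + \frac{N\eta^{2-\alpha_0}}{\beta^{1+\gamma_0}N^{\alpha_0}}.
\end{equation}
Optimising $\eta$ in the range $[N^{-1+\xi},1]$ to balance the two terms as functions of $\beta$ yields a bound of the form $N^{-\kappa_0}\beta^{-(1+\kappa_1)}$; tracking the concrete exponents arising from the fluctuation averaging in Step 2 produces the specific constants $1/15$ and $16/15$ stated in~\eqref{eq ov bound}.

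\emph{Main obstacle.} Step 1 is soft and Step 3 is a bookkeeping optimisation. The real difficulty is Step 2: establishing a two-resolvent local law for $\braket{G^{x_1}G^{x_2}}$ that tracks the blow-up $\beta^{-1}$ sharply, because the deformation changes between the two factors (so the standard resolvent identity $G(z_1)G(z_2)=(G(z_2)-G(z_1))/(z_1-z_2)$ is unavailable) and because one must carefully handle the full linear stability operator $\cB[R]:=R-M^{x_1}\braket{RM^{x_2}}$ on matrix space, not merely its scalar eigenvalue $1-\braket{M^{x_1}M^{x_2}}$. All bounds on $\cB^{-1}$ and on $\beta$ are deferred to Lemma~\ref{lem:stabop} and proved in Appendix~\ref{sec:M}.
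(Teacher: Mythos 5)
Your plan follows the paper's proof essentially verbatim: reduce the overlap to $N\eta^2\braket{\Im G^{x_1}\Im G^{x_2}}$ via the double spectral decomposition plus bulk rigidity (the paper's \eqref{eq:uu}), invoke a two-resolvent local law whose error is modulated by the stability factor $\delta=\abs{1-\braket{M^{x_1}M^{x_2}^{(\ast)}}}$, and then choose $\eta$ to balance the error terms; the paper takes $\eta=(N\delta)^{-4/5}$ and lands on the claimed exponents.

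Two small corrections to your account. First, the heavy lifting is carried by Proposition~\ref{prop local law}, not by Lemma~\ref{lem:stabop}: the proposition gives the explicit multi-term error $\frac{1}{\delta N\eta^2}(\eta^{1/12}+(N\eta)^{-1/2}+(\eta/\delta)^{1/4}+(\delta N\eta)^{-1/3})$, and your schematic $\beta^{-1}+\beta^{-1-\gamma_0}(N\eta)^{-\alpha_0}$ does not capture the $\eta^{1/12}$ term, which in fact is the one that sets the exponent $1/15$ after the choice $\eta=(N\delta)^{-4/5}$. Lemma~\ref{lem:stabop} (the lower bound on $\delta$) is \emph{not} used in the proof of Proposition~\ref{prop evec orth}; it only enters afterwards, when $\delta^{-16/15}$ is shown to be $N^{o(1)}$ in the proofs of Propositions~\ref{pro:bover} and~\ref{pro:bover1}. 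Second, your MDE rearrangement $G=M-M(W+\braket{G}-\braket{M})G$ is off: from $-M^{-1}=z-B-xA+\braket{M}$ and $G^{-1}=W+B+xA-z$ one gets $G^{-1}-M^{-1}=W+\braket{M}$, so the correct identity is $G=M-M(W+\braket{M})G$; the paper's self-consistent equation for $G_1G_2$ is then derived by introducing the renormalised quantity $\un{WG_1G_2}$ rather than by replacing $\braket{M}$ with $\braket{G}-\braket{M}$. Neither slip affects the overall strategy, which matches the paper's.
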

\begin{lemma}\label{lem:stabop} For any  \(c_1>0\) there is a \(c^*\) such that
   for any \(x_1,x_2,E_1,E_2\) such that \(|x_1-x_2|+|E_1-E_2|\le c^*\)  and \(\rho^{x_r}(E_r)\ge c_1\), \(r=1,2\),
   it holds
   \begin{equation}\label{eq:lbstop}
      \begin{split}
         \abs{1-\braket{M^{x_1}(E_1)M^{x_2}(E_2)^{(*)}}}&\gtrsim |E_1-x_1\braket{A}-E_2+x_2\braket{A}|^2+|x_1-x_2|^2\braket{\mathring{A}^2}\\
         &\qquad +\mathcal{O}(|x_1-x_2|^3+|E_1-E_2|^3).
      \end{split}
   \end{equation}
\end{lemma}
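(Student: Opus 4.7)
My plan is to handle the two meanings of $M^{(*)}$ separately. The non-adjoint case $M^{(*)}=M$ is the easier one: at the diagonal $\braket{M^{x_1}(E_1)M^{x_1}(E_1)}=\braket{M_1^2}$ satisfies $|1-\braket{M_1^2}|\gtrsim 1$ uniformly in the bulk (a standard non-degeneracy of the MDE stability operator, cf.~\cite[Proposition 3.5]{MR4164728}), so by continuity in $(x_2,E_2)$ one has $|1-\braket{M^{x_1}(E_1)M^{x_2}(E_2)}|\gtrsim 1$ on the small-parameter set, which trivially dominates the quadratic right-hand side of~\eqref{eq:lbstop}.

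For the more delicate case $M^{(*)}=M^*$, I would derive an exact algebraic identity. Subtracting the MDEs for $M_1:=M^{x_1}(E_1)$ and $M_2^*:=M^{x_2}(E_2)^*$, multiplying by $M_1$ on the left and $M_2^*$ on the right, and taking normalised trace produces
\begin{equation*}
\braket{M_1-M_2^*}\,\bigl(1-\braket{M_1 M_2^*}\bigr)=(E_1-E_2)\braket{M_1 M_2^*}-(x_1-x_2)\braket{M_1 A M_2^*}.
\end{equation*}
Both sides vanish at the diagonal thanks to the MDE saturation identity $\braket{M_1 M_1^*}=1$ (which follows from $\Im M_1=\braket{\Im M_1}M_1 M_1^*$), while $|\braket{M_1-M_1^*}|=2\braket{\Im M_1}=2\pi\rho^{x_1}(E_1)\gtrsim c_1$, so continuity makes $|\braket{M_1-M_2^*}|$ bounded from above and below on the small-parameter set, yielding
\begin{equation*}
|1-\braket{M_1 M_2^*}|\gtrsim |(E_1-E_2)\braket{M_1 M_2^*}-(x_1-x_2)\braket{M_1 A M_2^*}|.
\end{equation*}
Decomposing $A=\braket{A}I+\mathring A$ and Taylor expanding in $(x_2-x_1,E_2-E_1)$, the linear part of the right-hand side equals $P-(x_1-x_2)Q$, where $P:=E_1-x_1\braket{A}-E_2+x_2\braket{A}$ and $Q:=\braket{M_1\mathring A M_1^*}$ is real (since $M_1\mathring A M_1^*$ is Hermitian) with $|Q|\lesssim\braket{\mathring A^2}^{1/2}$ by Cauchy--Schwarz and the bulk boundedness of $M_1$.

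I would then conclude by a case split. If $|P|\ge 4|Q|\,|x_1-x_2|$, the linear term yields $|1-\braket{M_1 M_2^*}|\gtrsim |P|$; combined with $|P|\ge P^2$ for $|P|\le c^*$ small and the case hypothesis $(x_1-x_2)^2\braket{\mathring A^2}\lesssim|P|^2\lesssim |P|$, this dominates the target up to cubic errors. In the complementary regime $P^2\lesssim(x_1-x_2)^2\braket{\mathring A^2}$, the claim reduces to $|1-\braket{M_1 M_2^*}|\gtrsim(x_1-x_2)^2\braket{\mathring A^2}$, and must be verified on and near the bad line $P=(x_1-x_2)Q$ where the linear contribution cancels. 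Here I would expand the identity to second order using implicit differentiation of the MDE, which gives $\partial_x M^*=-M_1^*AM_1^*-(\bar D/\bar E_S)(M_1^*)^2$ and $\partial_E M^*=(M_1^*)^2/\bar E_S$ with $D=\braket{M_1 A M_1}$ and $E_S=1-\braket{M_1^2}$, both bounded away from zero in the bulk. Substitution collapses the quadratic coefficient along the bad line, up to cubic error, to $-(x_1-x_2)^2\braket{M_1\mathring A M_1^*\mathring A M_1^*}$ plus a subleading term proportional to $\braket{M_1\mathring A(M_1^*)^2}$, which carries only a single uncoupled $\mathring A$ and is therefore lower-order. The technical heart and main obstacle is then to prove the bilinear lower bound
\begin{equation*}
\bigl|\braket{M_1\mathring A M_1^*\mathring A M_1^*}\bigr|\gtrsim \braket{\mathring A^2};
\end{equation*}
I would argue it by combining the saturation $\braket{M_1 M_1^*}=1$ with the uniform bulk bounds $\|M_1^{(*)}\|,\|M_1^{-1}\|\lesssim 1$ from~\cite[Proposition 3.5]{MR4164728} to pass to spectral coordinates of the positive operator $M_1^*M_1$, thereby reducing the question to a genuine positive quadratic form inequality in $\mathring A$. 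Once this step is in hand, assembling the pieces produces the claimed inequality with the quoted cubic remainder.
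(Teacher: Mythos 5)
Your proposal takes a genuinely different route from the paper, and the route you chose has a real gap that the paper's shorter argument avoids entirely.

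The paper's proof opens with the elementary observation
\[
  \abs{1-\braket{M_1M_2^*}}\ge\Re\bigl[1-\braket{M_1M_2^*}\bigr]
  =\tfrac{1}{2}\braket{(M_1-M_2)(M_1-M_2)^*},
\]
using $\braket{M_iM_i^*}=1$. This turns the target into a manifestly nonnegative quadratic form in the difference $M_1-M_2$. Taylor expanding $M_2$ around $(x_1,E_1)$ and using the MDE formulas $\partial_x M=-\frac{1}{1-M\braket{\cdot}M}[MAM]$, $\partial_E M=\frac{1}{1-M\braket{\cdot}M}[M^2]$ then gives
\[
  \braket{(M_1-M_2)(M_1-M_2)^*}
  =\Bigl\langle \Bigl(\tfrac{1}{1-M_1\braket{\cdot}M_1}[M_1DM_1]\Bigr)
  \Bigl(\tfrac{1}{1-M_1\braket{\cdot}M_1}[M_1DM_1]\Bigr)^*\Bigr\rangle+\mathcal{O}(|\delta x|^3+|\delta E|^3)
\]
with $D:=(E_1-E_2)-(x_1-x_2)A$ Hermitian, and the operator bounds $S^*S\ge c$, $M_1M_1^*\ge c$ directly give the lower bound $\gtrsim\braket{D^2}=P^2+(x_1-x_2)^2\braket{\mathring A^2}$ in one step. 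No case split, no second-order expansion, no bilinear estimate.

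Your approach, by contrast, passes through the exact identity $\braket{M_1-M_2^*}(1-\braket{M_1M_2^*})=(E_1-E_2)\braket{M_1M_2^*}-(x_1-x_2)\braket{M_1AM_2^*}$, which is correct, but it yields only a \emph{linear} lower bound $|1-\braket{M_1M_2^*}|\gtrsim|P-(x_1-x_2)Q|$. A linear function has a zero set, so it cannot control the positive-definite quadratic target along that set, which is why you are forced into a second-order expansion near the bad line $P=(x_1-x_2)Q$.

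Two concrete problems in this fallback. First, your case split is not valid as stated: in the regime $|P|\ge 4|Q|\,|x_1-x_2|$ you claim $(x_1-x_2)^2\braket{\mathring A^2}\lesssim P^2$, but since $|Q|=|\braket{M_1\mathring A M_1^*}|$ is only bounded \emph{above} by $\braket{\mathring A^2}^{1/2}$ (and can vanish, e.g.\ when $M_1\mathring A M_1^*$ is traceless), the hypothesis $|P|\ge 4|Q|\,|x_1-x_2|$ gives no lower bound on $|P|$ in terms of $\braket{\mathring A^2}^{1/2}|x_1-x_2|$. Second, the key bilinear estimate $\bigl|\braket{M_1\mathring A M_1^*\mathring A M_1^*}\bigr|\gtrsim\braket{\mathring A^2}$ on which the bad-line analysis rests is left unproven, and indeed it is the genuine obstacle: it is \emph{not} a plain positive quadratic form in $\mathring A$ because the inner factor $M_1^*$ between the two $\mathring A$'s carries a nontrivial phase, and cancellations between eigenvalues of $M_1$ with different arguments can make the left-hand side small even when $\braket{\mathring A^2}\sim1$. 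The additional claim that the cross term $\braket{M_1\mathring A(M_1^*)^2}$ is ``lower-order'' is also unjustified when $\braket{\mathring A^2}$ is small (which is exactly the regime of Proposition~\ref{pro:bover1}). The real-part trick sidesteps all of this; you should use it.
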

For \(z_1,z_2\in\C\setminus\R\) we abbreviate
\begin{equation}
   G_i:=(H^{x_i}-z_i)^{-1}, \quad M_i:=M^{x_i}(z_i), \quad M_{ij} := \frac{M_i M_j}{1-\braket{M_j M_j}}
\end{equation}
and will prove the following \(G_1G_2\) local law.
\begin{proposition}\label{prop local law}
   Fix $\xi>0$ and let \(z_1,z_2\in\C\) with \(\abs{\Im z_1}=\abs{\Im z_2}=\eta\) be the bulk, i.e.\ \(\braket{\Im M_i}\gtrsim 1\), such that \(N\eta\delta_{12}\ge N^\xi\), where \(\delta_{12}:=\abs{\braket{1-\braket{M_1M_2}}}\). Then it holds that
   \begin{equation}
      \abs{\braket{G_1G_2 A-M_{12}A}} \prec \frac{\norm{A}}{\delta_{12} N\eta^2} \biggl(\eta^{1/12}+ \frac{1}{\sqrt{N\eta}} + \Bigl( \frac{\eta}{\delta_{12}} \Bigr)^{1/4} + \frac{1}{(\delta_{12} N\eta)^{1/3}} \biggr)
   \end{equation}
   uniformly in deterministic matrices \(A\).
\end{proposition}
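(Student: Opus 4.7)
My plan is to derive an approximate self-consistent equation for $\braket{G_1 G_2 A}$ from which the stability factor $1-\braket{M_1 M_2}$ emerges, then run a bootstrap/master-inequality scheme to extract the small prefactors.

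\textbf{Step 1: Self-consistent equation.}
Writing the Wigner part of $H=W+B$ and using the MDE one gets $G_i^{-1}-M_i^{-1}=W+\braket{M_i-G_i}\cdot I + \braket{G_i}\cdot I$ (up to the trivial $\braket{M_i}$ piece from the MDE), so
\begin{equation*}
G_1 = M_1 - M_1 W G_1 - M_1 \braket{M_1} G_1.
\end{equation*}
Right-multiplying by $G_2 A$ and taking the normalized trace yields
\begin{equation*}
\braket{G_1 G_2 A} = \braket{M_1 G_2 A} - \braket{M_1 W G_1 G_2 A} - \braket{M_1}\braket{M_1 G_1 G_2 A}.
\end{equation*}
A cumulant/Stein expansion of the $W$-term (the \emph{underline trick}, as in the Cipolloni--Erd\H{o}s--Schr\"oder approach) extracts the Gaussian contribution, which combines with the $\braket{M_1}$ tadpole to leave a quantity of the schematic form $\braket{M_1 M_1 G_2 A}\braket{G_1}$ (replacing the second $M_1 G_1 G_2 A$ by $M_1 G_1 G_2 A$ pushes the expression into an expression involving $\braket{M_1 M_2}$ via a further application of the resolvent identity to $G_2$). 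After iterating once more on the $G_2$-factor in the same way and using the single-resolvent local law $\braket{G_i-M_i}\prec 1/(N\eta)$, the equation reorganizes to
\begin{equation*}
(1-\braket{M_1 M_2})\,\braket{G_1 G_2 A} = \braket{M_1 M_2 A} + \mathcal{E}(A),
\end{equation*}
so that dividing by $1-\braket{M_1 M_2}$ gives $\braket{G_1 G_2 A}-\braket{M_{12}A}=\mathcal{E}(A)/(1-\braket{M_1M_2})$ and explains the $1/\delta$ amplification in the target bound.

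\textbf{Step 2: Master inequality and bootstrap.}
The residual $\mathcal{E}(A)$ consists of fully renormalized terms such as $\braket{\underline{M_1 W G_1 G_2 A}}$ together with higher-cumulant contributions. For these I would run the standard high-moment ($p$-th moment) estimate: Gaussian integration by parts and Wick counting reduce the bound on $\E|\braket{\underline{M_1 W G_1 G_2 A}}|^{2p}$ to products of traces of resolvents, each of which is controlled by the single-resolvent law $\|G_i\|\prec 1/\eta$, the average law $\braket{G_i-M_i}\prec 1/(N\eta)$, and (self-referentially) by the quantity $\braket{G_1 G_2 A}$ itself. This yields a master inequality of the form
\begin{equation*}
|\mathcal{E}(A)|\prec \frac{\|A\|}{N\eta}\Bigl(\Psi+\tfrac{1}{\eta}\Bigr)^{1/2}\cdot(\text{small factor}),
\end{equation*}
where $\Psi$ is a running upper bound for $|\braket{G_1 G_2 A}-\braket{M_{12}A}|$. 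Starting from the trivial $\Psi\prec \|A\|/\eta$ and iterating the inequality, the self-improvement converts quadratic gain into polynomial decay, producing the factor $(\delta N\eta)^{-1/3}$. The remaining factors arise as follows: $1/\sqrt{N\eta}$ from the Gaussian fluctuation scale visible in the leading cumulant; $(\eta/\delta)^{1/4}$ from the Ward identity $G^*G=\Im G/\eta$ combined with the bound $|1-\braket{M_1M_2}|=\delta$; and $\eta^{1/12}$ from exploiting analytic smoothing of the MDE solution $M$ inside the bulk (this is the softest and smallest factor and usually the last to be squeezed out).

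\textbf{Main obstacle.}
The hard part is not the algebraic derivation of the self-consistent equation, which is essentially a book-keeping exercise, but controlling the error $\mathcal{E}(A)$ uniformly in the deterministic observable $A$ while simultaneously keeping track of the stability amplification by $1/\delta$. In particular, the $(\eta/\delta)^{1/4}$ factor must be extracted without circular reasoning: it requires a version of the master inequality that is \emph{linear} in the unknown $\Psi$ on the relevant range, combined with Lemma~\ref{lem:stabop} to prevent $\delta$ from degenerating. The $\eta^{1/12}$ gain is similarly delicate, as it depends on quantitative regularity of the MDE in $E_1,E_2,x_1,x_2$, which is exactly the content encoded in the stability lemma. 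Once these two improvements are established, the bootstrap closes and the claimed four-term bound follows by choosing, at each $\eta$, the dominant of the four small factors.
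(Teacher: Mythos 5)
Your overall architecture is the same as the paper's: derive the self-consistent equation with the stability operator $1-M_1M_2\braket{\cdot}$, divide by the stability factor $1-\braket{M_1M_2}$ to expose the $1/\delta$ amplification, and bootstrap from the trivial bound $\theta_0=1/\eta$. Step~1 is essentially correct. The problem is Step~2.

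The paper does not re-derive the bound on the fully renormalized term from scratch by a generic high-moment argument; it cites the already-proven bound~\cite[Proposition 5.3]{1912.04100}: if $\abs{\braket{G_1G_2A}}\prec\norm{A}\theta$ uniformly in $A$, then
\[
\abs{\braket{\un{WG_1G_2A}}}\prec \frac{1}{N\eta^2}\Bigl((\theta\eta)^{1/4}+\frac{1}{(N\eta)^{1/2}}+\eta^{1/12}\Bigr).
\]
This is the input that carries \emph{all three} of the factors $\eta^{1/12}$, $(N\eta)^{-1/2}$ and $(\theta\eta)^{1/4}$; only the fourth factor $(\delta N\eta)^{-1/3}$ is then produced by solving the fixed-point relation $\theta_*=\frac1\delta(\dots)+\frac{\theta_*^{1/4}}{\delta N\eta^{7/4}}$. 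Your proposed master inequality $\abs{\mathcal{E}(A)}\prec \frac{\norm A}{N\eta}(\Psi+\tfrac1\eta)^{1/2}\cdot(\text{small})$ has a $\Psi^{1/2}$ dependence rather than the $\theta^{1/4}$ that actually occurs. This is not a cosmetic difference: the fixed-point equation changes and you would not obtain the $(\delta N\eta)^{-1/3}$ correction term (a $\Psi^{1/2}$-inequality gives a quadratic rather than quartic self-improvement, hence a different power of $\delta N\eta$). Moreover, a bare Gaussian-Wick/high-moment argument for $\un{WG_1G_2A}$ does not naturally produce the $\eta^{1/12}$ factor; in~\cite{1912.04100} it arises from a specific multi-step smoothing/iteration within the proof of their Proposition 5.3, not from ``analytic smoothing of the MDE inside the bulk.''

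Relatedly, your attribution of $\eta^{1/12}$ (and to a lesser extent $(\eta/\delta)^{1/4}$) to the stability Lemma~\ref{lem:stabop} is a misdirection. Lemma~\ref{lem:stabop} enters only to show that $\delta$ itself is not too small in the subsequent application (Proposition~\ref{pro:bover}), not to generate any of the small factors appearing here; those all come out of the renormalized-term estimate and the bootstrap. To complete your argument you must either import~\cite[Proposition 5.3]{1912.04100} (as the paper does, pointing out that the only modification for deformed Wigner matrices is a simplification of the resummation step), or else actually reproduce that estimate with its correct $\theta^{1/4}$ and $\eta^{1/12}$ dependence --- which is the genuinely hard part that Step~2 currently glosses over.
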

\begin{proof}[Proof of Proposition~\ref{prop evec orth}]
   We will now apply Proposition~\ref{prop local law} with \(z_r=E_r\pm \ii\eta\), \(E_r:= \gamma_{i_r}^{x_r}\) and setting \(\eta:=(N\delta)^{-4/5}\). By 1/3-H\"older continuity of $z\mapsto M^x(z)$~\cite[Proposition 2.4]{MR4164728},
   we have \(\delta=\delta_{12}(1+\landauo{1})\) due to assumption~\(\delta\gtrsim N^{-1/6}\) and therefore the condition \(N\eta\delta_{12}\ge N^\xi\) of Proposition~\ref{prop local law} is fulfilled. Then, together with spectral decomposition of $\Im G_i$ we obtain
   \begin{equation}\label{eq:uu}
      \begin{split}
         &\sum_{\abs{\lambda_{j_1}^{x_1}-E_1}\lesssim \eta}\sum_{\abs{\lambda_{j_2}^{x_2}-E_2}\lesssim \eta} \abs{\braket{\bm u_{j_1}^{x_1},\bm u_{j_2}^{x_2}}}^2 \\
         &\quad\lesssim \sum_{j_1,j_2} \abs{\braket{\bm u_{j_1}^{x_1},\bm u_{j_2}^{x_2}}}^2 \frac{\eta^4}{[(\lambda_{j_1}^{x_1}-E_1)^2+\eta^2][(\lambda_{j_2}^{x_2}-E_2)^2+\eta^2]}\\
         &\quad = \eta^2  \Tr \Im G_1 \Im G_2 \prec N^{-1/16}\delta^{-16/15}.
      \end{split}
   \end{equation}
   By rigidity~\eqref{eq:rig1} the sums in the
   l.h.s.\ of~\eqref{eq:uu} contain the term \(\abs{\braket{\bm u_{i_1}^{x_1},\bm u_{i_2}^{x_2}}}^2\)
   as long as \(\eta \ge N^{-1+\xi}\). This relation clearly holds with our choice since \(\delta\lesssim 1\),
   concluding the proof.
\end{proof}
\begin{proof}[Proof of Proposition~\ref{prop local law}]
   This proof is an adaptation of a similar argument from~\cite[Theorem 5.2]{1912.04100}, so
   here we only give a short explanation.
   From~\eqref{MDE} obtain
   \begin{equation}
      \begin{split}
         (1-M_1M_2\braket{\cdot})[G_{1}G_2-M_{12}] = \Delta &:=-M_1 \un{W G_1 G_2} +  M_1 (G_2-M_2) \\
         &\quad + M_1\braket{G_1 G_2} (G_2-M_2) + M_1 \braket{G_1-M_1} G_1 G_2,
      \end{split}
   \end{equation}
   where
   \[\un{WG_1G_2}:=WG_1G_2+\braket{G_1}G_1G_2+\braket{G_1G_2}G_2.\]
   Thus we have
   \begin{equation}\label{eq G12}
      \braket{G_1 G_2A-M_{12}A}= \braket{\Delta A} + \frac{\braket{M_1M_2A}\braket{\Delta}}{1-\braket{M_1M_2}}.
   \end{equation}
   Recall that it was proven in~\cite[Proposition 5.3]{1912.04100} that if \(\abs{\braket{G_1G_2A}}\prec\norm{A}\theta\) for some constant  \(\theta\le\eta^{-1}\) uniformly in \(A\), then also
   \begin{equation}\label{eq WG12}
      \abs{\braket{\un{WG_1G_2A}}}\prec \frac{1}{N\eta^2} \Bigl((\theta\eta)^{1/4}+\frac{1}{(N\eta)^{1/2}}+\eta^{1/12}\Bigr),
   \end{equation}
   again uniformly in \(A\). Strictly speaking~\cite[Proposition 5.3]{1912.04100} was stated in the context of Hermitized i.i.d.\ random matrices. However, a simpler version of the same proof clearly applies to deformed Wigner matrices. The main simplification compared to~\cite{1912.04100} is that due to the constant variance profile of Wigner matrices summations as the one in~\cite[Eq.~(5.28a)]{1912.04100} can be directly performed, without introducing the block matrices \(E_1,E_2\). The remainder of the proof apart from the simplified \emph{resummation step} verbatim applies to the present case. Using~\eqref{eq WG12} in~\eqref{eq G12} and \(\theta\le\eta^{-1},\eta\lesssim1\) it follows that
   \begin{equation}\label{eq G12 est}
      \begin{split}
         \abs{\braket{G_1 G_2A-M_{12}A}} &\prec \frac{1}{\delta_{12}} \Bigl(\frac{1}{N\eta}+ \frac{\theta}{N\eta}+ \frac{1}{N\eta^2}\Bigl( \eta^{1/12} + \frac{1}{(N\eta)^{1/2}} + (\theta\eta)^{1/4} \Bigr) \Bigr)\\
         &\lesssim  \frac{1}{\delta_{12} N\eta^2}\Bigl( \eta^{1/12} + \frac{1}{(N\eta)^{1/2}} + (\theta\eta)^{1/4} \Bigr)
      \end{split}
   \end{equation}
   and therefore
   \begin{equation}\label{iter}
      \abs{\braket{G_1G_2A}} \prec \theta' := \frac{1}{\delta_{12}} + \frac{1}{\delta_{12} N\eta^2}\Bigl( \eta^{1/12} + \frac{1}{(N\eta)^{1/2}} + (\theta\eta)^{1/4} \Bigr).
   \end{equation}
   We now iterate~\eqref{iter} using that \(N\delta_{12}\eta\ge N^\xi\) starting from \(\theta_0=1/\eta\) (which follows trivially from Cauchy-Schwarz). In doing so we obtain a decreasing sequence of \(\theta\)'s and after finally many steps conclude that
   \begin{equation}
      \abs{\braket{G_1G_2 A}} \prec \theta_\ast,
   \end{equation}
   where \(\theta_\ast\) is the unique positive solution to the equation
   \begin{equation}
      \theta_\ast = \frac{1}{\delta_{12}} \Bigl( 1 + \frac{1}{N\eta^2}\Bigl( \eta^{1/12} + \frac{1}{(N\eta)^{1/2}}  \Bigr) \Bigr) + \frac{\theta_\ast^{1/4}}{\delta_{12} N\eta^{7/4}}.
   \end{equation}
   Asymptotically we have
   \begin{equation}
      \theta_\ast \sim \frac{1}{\delta_{12}} \Bigl( 1 + \frac{1}{N\eta^2}\Bigl( \eta^{1/12} + \frac{1}{(N\eta)^{1/2}}+\frac{1}{(\delta_{12} N\eta)^{1/3}}  \Bigr) \Bigr)
   \end{equation}
   and using~\eqref{eq G12 est} once more with \(\theta_\ast\) concludes the proof.
\end{proof}

\subsection{Proof of Propositions~\ref{pro:bover} and~\ref{pro:bover1}.}
Both proofs rely on Proposition~\ref{prop evec orth}
and proving that the lower bound
on the stability factor given in  Lemma~\ref{lem:stabop} with \(E_r=\gamma_{i_r}^{x_r}\), \(r=1,2\),
is bounded from below
by \(N^{-\epsilon}\)  with some small \(\epsilon\). This will be done separately
for the two propositions.

For Proposition~\ref{pro:bover} we use that \(|E_1-E_2| \lesssim |x_1-x_2|\le c^*\) with a small \(c^*\) and that
\(\braket{\mathring{A}^2}\gtrsim 1\), hence
\[
   \abs{1-\braket{M^{x_1}(\gamma_{i_1}^{x_1}) M^{x_2}(\gamma_{i_2}^{x_2})^{(\ast)} }}\gtrsim |x_1-x_2|^2\braket{\mathring{A}^2} \gtrsim N^{-2\zeta_1}.
\]
The relation \(|E_1-E_2| \lesssim |x_1-x_2|\) follows from
\[
   |E_1-E_2| = | \gamma_{i_1}^{x_1} -\gamma_{i_2}^{x_2} | \le | \gamma_{i_1}^{x_1} -\gamma_{i_1}^{x_2} |
   + |\gamma_{i_1}^{x_2}  - \gamma_{i_2}^{x_2} | \lesssim |x_1-x_2| + |i_1-i_2|/N
\]
and the fact  that \(|i_1-i_2|/N\lesssim |x_1-x_2|\) from the conditions of Propositions~\ref{pro:bover}.
The estimate on \(| \gamma_{i_1}^{x_1} -\gamma_{i_1}^{x_2} |\)  comes from Lemma~\ref{lem:gamma}.

For Proposition~\ref{pro:bover1} we have
\begin{equation}\label{11}
   \begin{split}
      |E_1-x_1\braket{A}-E_2+x_2\braket{A}| &\ge |\gamma^{x_1}_{i_1} - \gamma^{x_1}_{i_2} |
      - |\gamma^{x_1}_{i_2} - \gamma^{x_2}_{i_2} - (x_1-x_2) \braket{A}|
      \\
      &\ge c_1 |i_1-i_2|/N - C |x_1-x_2| \Big(\braket{\mathring{A}^2}^{1/2} + |x_1-x_2|\Big).
   \end{split}
\end{equation}
In estimating the first term we used that \(\gamma^{x_1}_{i_1}\), \(\gamma^{x_1}_{i_2}\)
are in the bulk, while we used~\eqref{eq:approxgamma} for  the second term.
Notice that
\[
   C |x_1-x_2| \Big(\braket{\mathring{A}^2}^{1/2} + |x_1-x_2|\Big) \le c_0\|A\| N^{-\zeta_1}
\]
by the bound \(\braket{\mathring{A}^2}^{1/2}\le c_0 |\braket{A}|\le c_0\| A\| \). Choosing \(c_0\)
sufficiently small, depending on \(c_1\), and recalling that \(|i_1-i_2|\ge N^{1-\zeta_1}\),
we can achieve that
\[
   C |x_1-x_2| \Big(\braket{\mathring{A}^2}^{1/2} + |x_1-x_2|\Big) \le \frac{1}{2} c_1 |i_1-i_2|/N
\]
in particular
\begin{equation}
   |E_1-x_1\braket{A}-E_2+x_2\braket{A}| \ge \frac{1}{2} c_1 |i_1-i_2|/N \gtrsim N^{-\zeta_1}
\end{equation}
from~\eqref{11}. This  shows the required lower bound for  the  leading  (first) term in~\eqref{eq:lbstop}.
The second term is non-negative.
The first error term is negligible,  \( |x_1-x_2|^3 \le  N^{-3\zeta_1}\). For the second error term we have
\[
   |E_1-E_2|\le  |\gamma_{i_1}^{x_2}  - \gamma_{i_2}^{x_2} |+
   | \gamma_{i_1}^{x_1} -\gamma_{i_1}^{x_2} |\lesssim |i_1-i_2|/N + |x_1-x_2| \lesssim  |i_1-i_2|/N
\]
using the upper bound on the density \(\rho^{x_2}\) in the first term and~\eqref{eq:approxgamma} in the second term. In the last step we used \(|x_1-x_2|\lesssim
|i_1-i_2|/N\) from the conditions of Proposition~\ref{pro:bover1}. This shows that the
error term \( |E_1-E_2|^3\lesssim (|i_1-i_2|/N)^3\) is negligible compared with the main term~\ref{12}
of order at least \((|i_1-i_2|/N)^2\) since we also assumed \(|i_1-i_2|/N\le c_*\) which is small.

This proves that
\[
   \abs{1-\braket{M^{x_1}(\gamma_{i_1}^{x_1}) M^{x_2}(\gamma_{i_2}^{x_2})^{(\ast)} }}\gtrsim
   |E_1-x_1\braket{A}-E_2+x_2\braket{A}|^2 \gtrsim  N^{-2\zeta_1}.
\]
in the setup of Proposition~\ref{pro:bover1} as well.

\section{Multi-gap quenched universality}\label{sec multi gap}
The following results are the multi-gap
versions of Theorems~\ref{thm1} and~\ref{thm2b}. The gaps will be tested by functions of \(k\) variables, so we define the set
\begin{equation}\label{eq:testfspac}
   \mathcal{F}_k=\mathcal{F}_{k,L,B}:=\set*{F\colon\R^k\to \R\given\supp (F)\subset [0,L]^k, \norm{F}_{C^5}\le B }
\end{equation}
of \(k\)-times differentiable and compactly supported test functions \(F\)
with some large constants \(L, B>0\).  In the following we will often use the notation \({\bm i}:=(i_1,\dots, i_k)\) for
a \(k\)-tuple of integer indices \(i_1,\dots, i_k\).
The gap distribution for \(H^x\) will be compared with that of the Gaussian Wigner matrices,
we therefore  let \(\{\mu_i\}_{i\in [N]}\) denote
the eigenvalues of a GOE/GUE matrix corresponding to the symmetry class of \(H\).

\begin{theorem}[Quenched universality via eigenbasis rotation mechanism]\label{thmB}
   Under the conditions of Theorem~\ref{thm1} for any \emph{\(c_1\)-bulk-index} \(i_0\) we have the following multi-gap version of Wigner-Dyson universality. There exist  \(\epsilon=\epsilon(a_0, c_0, c_1)>0\) and an event \(\Omega_{i_0,A}\) %
   with \(\Prob_H(\Omega_{i_0,A}^c)\le N^{-\epsilon}\)  such that  for all \(H\in \Omega_{i_0,A}\)
   the matrix \(H^x=H+xA\) satisfies

   \begin{equation}\label{eq:mainstat2new}
      \begin{split}
         &\max_{\norm{\bm i}_\infty\le K}\sup_{F\in\mathcal{F}_k}\Bigg|\E_x F\left(\left(N\rho^x(\gamma^x_{i_0})\delta\lambda_{i_0+i_j}^x%
         \right)_{j\in [k]}\right)-\E_\mu F\left(\left(N\rho_{\mathrm{sc}}(0)\delta\mu_{N/2+i_j}\right)_{j\in [k]}\right)\Bigg|\le CN^{-c},
      \end{split}
   \end{equation} %
   for \(K:=N^\zeta\) and some \(\zeta=\zeta(a_0,c_0,c_1)>0\), and \(c=c(k)>0\). The constant \(C\) in~\eqref{eq:mainstat2new} may depend on \(k,L, B, a_0, c_0, c_1\) and all constants in Assumptions~\ref{ass:entr} and~\ref{ass:x}  at most polynomially, but it is independent of \(N\).
\end{theorem}

\begin{theorem}[Quenched universality via spectral sampling mechanism]\label{thmA}
   Under the conditions of Theorem~\ref{thm2b} for any \emph{\(c_1\)-bulk-energy} \(E\) we have the following multi-gap version of Wigner-Dyson universality. There exists \(\epsilon=\epsilon(a_0, c_0, c_1)>0\) and an event \(\Omega_{E,A}\) with \(\Prob(\Omega_{E,A}^c)\le N^{-\epsilon}\) such that for all \(H\in \Omega_{E,A}\) the matrix \(H^x\) satisfies
   \begin{equation}\label{eq:mainstat3}
      \begin{split}
         &\max_{\norm{\bm i}_\infty\le K}\sup_{F\in\mathcal{F}_k}\Bigg|\E_x F\left(\left(N\rho^x(E)\delta\lambda^x_{i_0(x,E)+i_j}\right)_{j\in [k]}\right)-\E_\mu F\left(\left(N\rho_{\mathrm{sc}}(0)\delta\mu_{N/2+i_j}\right)_{j\in [k]}\right)\Bigg|\le CN^{-c},
      \end{split}
   \end{equation}
   where %
   \(K:=N^\zeta\), and some \(\zeta=\zeta(a_0,c_0,c_1)>0\), \(c=c(k)>0\).
   The constant \(C\) in~\eqref{eq:mainstat3} may depend on \(k,L, B, a_0, c_0, c_1\) and all constants in Assumptions~\ref{ass:entr} and~\ref{ass:x}  at most polynomially, but it is independent of \(N\).

\end{theorem}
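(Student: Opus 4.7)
The plan is to rerun the proof of Theorem~\ref{thm2b} from Section~\ref{sec:univsamp} with the single-gap test function $f$ replaced by a multi-gap test function $F\in\mathcal{F}_k$, and then to upgrade the resulting per-$(\bm{i},F)$ Chebyshev bound to one that is uniform in $\bm{i}$ and $F$. The main analytic input will be the multi-gap extension of Proposition~\ref{pro:realmaintressam} announced in Remark~\ref{rem:newrem}. I expect this extension to follow verbatim from the DBM argument in Sections~\ref{sec:DBM}--\ref{sec:overlap}: the homogenisation identity of Proposition~\ref{pro:similCLT} is a per-index statement and can simply be applied at each of the $k$ shifted indices $j_r+i_l$, while the eigenvector overlap bound of Propositions~\ref{pro:bover}--\ref{pro:bover1} is pairwise and already accommodates index shifts $\|\bm{i}\|_\infty\le N^{\zeta}$.

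Given this multi-gap asymptotic independence, for each fixed $\bm{i}$ and $F$ I would expand $\mathbf{E}_H|\mathbf{E}_x F(\cdot)-\mathbf{E}_\mu F(\cdot)|^2$ as a double integral over $x_1,x_2$ exactly as in~\eqref{eq:mainvarcomp}--\eqref{eq:fincheba}, discard the diagonal regime $|x_1-x_2|\le N^{-\epsilon_2}$ via Assumption~\ref{ass:x}, factorise the off-diagonal expectation using the multi-gap independence, and match each marginal to the GOE/GUE statistics using a multi-gap bulk universality statement for $H^x$ at fixed $x$ (the standard multi-gap analogue of~\cite[Corollary 2.11]{MR3916109}, obtained by the same DBM methodology). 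This should yield a variance bound of order $N^{-c_k}\|F\|_{C^5}^2$, so that Chebyshev gives a per-$(\bm{i},F)$ failure probability at most $N^{-\kappa_k}$ for suitable constants $c_k,\kappa_k>0$.

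To obtain uniformity, I would first union-bound over the $\mathcal{O}(N^{\zeta k})$ index tuples $\bm{i}$ with $\|\bm{i}\|_\infty\le N^{\zeta}$, and then handle the sup over $F\in\mathcal{F}_k$ by reducing to control of the joint CDF of the gap vector on a grid of spacing $N^{-c_k/10}$ inside $[0,L]^k$ (a polynomial-size collection of smoothed indicators), recovering the bound for general $F\in\mathcal{F}_k$ via integration by parts, using $\|F\|_{C^5}\le B$ together with the monotonicity and regularity of the multi-gap CDF. Choosing $\kappa_k$ large relative to $\zeta k$ and the grid cost yields the event $\Omega_{E,A}$ with $\mathbf{P}_H(\Omega_{E,A}^c)\le N^{-\epsilon}$ on which the claimed estimate holds simultaneously for all admissible $\bm{i}$ and $F$.

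The hardest part, I expect, will be extracting a multi-gap bulk universality statement for $H^x$ at fixed $x$ with a $C^5$-norm error rate that is quantitatively strong enough, and uniform enough in the shift $\bm{i}$, to absorb the union bound over both $\bm{i}$ and the CDF grid. Once this input is in place, the remainder is a routine multi-gap upgrade: the DBM coupling, the homogenisation step, and the eigenvector overlap estimates from Sections~\ref{sec:DBM}--\ref{sec:overlap} need no genuine modification, since they already treat several indices and two parameters $x_1,x_2$ simultaneously.
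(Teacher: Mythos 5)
Your proposal follows the same skeleton as the paper: feed the multi-variable version of Proposition~\ref{pro:realmaintressam} (Remark~\ref{rem:newrem}) into the Chebyshev/variance argument of Section~\ref{sec:univsamp}, and then upgrade the per-$(\bm i, F)$ bound to a uniform statement by a union bound. The one genuine point of departure is how you handle the $\sup_{F\in\mathcal{F}_k}$: the paper reduces $\mathcal{F}_k$ to a polynomial-size family by expanding $F$ in a truncated multi-dimensional Fourier series~\eqref{eq:fourappr}, with the $C^5$-regularity guaranteeing absolute summability of the coefficients and a negligible truncation error; you instead propose to control the joint CDF of the gap vector on an $N^{-c_k/10}$-grid of smoothed indicators and then recover general $F$ by integration by parts together with the monotonicity and regularity of the limiting multi-gap law. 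Both devices do the same job --- trade an uncountable sup for a union bound over $O(N^{k\zeta_*})$ auxiliary objects whose $C^5$-norms grow like a small power of $N$ --- and the same parameter bookkeeping applies: the auxiliary functions (Fourier modes or mollified indicators) have $C^5$-norm of size $N^{5\zeta_*}$ resp.\ $\delta^{-5}$, and you must choose $\zeta_*$ resp.\ the grid spacing small relative to the intrinsic exponent $\alpha$ from the variance estimate (exactly as the paper picks $\zeta_*\le\alpha/(10k)$ in~\eqref{eq:hugeunb} and the line after~\eqref{eq:lastbb}). Your phrasing ``choosing $\kappa_k$ large relative to $\zeta k$'' has the direction backwards --- $\kappa_k,\alpha$ are fixed by the DBM/GFT analysis and it is $\zeta$ and the grid spacing that must be tuned small --- but the logic you describe is the correct one. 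The Fourier route is mildly cleaner because it avoids the extra step of deducing control of smooth $F$ from CDF control, but both give the theorem; the rest of your proposal (multi-gap extension of the independence proposition, multi-gap GOE/GUE comparison at fixed $x$, union bound over $\|\bm i\|_\infty\le N^\zeta$) is exactly what the paper does.
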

First, to handle the supremum over the uncountable family \(\mathcal{F}_{k,L,B}\) of test functions \(F\) we reduce the problem to a finite set of test functions so that the union bound can be taken. Notice that for sufficiently smooth test functions \(F\), which are compactly supported on some box \([0,L]^k\) of size
\(L\), we can expand \(F\) in partial Fourier series as (see e.g.~\cite[Remark 3]{MR3306005} and~\cite[Eq. (30)]{MR4221653})
\begin{equation}\label{eq:fourappr}
   \begin{split}
      &F(x_1,\dots, x_k)=\sum_{|n_1|,\dots, |n_k|\in [N^{\zeta_*}]} C_F(n_1,\dots, n_k) \prod_{j=1}^k e^{ \ii n_j x_j/L}\varphi(x_j)+\mathcal{O}\left(N^{-c(\zeta_*)}\right), \\
      &\sum_{n_1,\dots, n_k} |C_F(n_1,\dots, n_k)|\lesssim 1,
   \end{split}
\end{equation}
with integer \(n_1,\dots,n_k\), where \(\varphi\colon\R\to \R\) is a smooth cut-off function such that it is equal to one on \([0,L]\) and it is equal to zero on \([-L/2,3L/2]^c\). Here \(\zeta_*>0\) is a small fixed constant that will be chosen later.
Introduce the notation
\begin{equation}\label{eq:newtestfunc}
   f_n(x):=e^{ \ii n x/L}\varphi(x),\qquad n\in\Z.
\end{equation}

\begin{proof}[Proof of Theorem~\ref{thmB}]
   By~\eqref{eq:fourappr} we get that
   \begin{align}\label{eq:secappr}
       & \sup_{\norm{\bm i}_\infty\le N^\zeta}\sup_F\Bigg|\E_x F\Bigl(\bigl(N\rho^x(\gamma_{i_0}^x)\delta\lambda_{i_0+i_j}^x\bigr)_{j\in [k]}\Bigr)-\E_\mu F\left(\left(N\rho_{\mathrm{sc}}(0)\delta\mu_{N/2+i_j}\right)_{j\in [k]}\right)\Bigg| \\\nonumber
       & \lesssim \sup_{\substack{\norm{\bm i}_\infty\le N^\zeta,                                                                                                                                                                                \\ \norm{\bm n}_{\infty}\le N^{\zeta_*}}}\left|\E_x\prod_{j\in [k]} f_{n_j}\left(N\rho^x(\gamma_{i_0}^x)\delta\lambda_{i_0+i_j}^x\right)-\E_\mu \prod_{j\in [k]}f_{n_j}\bigl(N\rho_{\mathrm{sc}}(0)\delta\mu_{N/2+i_j}\bigr)\right|+\mathcal{O}\left(N^{-c(\zeta_*)}\right),
   \end{align}
   with \(f_{n_j}\) defined in~\eqref{eq:newtestfunc} and \({\bm n}:=(n_1,\dots,n_k)\).

   Proceeding exactly as in the proof of Theorem~\ref{thm1} in Section~\ref{sec:univrot}, and using the fact that~\eqref{eq:neednow} holds for test functions \(P_1,P_2\) of \(k\) variables (see Remark~\ref{rem:newrem}), we conclude that for any fixed \(i_1,\dots, i_k\) and \(n_1,\dots, n_k\) there exists a probability event \(\Omega_{i_0,{\bm i}, {\bm n}}\), with \(\Prob(\Omega_{i_0,{\bm i}, {\bm n}}^c)\le N^{-\kappa}\), on which
   \begin{equation}\label{eq:lastbb}
      \left|\E_x\prod_{j\in [k]} f_{n_j}\left(N\rho^x(\gamma_{i_0}^x)\delta\lambda_{i_0+i_j}^x\right)-\E_\mu \prod_{j\in [k]}f_{n_j}\bigl(N\rho_{\mathrm{sc}}(0)\delta\mu_{N/2+i_j}\bigr)\right|\lesssim N^{-\alpha}\prod_{j\in [k]}\norm{f_{n_j}}_{C^5}.
   \end{equation}
   Then choosing \(\zeta,\zeta_*\le \kappa(10 k)^{-1}\) we define the event
   \begin{equation}\label{eq:hugeunb}
      \Omega_{i_0}:=\bigcap_{\norm{\bm i}\le N^\zeta}\bigcap_{\norm{\bm n}\le N^{\zeta_*}}\Omega_{i_0,{\bm i}, {\bm n}}, \qquad \Prob_H\left(\Omega_{i_0}^c\right)\lesssim N^{-\kappa}N^{k(\zeta+\zeta_*)}\le N^{-\kappa/2}.
   \end{equation}

   Finally, by~\eqref{eq:lastbb}--\eqref{eq:hugeunb}, for all \(H\in \Omega_{i_0}\), choosing \(\zeta_*\le \alpha (10k)^{-1}\),
   the claim~\eqref{eq:mainstat2new} follows with exponent
   \(c=\min\{\alpha-5k\zeta_*,c(\zeta_*)\}\) using that
   \(\norm{f_{n_j}}_{C^5} \le N^{5\zeta_*}\), for any \(j\in [k]\), and where \(c(\zeta_*)\) is from~\eqref{eq:fourappr}.
\end{proof}

\begin{proof}[Proof of Theorem~\ref{thmA}]
   Given~\eqref{eq:secappr}, the proof of Theorem~\ref{thmA}, following Section~\ref{sec:univsamp} instead of Section~\ref{sec:univrot} and using that Proposition~\ref{pro:realmaintressam} holds for \(P_1,P_2\) of \(k\) variables (see Remark~\ref{rem:newrem}), is completely analogous and so omitted.
\end{proof}

\appendix%
\section{Bound for the stability operator}\label{sec:M}
\begin{proof}[Proof of Lemma~\ref{lem:stabop}]
   Note that
   \[
      |1-\braket{M_1M_2^*}|\ge\Re[1-\braket{M_1M_2^*}]=\frac{1}{2}\braket{(M_1-M_2)(M_1-M_2)^*}+\mathcal{O}(\eta),
   \]
   where we used that \(\braket{M_i M_i^*}=1+\mathcal{O}(\eta)\), which follows by taking the imaginary in the MDE~\eqref{MDE}. Then using Taylor expansion in the \(x_2\) and the \(E_2\) variables we get that
   \begin{equation}\label{eq:impm12exp}
      \begin{split}
         &\braket{(M_1-M_2)(M_1-M_2)^*} \\
         &\qquad=\braket{(\partial_{x_1}M_1(x_2-x_1)+\partial_{E_1}M_1(E_2-E_1))(\partial_{x_1}M_1(x_2-x_1)+\partial_{E_1}M_1(E_2-E_1))^*} \\
         &\qquad\quad+\mathcal{O}(|x_1-x_2|^3+|E_1-E_2|^3).
      \end{split}
   \end{equation}
   To estimate the error term in~\eqref{eq:impm12exp} we
   used the following bounds for \(E=\Re z\)  in the bulk of the spectrum
   for any \(x\in [x_1, x_2]\) and \(E\in [E_1, E_2]\), a condition that is guaranteed by  \(|x_1-x_2|+ |E_1-E_2|\le c^*\) is small.
   By~\cite[Corollary 5.3, Lemma 5.7]{MR4164728} we have
   \begin{equation}\label{eq:boundneed}
      \left\lVert\partial_x^\alpha\partial_E^\beta M^x(E+\ii\eta)\right\rVert\le C_{\alpha, \beta}, \qquad
      \left\lVert\frac{1}{1-(M^x(z))^2\braket{\cdot}} \right\rVert_{\| \cdot\|\to \|\cdot \|} \le \frac{C}{\rho^x(z)[ \rho^x(z)
               +|\sigma^x(z)|]} ,
   \end{equation}
   for any \(\alpha,\beta\in\N\), for any fixed \(x\), where
   \(|\sigma^x(z)|\ge c\) unless \(\rho^x(z)\) has a near-cusp singularity and \(E=\Re z\) is close to this cusp
   point.
   Recall that the norm \(\norm{\cdot}\)  denotes the standard euclidean matrix norm on \(N\times N\) matrices. Here \(1-M^x(z)\braket{\cdot}M^x(z)\) is a linear operator acting on such matrices \(R\) as \((1-M^x(z)\braket{\cdot}M^x(z))[R] = R -M^x(z)\braket{R}M^x(z)\). Finally, the  second formula in~\eqref{eq:boundneed} involves the norm induced by the euclidean matrix norm.

   Then differentiating  the MDE in \(x\) and \(E\) we find that
   \[
      \partial_{x_1}M_1=-\frac{1}{1-M_1^2\braket{\cdot}}[M_1AM_1],
      \qquad \partial_{E_1}M_1=\frac{1}{1-M_1^2\braket{\cdot}}[M_1M_1].
   \]
   Hence, by
   \[
      \left(\frac{1}{1-M_1^2\braket{\cdot}}\right)\left(\frac{1}{1-M_1^2\braket{\cdot}}\right)^*\ge c, \qquad M_1M_1^*\ge c,
   \]
   we conclude
   \[
      \begin{split}
         &\braket{(\partial_{x_1}M_1(x_2-x_1)+\partial_{E_1}M_1(E_2-E_1))(\partial_{x_1}M_1(x_2-x_1)+\partial_{E_1}M_1(E_2-E_1))^*} \\
         &=\braket*{\abs*{\frac{1}{1-M_1^2\braket{\cdot}}M_1\big[-(x_1-x_2)A+(E_1-E_2)\big] M_1}^2} \\
         &\gtrsim \braket{[(E_1-E_2)-(x_1-x_2)A]^2}= |E_1-x_1\braket{A}-E_2+x_2\braket{A}|^2+
         |x_1-x_2|^2\braket{\mathring{A}^2},
      \end{split}
   \]
   where in the last equality we wrote \(A=\braket{A}+\mathring{A}\). This concludes the proof of~\eqref{eq:lbstop}
   in case when the adjoint is present. The estimate of  \(|1-\braket{M_1M_2}|\) is much easier, it
   follows directly from~\eqref{eq:boundneed}.
\end{proof}

\begin{proof}[Proof of Lemma~\ref{lem:gamma}]
   To make the presentation clearer we just consider the case \(x_1=x\) and \(x_2=0\), the general case is analogous and so omitted. For any fixed real parameters \(x, y\) consider the MDE
   \begin{equation}\label{eq:xymde}
      M^{-1}=z+B+x\braket{A}+y\mathring{A}+\braket{M}, \qquad \Im M\Im z>0.
   \end{equation}
   Note that for \(y=0\)~\eqref{eq:xymde} is the MDE for \(H\) and for \(y=x\)~\eqref{eq:xymde} is the one for \(H^x=H+xA\). We denote the unique solution of~\eqref{eq:xymde} by
   \(M^{x,y}=M^{x,y}(z)\),  the associated scDos by \(\rho^{x,y}\) and the corresponding quantiles  by \(\gamma_i^{x,y}\).
   We will use that
   \begin{equation}\label{12}
      \gamma_i^{x_1} - \gamma_i^{x_2} = \gamma_i^{x} - \gamma_i^{0} = \int_0^x \partial_s\gamma_i^{s,s} \dif s=
      \int_0^x \big[ \partial_x\gamma_i^{x,s}\big|_{x=s} + \partial_y\gamma_i^{s,y}\big|_{y=s} \big] \dif s.
   \end{equation}
   For the first term we use that that \(\partial_x\gamma_i^{x,s}=\braket{A}\), giving the leading term  \(x\braket{A}\)
   in Lemma~\ref{lem:gamma}.
   To estimate \(\partial_y\gamma_i^{s,y}\), we
   differentiate the defining equation of the quantiles
   \[
      \int_{-\infty}^{\gamma_i^{s,y}}\braket{\Im M^{s,y}(E)}\,\dif E =\frac{i}{N}
   \]
   with respect to \(y\).  We obtain
   \[
      \partial_y\gamma_i^{s,y}\braket{\Im M^{s,y}(\gamma_i^{s,y})}+\int_{-\infty}^{\gamma_i^{s,y}}\partial_y\braket{\Im M^{s,y}(E)}\,\dif E=0
   \]
   for any \(s,y\in [0,x]\).
   Then, using that in the bulk \(|\braket{\Im M^{s,y} (\gamma_i^{s,y})}|\ge c\), we conclude
   \begin{equation}\label{eq:bounhs}
      |\partial_y\gamma_i^{s,y}|\lesssim \int_{-C}^{\gamma_i^{s,y}}\left\langle\frac{1}{1-(M^{s,y}(E))^2\braket{\cdot}}M^{s,y}(E)\mathring{A}M^{s,y}(E)\right\rangle\,\dif E\lesssim \braket{\mathring{A}^2}^{1/2},
   \end{equation}
   where we used Schwarz inequality and the bounds in~\eqref{eq:boundneed}.
   The important fact about   the second bound in~\eqref{eq:boundneed} is that it is integrable in \(E\)
   since it has a \(|E-E_0|^{-1/2}\) singularity near an edge point
   \(E_0\) and a \(|E-E_0|^{-2/3}\) singularity near a cusp point  \(E_0\).
   Here we  also used that  \(|x|= |x_1-x_2|\le c^*\) is sufficiently small so that
   \(\gamma_i^{s,y}\)  is in the bulk not only  for \(s=y=x\), but for all \(s, y\in [0,x]\).
   From~\eqref{12} and~\eqref{eq:bounhs} we readily conclude~\eqref{eq:approxgamma}.
\end{proof}

\section{Numerics}\label{appendix figure}
Here we present numerical evidence quantifying the speed of convergence of the single gap distribution to its theoretical limit for the monoparametric ensemble, cf.\ Figure~\ref{KS}.
This numerics was inspired by the observation made in~\cite{GPSW}\footnote{We thank Stephen Shenker for communicating preliminary numerical results supporting this observation in June 2021.}
on the slow convergence
of the spectral form factor.
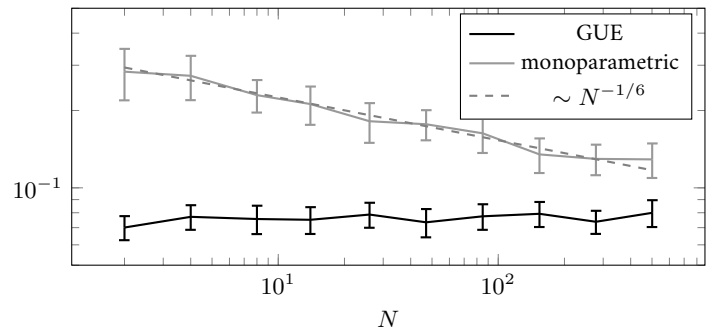
\begin{figure}
   \centering
   \begin{tikzpicture}
      \begin{axis}[width=10cm,height=5cm,xmode=log,ymode=log,ytick={.01,.1,1},ymin=0.05,ymax=.5,xlabel={\(N\)}]
         \addplot[thick,error bars/.cd, y dir = both, y explicit,error bar style={thick},error mark options={rotate=90, thick} ] table[col sep=comma,x index=0,y index=1,y error index=2] {KS.csv};
         \addplot[thick,draw=black!40,error bars/.cd, y dir = both, y explicit,error bar style={thick,black!40},error mark options={rotate=90, thick,black!40} ] table[col sep=comma,x index=0,y index=3,y error index=4] {KS.csv};
         \addplot [thick,dashed, domain=2:500, samples=19,gray,dashed] {.33*x^(-1/6)};
         \addlegendentry{GUE};
         \addlegendentry{monoparametric};
         \addlegendentry{\(\sim N^{-1/6}\)};
      \end{axis}
   \end{tikzpicture}
   \caption{The figure shows the Kolmogorov-Smirnov distance \(D(F,F'):=\sup_s\abs{F(s)-F'(s)}\) of the empirical cumulative distribution function (CDF) of the (rescaled) eigenvalue gap \(\lambda_{N/2+1}-\lambda_{N/2}\) to the CDF \(F_2\) corresponding to \(p_2\) for various values of \(N\) for both GUE and the monoparametric ensemble. The empirical CDF for the GUE has been generated by sampling \(100\) GUE matrices \(H\). For the monoparametric ensemble \(H^x=H+xA\)  typical GUE random matrices \(H,A \)
      have been fixed and \(100\) Gaussian random variables \(x\) have been sampled. The error bars represent the standard deviation of the obtained Kolmogorov-Smirnov distance for 50 independent repetitions. In accordance with Figure~\ref{figure quenched2} we find that the gap distribution for GUE matches its theoretical limit very well for any value of \(N\), while for the monoparametric ensemble the KS-distance seems to decay only slowly with \(N\).}\label{KS}
\end{figure}


\end{document}